\declaretheorem[style=definition]{definition}
\declaretheorem[style=definition]{example}
\declaretheorem[style=definition]{remark}
\declaretheorem{theorem}
\declaretheorem[sibling=theorem]{lemma}
\declaretheorem[sibling=theorem]{corollary}
\declaretheorem[sibling=theorem]{proposition}
\newcommand{\Xcomment}[1]{{}}
\newcommand{\dd}{\,\mathrm{d}}
\newcommand{\bigO}{\mathcal{O}}
\DeclareMathOperator{\E}{\mathbf{E}}
\let\Pr\PrOld
\DeclareMathOperator{\Pr}{\mathbf{Pr}}
\newcommand{\W}{\mathcal{W}} %
\newcommand{\Wone}{\ee^{\W_{-1}(-\beta)}}
\newcommand{\Wzero}{\ee^{\W_{0}(-\beta)}}
\newcommand{\MaxProb}{\textrm{MaxProb}\xspace}
\newcommand{\MaxExp}{\textrm{MaxExp}\xspace}
\newcommand{\ee}{\mathrm{e}}
\newcommand{\AlgVal}{\textsc{Alg}}
\newcommand{\ALG}{x_{\AlgVal}}
\newcommand{\OPT}{x_{\textsc{Opt}}}
\newcommand{\prior}{\tilde{F}}
\newcommand{\stoptime}{\AlgVal}
\newcommand{\threshold}{\theta}
\newcommand{\quantile}{q}
\newcommand{\maxexpoptthres}{\threshold_{n}^{\scriptscriptstyle \mathrm{exp}}}
\newcommand{\maxproboptthres}{\threshold_{n}^{\scriptscriptstyle \mathrm{prob}}}
\newcommand{\constint}{\gamma} %
\newcommand{\restrict}[1]{#1\big|_{\scriptscriptstyle \lambda_{1},\lambda_{2}}}
\newcommand{\restrictsmall}[1]{#1|_{\scriptscriptstyle \lambda_{1},\lambda_{2}}}
\newcommand{\algowiththres}[0]{\mathcal{A}}
\DeclareMathOperator*{\argmax}{arg\,max}
\newcommand{\Ind}[1]{\mathds{1}{\scriptstyle{\{#1\}}}}
\newcommand{\accept}{\textsc{Acc}}
\newcommand{\rejall}{\textsc{Rej}}
\newcommand{\N}{\mathbb{N}}
\title{Optimal Stopping with a Predicted Prior\footnote{This second arXiv version improves the exposition based on feedback we received after posting the first version. We acknowledge the concurrent work by Kehne and Kesselheim~\cite{kehne2025prophet}, add \Cref{subsec:optimal-robust-thresholds} that explains how to use the indifference Bellman equation to compute the optimal threshold function within our algorithm class, and strengthen~\Cref{thm:max-exp-consistency-robustness} by establishing that the trade‑off curve for \MaxProb is tight within the class.
}}
\author{
Tian Bai\footnote{University of Bergen. Email: tian.bai@uib.no. This work was done when the author was 
at the University of Hong Kong.}
\and
Zhiyi Huang\footnote{The University of Hong Kong. Email: zhiyi@cs.hku.hk, dongchen.li@connect.hku.hk}
\and
Chui Shan Lee\footnote{Hong Kong University of Science and Technology. Email: tracylcs@ust.hk. This work was done when the author was 
at the University of Hong Kong.}
\and
Dongchen Li\footnotemark[2]
}
\date{\today}
\begin{document}

\begin{titlepage}
\thispagestyle{empty}
\maketitle
\begin{abstract}
There are two major models of value uncertainty in the optimal stopping literature:
the secretary model, which assumes no prior knowledge, and the prophet inequality model, which assumes full information about value distributions.
In practice,
decision makers often rely on machine-learned priors that may be erroneous.
Motivated by this gap, we formulate the model of \emph{optimal stopping with a predicted prior} to design algorithms that are both consistent, exploiting the prediction when accurate, and robust, retaining worst-case guarantees when it is not.

Existing secretary and prophet inequality algorithms are either pessimistic in consistency or not robust to misprediction.
A randomized combination only interpolates their guarantees linearly.
We show that a family of bi-criteria algorithms achieves improved consistency-robustness trade-offs, both for maximizing the expected accepted value and for maximizing the probability of accepting the maximum value.
We further prove that for the latter objective, no algorithm can simultaneously match the best prophet inequality algorithm in consistency, and the best secretary algorithm in robustness.

\end{abstract}
\end{titlepage}

\section{Introduction}

Optimal stopping problems study a fundamental dilemma in decision‑making: to seize an expiring opportunity now, or to wait for a better chance that may never come.
A classic example is the \emph{secretary problem}.
In this problem, $n$ values are revealed to a decision maker in a random order.  
After observing each value, the decision maker immediately decides whether to accept the value.
The decision maker can accept at most one value and aims to maximize its expectation.
An optimal solution is \emph{Dynkin's algorithm} \cite{dynkinOptimumChoiceInstant1963,fergusonWhoSolvedSecretary1989}, which rejects the first $\lfloor\nicefrac{n}{\ee}\rfloor$ values, and then accepts the first value that exceeds all previous ones. 
It guarantees that the expected accepted value is at least $\nicefrac{1}{\ee}$ times the maximum.
This ratio of the algorithm's expected accepted value to the maximum value in hindsight is called the \emph{competitive ratio}.

In sharp contrast to the minimal prior knowledge assumed in the secretary problem, \emph{prophet inequality} considers values drawn independently from \emph{fully known} prior distributions.
In the i.i.d.\ case where values are drawn from a common distribution, the optimal algorithm guarantees that the expected accepted value is at least approximately $0.745$ times the expected maximum value~\cite{hillComparisonsStopRule1982,kertzStopRuleSupremum1986,correaPostedPriceMechanisms2017,liuVariableDecompositionProphet2021},
substantially better than the prior-free secretary guarantee of $\nicefrac{1}{\ee} \approx 0.368$.

On the one hand, the secretary problem does not capture the abundant data available to decision makers in this era.
On the other hand, the assumption of complete knowledge of the prior in prophet inequality is idealized. 
What real-world decision makers often have are machine-learned and potentially erroneous \emph{predicted priors}. 
\citet{duttingPostedPricingProphet2019} took a step in this direction.
They show that the performance of the optimal backward-induction algorithm that \emph{fully trusts} the predicted prior degrades proportionally to some appropriate measures of prediction error, thereby providing a smoothness-style guarantee in the framework of \emph{algorithms with predictions}~\cite{mitzenmacher2022algorithms}.

\subsection{Optimal Stopping with a Predicted Prior}

We formulate the problem of \emph{optimal stopping 
with a predicted prior} and focus on the trade-off between  \emph{consistency} and \emph{robustness}, i.e., the performance when the predicted prior is correct and when it may be arbitrarily wrong.
Consider the optimal stopping problem with $n$ values drawn i.i.d.\ from an \emph{unknown} prior distribution $F$.
The algorithm is given the number of values $n$ and a \emph{predicted prior} $\tilde{F}$ in the beginning.
Then, the algorithm decides whether to accept each value upon its arrival, based on the history of observed values and the predicted prior $\tilde{F}$.
We consider two standard objectives: 1) maximizing the expected accepted value (\MaxExp), which we have already discussed, and 2) maximizing the probability of accepting the maximum value (\MaxProb).

An algorithm is \emph{$\alpha$-consistent} if it is $\alpha$-competitive when the predicted prior is correct.
It is \emph{$\beta$-robust} if it remains $\beta$-competitive even when the (mis)predicted prior is arbitrary.
Ideally, we want an algorithm to be both $\alpha$-consistent and $\beta$-robust with $\alpha$ and $\beta$ being as large as possible.
Such an algorithm leverages prior knowledge when the predicted prior is correct, while retaining meaningful worst-case guarantees.

The $\nicefrac{1}{\ee}$-competitive Dynkin's algorithm is $\nicefrac{1}{\ee}$-consistent and $\nicefrac{1}{\ee}$-robust for both objectives.
The optimal $\alpha^*_{\MaxExp} \approx 0.745$-competitive \MaxExp algorithm by~Hill and Kertz~\cite{hillComparisonsStopRule1982,kertzStopRuleSupremum1986} and $\alpha^*_{\MaxProb} \approx 0.580$-competitive \MaxProb algorithm by~\citet{gilbertRecognizingMaximumSequence}  are $\alpha^*_{\MaxExp}$ and $\alpha^*_{\MaxProb}$-consistent, respectively, but only $0$-robust (see \Cref{sec:prelim}).
Further, randomly running either Dynkin's algorithm or an i.i.d.\ prophet inequality algorithm linearly interpolates between their consistency and robustness.
Finally, the optimality of these competitive ratios implies that no algorithm is better than $\nicefrac{1}{\ee}$-robust, or better than $\alpha^*_\MaxExp$-consistent for \MaxExp, or better than $\alpha^*_\MaxProb$-consistent for \MaxProb.
These baseline algorithms and impossibilities leave a significant gap between them.

\subsection{Overview of Results}

This paper introduces  two new algorithms for the \MaxExp and \MaxProb objectives respectively, under a unified framework, and a hardness result for \MaxProb.
See \Cref{fig:Tradeoff} for an illustration.

\begin{figure}[h]
\centering
\begin{subfigure}{0.45\textwidth}    
    \includegraphics[width=\textwidth]
    {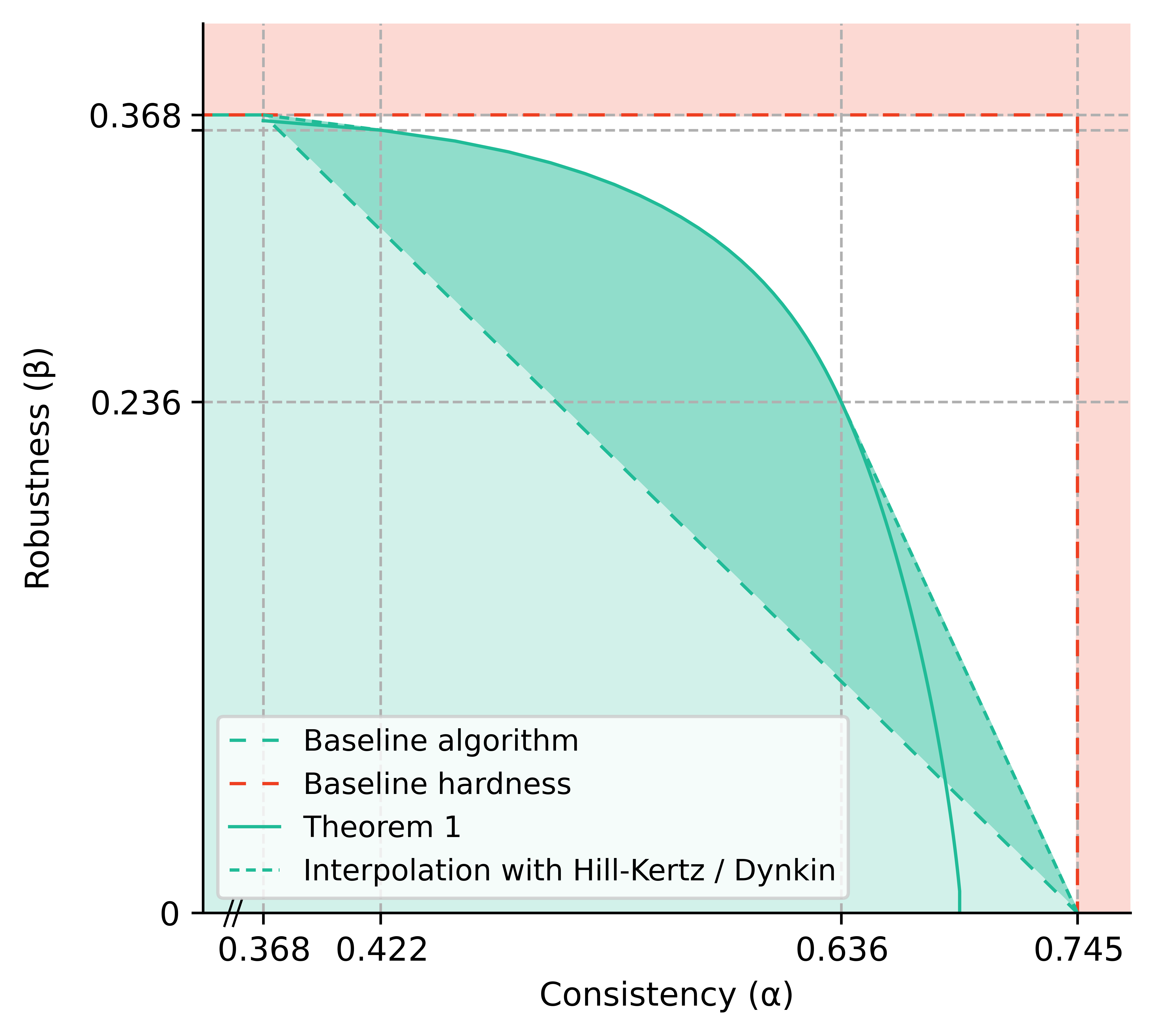}
    \caption{\MaxExp}
    \label{fig:MaxExp}
\end{subfigure}
\begin{subfigure}{0.45\textwidth}
    \includegraphics[width=\textwidth]%
    {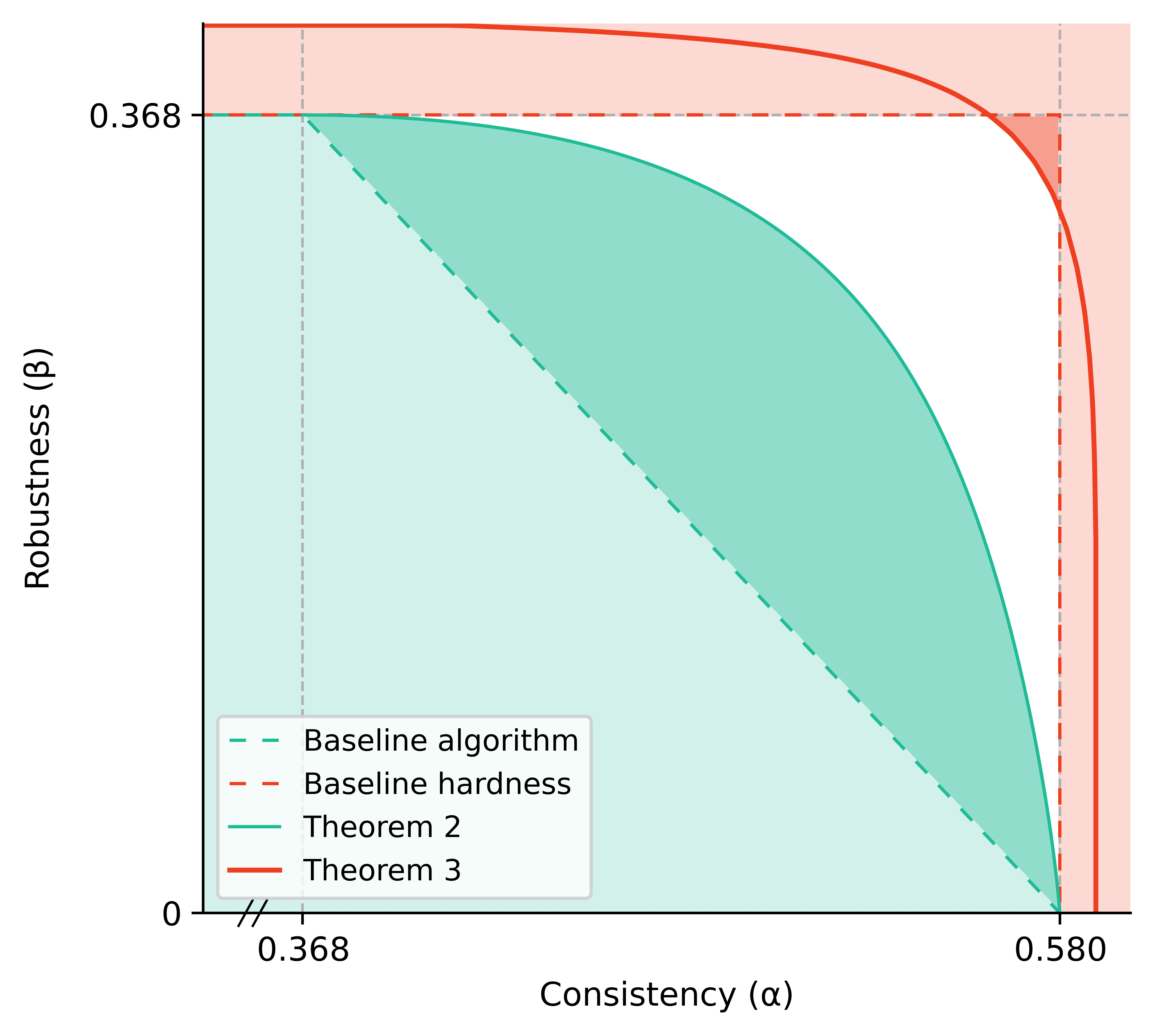}
    \caption{\MaxProb}
    \label{fig:MaxProb}
\end{subfigure}
        
\caption{Illustration of our results and a comparison with the baselines}
\label{fig:Tradeoff}
\end{figure}

\vspace{-10pt}

\paragraph{Bi-criteria Algorithms.}
We define a family of bi-criteria algorithms $\algowiththres(\prior, \threshold)$ parameterized by the predicted prior $\tilde{F}$ and a threshold function $\threshold$.
Upon observing each value and when no value has been accepted, a bi-criteria algorithm accepts the current value if:
\begin{enumerate}
    \item It is a record, i.e., greater than or equal to all previous values; \emph{(prior-free criterion)}
    \item Its cdf value w.r.t.\ $\tilde{F}$ exceeds a time-dependent threshold $\threshold(t)$. \emph{(prior-dependent criterion)} 
\end{enumerate}
For now, readers may interpret $t$ as the time step by which the value is observed.
Later, we will also consider an equivalent continuous-time model to simplify the analysis (see~\Cref{sec:prelim} for a discussion).

We set the thresholds as follows.
Before some time $\lambda_1$, the threshold is $1$, and the algorithm rejects all values.
After some time $\lambda_2$, the threshold is $0$, and the algorithm's decisions depend solely on the prior-free record-based criterion.
These two prior-free phases yield robustness.
In fact, we derive this design from first principles, as we will explain shortly in \Cref{sec:techniques}.

We obtain new algorithmic results for both \MaxExp and \MaxProb under this framework.

\begin{theorem}[\MaxExp algorithm]
\label{thm:max-exp-consistency-robustness}
    For any $\beta \in [0, \nicefrac{1}{e}]$, suppose $\lambda_1, \lambda_2$ are the two roots of $-\lambda \ln \lambda = \beta$ with $\lambda_1 \le \lambda_2$.
    If there is a left-continuous non-increasing function $\threshold \colon [\lambda_1, \lambda_2] \to [0, 1]$ satisfying:
    \begin{equation}
    \label{eqn:thres-max-expect}
    \begin{aligned}
        \threshold(\lambda_1) & ~=~ 1 ~, \\
        \int_z^1 \int_0^t \, \frac{1}{t} \threshold ( \max \{s, z \} )^t \dd s \dd t 
        & ~\ge~ \alpha \cdot \threshold(z)
        ~, \qquad 
        \forall z \in[\lambda_1, \lambda_2] 
        ~,
    \end{aligned}
    \end{equation}
    then for any $n\in \mathbb{N}$, there is a bi-criteria algorithm that is $\alpha$-consistent and $\beta$-robust for the \MaxExp objective.
\end{theorem}

The proof of~\Cref{thm:max-exp-consistency-robustness} is given in~\Cref{sec:algo-max-expect}.
To plot the curve in \Cref{fig:MaxExp}, we rely on numerical methods to solve the differential equation~\eqref{eqn:thres-max-expect}, which does not appear to admit a closed-form solution.
See~\Cref{app:max-expect-numerical-solution}.

\begin{theorem}[\MaxProb algorithm]
    \label{thm:max-prob-consistency-robustness}
    For any $\beta \in [0, \frac{1}{e}]$, suppose $\lambda_1, \lambda_2$ are the two roots of $-\lambda \ln \lambda = \beta$ with $\lambda_1 \le \lambda_2$.
    For any $n\in \mathbb{N}$, there is an $\alpha$-consistent $\beta$-robust bi-criteria algorithm for the \MaxProb objective, with
    \begin{equation*}
        \alpha = \beta  + \int_{\lambda_1}^{\lambda_2} \int_s^1 \,
        \frac{\ee^{-\frac{\constint t}{1-s}}}{t} \dd t \dd s
    ~,
    \end{equation*}
    where $\constint \approx 0.804$ is the solution to $\sum_{k=1}^{\infty} \frac{\constint^k}{k!k} = 1$.
    Moreover, this trade-off is tight within the class of bi-criteria algorithms in the sense that, for any specified $(\alpha,\beta)$, none of the bi-criteria algorithms can improve both objectives simultaneously.
\end{theorem}

The proof of~\Cref{thm:max-prob-consistency-robustness} is given in~\Cref{sec:algo-MaxProb}.

For the \MaxProb objective, besides the tightness within the bi-criteria family above, we further show a non-trivial hardness that certain consistency-robustness trade-offs cannot be achieved by \emph{any} algorithm.
In particular, this implies that there is no ``best-of-both-world'' algorithm for this objective.

\begin{theorem}[\MaxProb hardness, informal version of \Cref{thm:hardness-maxprob}]
There is an $\alpha$-consistent $\beta$-robust algorithm for the \MaxProb objective only if a polytope $\mathcal{P}(\alpha, \beta)$ is feasible.
In particular, no algorithm is both $\alpha^*_\MaxProb$-consistent and $\nicefrac{1}{\ee}$-robust.
\end{theorem}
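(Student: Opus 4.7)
The plan is to extract linear constraints on the algorithm's behaviour from a single fixed predicted prior $\tilde F$, collect them into the polytope $\mathcal{P}(\alpha,\beta)$, and then show infeasibility at $(\alpha^*_{\MaxProb}, \frac{1}{\ee})$ by an LP-dual witness. First I would argue that for \MaxProb one may restrict to algorithms that only accept a \bestsofar value, since any other acceptance has zero probability of coinciding with the maximum. In the continuous-time i.i.d.\ limit, such an algorithm with prediction $\tilde F$ is encoded by a function $x_{\tilde F}(t, q) \in [0,1]$ giving the probability of accepting a \bestsofar value at time $t$ whose $\tilde F$-cdf value equals $q$, conditional on no prior acceptance. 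Since the random rank of each arrival and the random location of the overall maximum are independent of $x_{\tilde F}$, the success probability on any continuous true prior $F$ is a linear functional of $x_{\tilde F}$.

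Fixing $\tilde F$, the $\alpha$-consistency requirement (applied at $F = \tilde F$) is one linear constraint on $x_{\tilde F}$, and the $\beta$-robustness requirement yields a family of linear constraints, one per alternative true prior $F$. Their intersection is exactly $\mathcal{P}(\alpha,\beta)$. For the informal claim, I would take $\tilde F$ to be (a limit of) a Gilbert--Mosteller worst-case prior, so that the $\alpha^*_{\MaxProb}$-consistency constraint essentially forces $x_{\tilde F}$ to match the Gilbert--Mosteller acceptance profile, which accepts aggressively at early times whenever the prior quantile $q$ is close to $1$. I would then exhibit a ``single large needle at a uniformly random time'' adversarial instance of the Dynkin-hardness flavour whose $\frac{1}{\ee}$-robustness constraint forces $x_{\tilde F}(t, \cdot) \approx 0$ for all $t < \frac{1}{\ee}$, because under this adversary the prior quantile conveys no useful information. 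These two constraints point the algorithm in opposite directions at every early time, and a non-negative linear combination of them will sum to a strictly negative number, certifying infeasibility.

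The hard part will be producing a robustness constraint strong enough to kill the Gilbert--Mosteller behaviour in the exact regime it wants to act, at precisely the boundary value $\beta = \frac{1}{\ee}$. The technical difficulty is that the robustness constraint is naturally parameterized by rank, while the consistency constraint mixes rank with prior-quantile; reconciling these via a careful choice of the adversarial family, perhaps a mixture of needle instances at different scales, or a limit of priors approaching the Gilbert--Mosteller worst case, is where the real work lies. Once this is done, linear-programming duality on the two constraints yields the required dual witness and therefore both halves of the theorem.
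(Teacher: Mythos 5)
Your high-level plan---encode consistency and robustness as linear constraints on an acceptance profile and certify infeasibility via LP duality---is consonant in spirit with the paper, but there are two genuine gaps that prevent the argument from going through.

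First, you posit that the algorithm is fully described by a function $x_{\tilde F}(t,q)$ of the current time $t$ and the current value's $\tilde F$-quantile $q$, conditional on no prior acceptance. A general algorithm may depend on the entire history of observed values, and when the true prior may deviate from $\tilde F$ that history carries information (e.g., the algorithm could try to ``learn'' the real prior from smaller observations). The paper does not simply assume this restriction; it works in a discrete model, chooses the adversarial family to be the truncations $\{F_k\}$ of $\tilde F$ to support $[k]$, proves that under this family the posterior over the past given the running maximum is invariant across all admissible $k$ (\Cref{lem:hardness-posterior}), and only then establishes that one may restrict to \emph{minor-oblivious} algorithms whose decision depends only on the running maximum (\Cref{prop:minor-oblivious}). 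Your unrestricted adversary family (e.g.\ arbitrary ``needle'' priors) breaks exactly the invariance the paper relies on, so the reduction to a profile $x_{\tilde F}(t,q)$ is not justified and the resulting polytope would not bound all algorithms.

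Second, even granting the encoding, you do not actually produce the dual witness; you explicitly defer the reconciliation of rank-based robustness constraints with quantile-based consistency constraints as ``where the real work lies.'' The paper does not prove infeasibility of $\mathcal{P}(\alpha^*_{\MaxProb},\frac1\ee)$ analytically either---it writes down an explicit finite LP (\Cref{thm:hardness-maxprob}) parameterized by the rejection probabilities $\rejall_t(\ell)$ against the truncated priors, and solves it numerically ($n=30$, $K=1024$, $\tilde f(k)\propto 1/k$). So the concrete impossibility at $(\alpha^*_{\MaxProb},\frac1\ee)$ comes from a computation over a carefully engineered finite polytope, not from a hand-built two-constraint dual. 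If you want to carry out your plan, the missing ingredients are (i) an analogue of the minor-oblivious reduction tailored to your adversary family, and (ii) an explicit, finite, sound relaxation of the constraints that you can actually certify infeasible.
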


Again, to plot the curve in~\Cref{fig:MaxProb}, we rely on numerical linear program solvers to solve the feasibility of certain polytopes, which does not appear to admit a closed-form solution.

\subsection{Overview of Techniques}
\label{sec:techniques}

\paragraph{From Mispredicted Prior to Misspecified Thresholds.}
A key property of bi-criteria algorithms is that they provide two equivalent viewpoints for robustness. 
By definition, we may consider a real prior $F$ and a (mis)predicted prior $\tilde{F}$ that misleads the algorithm.
Equivalently, we may also consider an algorithm with the correct prior $F$ but a misspecified threshold function $F \circ \tilde{F}^{-1} \circ \threshold$, since $\tilde{F}(x) \ge \threshold(t)$ if and only if $F(x) \ge F \circ \tilde{F}^{-1} \circ \threshold(t)$.
The latter viewpoint is much easier to work with --- we just need to restrict the misspecified threshold $F \circ \tilde{F}^{-1} \circ \threshold(t)$ via our choice of threshold $\threshold$.
Further, the only restrictions are that whenever $\threshold(t)$ is $0$ or $1$, $F \circ \tilde{F}^{-1} \circ \threshold(t)$ should have the same value of $0$ or $1$.
This observation leads to our design of thresholds with three phases:
1) a prior-free phase before time $\lambda_1$ where $\threshold(t) = 1$ and the algorithm rejects all values;
2) a prior-dependent phase from time $\lambda_1$ to $\lambda_2$ where $\threshold(t)$ lies between $0$ and $1$; and
3) a final prior-free phase after time $\lambda_2$ where $\threshold(t) = 0$ and the algorithm takes any record.
See \Cref{sec:algo} for a detailed discussion, and \Cref{sec:robustness} for the robustness analysis.

\paragraph{Implicit Sharding.}
We show that optimal stopping with a predicted prior is the hardest when the number of values tends to infinity, using an \emph{implicit sharding} technique that may be of independent interest.
This is inspired by the \emph{sharding} technique
by 
\citet{harbNewProphetInequalities2025}, which breaks each value into many i.i.d.\ shards based on the prior, such that the original value equals the maximum shard. 
However, the sharding technique requires full knowledge of the real prior, and thus, does not apply to optimal stopping with a predicted prior.
In contrast, implicit sharding directly works with the original values, and only uses the shards implicitly in the analysis through a coupling argument. 
See \Cref{sec:sharding} and specifically \Cref{thm:max-exp-sharding} for the formal argument.

\paragraph{Tractable Factor-revealing Linear Programs.}
We prove the hardness result by constructing factor-revealing linear programs whose optimal values upper-bound the achievable consistency-robustness trade-offs.
Exponential-size linear programs can easily capture all possible algorithms, which map possible sequences of observed values to binary acceptance decisions. 
The main challenge is to reduce the size without losing much characterization power.

We compress the algorithms' state space by restricting the real prior to be a conditional distribution of the predicted prior, subject to an upper bound $k$ on the values.
Intuitively, the only information an algorithm may conclude from the observed values $x_1, \ldots, x_t$ is $k \ge \max_{1 \le i \le t} x_i$ under this setup.
We formalize this intuition, showing that we may, without loss of generality, consider \emph{minor-oblivious} algorithms whose decisions depend solely on the maximum observed value, not on the detailed observed value sequence.
This reduces the state space from an exponential number of possible value sequences to a polynomial number of possible maximum values.
See \Cref{sec:hardness-MaxProb} for the factor-revealing linear programs and the hardness.

\subsection{Related Work}

\paragraph{Prophet Inequalities with Full Prior Information.}
When the algorithm has complete knowledge of the distributions, the prophet inequality problem has been extensively studied.
Under the \MaxExp objective, \citet{hillComparisonsStopRule1982} proposed a $(1 - \nicefrac{1}{\ee})$-competitive algorithm for i.i.d.\ prophet inequality.
This was improved to $0.738$ by \citet{abolhassaniBeating11Ordered2017}, and finally to approximately $0.745$ by \citet{correaPostedPriceMechanisms2017}. 
The latter ratio is tight, matching a hardness result by Hill and Kertz~\cite{hillComparisonsStopRule1982,kertzStopRuleSupremum1986}.
For prophet inequality where values are independently drawn from non-identical distributions, the optimal competitive ratio is $0.5$~\cite{krengel1978semiamarts, samuel1984comparison}.
The \emph{prophet secretary problem} \cite{esfandiariProphetSecretary2015} interpolates between these two settings: values are independently drawn from non-identical distributions and arrive in random order.
The optimal competitive ratio for prophet secretary is open: a lower bound of $0.688$ \cite{chenProphetSecretaryMatching2025} and an upper bound of $0.723$~\cite{giambartolomeiProphetInequalitiesSeparating2023} are currently known.

For the \MaxProb objective, \citet{gilbertRecognizingMaximumSequence} gave an optimal algorithm for the i.i.d.\ prophet inequality, while \citet{samuels1982} derived an explicit expression for the optimal competitive ratio of approximately $0.580$. 
For non-identical distributions, \citet{esfandiariProphetsSecretariesMaximizing2020} obtained an optimal $\nicefrac{1}{\ee}$-competitive ratio for the prophet inequality model, and recovered the $0.580$ ratio for the prophet secretary under a ``no-superstar'' assumption. 
\citet{nutiSecretaryProblemDistributions2022} later removed this assumption.

See \citet{correaRecentDevelopmentsProphet2019} for a survey.
We stress that none of these prophet inequality algorithms provide non-trivial robustness guarantees.

\paragraph{Prophet Inequalities with Partial Prior Information.}
Relaxing the assumption of complete prior knowledge, several models have studied algorithms with partial information about the distributions.
The most prominent one is the \emph{sample-based model}, in which the algorithm has access to i.i.d.\ samples from the prior distribution(s). 
One line of research focuses on getting constant-factor prophet inequalities using one (or fewer) sample from each distribution~\cite{azar2014prophet, rubinsteinOptimalSinglechoiceProphet2020, gravinOptimalProphetInequality2022, ezraProphetInequalitySamples2024}.
Notably, the optimal $\nicefrac{1}{2}$-competitive prophet inequality can be achieved with only one sample from each distribution~\cite{rubinsteinOptimalSinglechoiceProphet2020, gravinOptimalProphetInequality2022}.
Another line of work studies the minimal number of samples needed to achieve the optimal competitive ratio up to $\varepsilon$~\cite{diakonikolasLearningOnlineAlgorithms2021, correaUnknownIIDProphets2021, correaTwosidedGameGoogol2020, rubinsteinOptimalSinglechoiceProphet2020, cristiProphetInequalitiesRequire2023, guo2021generalizing, correaSampledrivenOptimalStopping2021, jin2024sample}.
The state-of-the-art bounds for i.i.d.\ prophet inequality are $\tilde{\bigO}(\nicefrac{n}{\varepsilon})$ for learning a $0.745-\varepsilon$ competitive algorithm~\cite{correaSampledrivenOptimalStopping2021}, and $\tilde{\bigO}(\nicefrac{n}{\varepsilon^2})$ for learning an online policy that is optimal up to an $\varepsilon$ additive error~\cite{guo2021generalizing}. 
For general prophet inequality, $\tilde{\bigO}(\nicefrac{1}{\varepsilon^2})$ samples per distribution suffice~\cite{jin2024sample}.
Finally, \citet{cristiProphetInequalitiesRequire2023} gave a unified proof that $\bigO(\mathrm{poly}(\nicefrac{1}{\varepsilon}))$ samples per distribution suffice for two variants of prophet inequalities.
Other models of partial information include quantile-query oracles~\cite{liProphetInequalityIID2023,perezsalazarIIDProphetInequality2023} and bandit feedback~\cite{gatmiry2024bandit}.
These models and the sample-based model are incomparable to ours:
they assume access to the \emph{correct prior}, while we consider a \emph{predicted prior} that may be \emph{arbitrarily wrong}.

\paragraph{Algorithms with Predictions.}

Our work falls within the \emph{algorithms with predictions} framework~\cite{mitzenmacher2022algorithms} for augmenting worst-case analysis of algorithms with machine-learned advice. 
Consistency, robustness, and smoothness are the primary performance metrics, capturing the algorithms' performance when the prediction is correct, completely wrong, and partially wrong.
This framework has been applied to many online optimization problems, including ski-rental~\cite{purohit2018improving,diakonikolasLearningOnlineAlgorithms2021,benomar2023advice,angelopoulos2024online}, scheduling~\cite{purohit2018improving,benomar2023advice}, caching~\cite{lykouris2021competitive}, online bidding~\cite{angelopoulos2024online}, and so on.

Within optimal stopping, prior work has explored several forms of predictions. \citet{antoniadisSecretaryOnlineMatching2020} studies a prediction of the maximum value, \citet{fujiiSecretaryProblemPredictions2024} and \citet{balkanski2024fair} assume having a predicted sequence of values, and \citet{braun2024secretary} considers a prediction of the gap between the maximum and the $k$-th largest value.

In contrast, our work considers \emph{distributional predictions}, bridging the extensive literature of prophet inequality with the growing use of machine-learned priors in practice. Within the context of prophet inequalities, the only relevant work to our knowledge is by~\citet{duttingPostedPricingProphet2019}, which examines the impact of mispredicted priors with an emphasis on smoothness. This work is the first to investigate the consistency-robustness trade-off given a distributional prediction.
In general, while distributional predictions have been explored in certain domains such as online matching, where they have been incorporated into traditional stochastic models \citep{esfandiari2015online,canonne2025little}, this type of prediction remains largely underexplored.

\paragraph{Independent and Concurrent Work.}
\citet{kehne2025prophet} study the same problem, focusing on the MaxExp setting. 
Their results and ours are complementary.
The algorithms in both papers are under the bi-criteria framework.
They show a non-trivial consistency-robustness trade-off using a simple single-threshold between $\lambda_1$ and $\lambda_2$, whereas we optimize the choice of thresholds within the bi-criteria framework.
They rule out ``best-of-both-world'' algorithms for the \MaxExp setting, whereas we prove such an impossibility result for the \MaxProb setting.

\section{Preliminaries}
\label{sec:prelim}

Let $\N$ denote the set of positive integers.
For any $n \in \N$, define $[n] \coloneqq \{1, 2, \ldots, n\}$.
Let $\mathcal{D}$ be the set of all probability distributions supported on $[0, \infty)$. 
We adopt the standard abbreviations \emph{cdf}, \emph{pdf}, and \emph{pmf} for the cumulative distribution function, probability density function, and probability mass function. 
We use the same symbol for a distribution and its cdf (e.g., $F$).
Unless stated otherwise, expectations and probabilities are taken over the randomness of the values and any internal randomness of the algorithm.

\subsection{Optimal Stopping}

Consider $n$ values $x_1, x_2, \ldots, x_n$ drawn i.i.d.\ from a prior distribution $F \in \mathcal{D}$.
The values arrive one by one.
Upon the arrival of each value $x_i$, the algorithm observes the realized value, and then must immediately decide whether to accept $x_i$ and stop, or to reject it and continue.
At most one value may be accepted.

Fix any algorithm, let $\AlgVal$ be the index of the value it accepts, and thus, $\ALG$ be the corresponding accepted value.
We define $\ALG = 0$ if the algorithm rejects all values.

We consider the maximization problem, where larger values are preferable.
The literature has studied two objectives.
Let $\OPT$ denote the maximum value among all $n$ values, i.e., $\OPT \coloneqq \max_{i\in [n]} x_i$.
The \emph{max expectation} (\MaxExp) objective considers the expectation of the algorithm's accepted value, $\E [ \ALG ]$.
The corresponding \emph{competitive ratio} is
\[
    \frac{\E[\ALG]}{\E[\OPT]}
    ~.
\]
The \emph{max probability} (\MaxProb) objective considers the probability of accepting the maximum:
\[
    \Pr [ \ALG = \OPT ]
    ~.
\]
This coincides with the \emph{competitive ratio} since the offline optimal choice is always the maximum.
An algorithm is $\Gamma$-competitive if its competitive ratio is at least $\Gamma$.

\subsection{Optimal Stopping with a Predicted Prior}

We assume that the number of values $n$ is known.
Depending on what the algorithm knows about the prior distribution $F$, we have different optimal stopping problems.
If nothing is known, we have essentially the secretary problem~\cite{dynkinOptimumChoiceInstant1963, fergusonWhoSolvedSecretary1989}.%
\footnote{The classical secretary problem considers random order without prior.
Nonetheless, \citet{correaProphetInequalitiesIID2019} showed that the optimal competitive ratio is still $\nicefrac{1}{\ee}$ for an unknown prior.}
With complete information of $F$, we have the i.i.d.\ prophet inequality~\cite{correaUnknownIIDProphets2021}.

This paper considers a new model between the two extremes, which we dub \emph{optimal stopping with a predicted prior}.
In this model, the algorithm is given a possibly erroneous predicted prior $\prior$.
Hence, upon receiving each value $x_i$, the algorithm decides whether to accept based on the number of values $n$, the predicted prior $\prior$, and the realizations of the values observed thus far including $x_i$.

Following the literature of algorithms with predictions \cite{mitzenmacher2022algorithms}, we define the consistency and robustness of stopping algorithms.
\begin{definition}[Consistency]
An algorithm is \emph{$\alpha$-consistent} if for any prior distribution $F \in \mathcal{D}$, it is $\alpha$-competitive when the predicted prior is accurate, i.e., when $\prior = F$.
\end{definition}

\begin{definition}[Robustness]
An algorithm is \emph{$\beta$-robust} if for any prior distribution $F \in \mathcal{D}$, it is $\beta$-competitive regardless of which predicted prior $\prior$ is given.
\end{definition}

\subsection{Simplifying Assumptions for Algorithm Analyses}
\label{subsec:assumption-time-and-tiebreak}

For our algorithmic results, we adopt three standard assumptions from the literature that simplify the analysis and are without loss of generality.

\paragraph{Continuous-time  Model.}
Besides the above classical model, which we will refer to as the discrete-time model, we also consider an equivalent continuous-time model. In this variant, each value $x_i$ is further assigned an arrival time $t_i$ drawn independently and uniformly from $[0, 1]$, at which time the algorithm observes the value.
The induced order of arrivals is a uniformly random permutation of the $n$ values, so the discrete-time and continuous-time formulations are equivalent.

\paragraph{Random Tie-breaking.}
We assume the predicted prior $\tilde{F}$ is continuous. This assumption is without loss of generality for our algorithm, since we can apply the random tie-breaking rule by \citet{correaProphetSecretaryBlind2021}.

\paragraph{Full-support Continuous Prior.}
For the analysis, we further assume that the prior $F$ is continuous with full support on $[0, \infty)$.
This assumption is without loss of generality.
To see this, we construct a perturbed distribution
\[
    F_{\varepsilon}=
    \begin{cases}
        F &\text{ w.p. } 1-\varepsilon~; \\
        \text{Exp}(1) &\text{ w.p. } \varepsilon~.
    \end{cases} 
\]
Then $F_{\varepsilon}$ is continuous with full support.
Moreover, with probability at most $1 - (1 - \varepsilon)^n \leq n \varepsilon$, the instance observed by the algorithm will be changed.
Thus, for any finite $n\in \mathbb{N}$, taking $n \varepsilon \to 0$, the performance is preserved.

For any distribution $F \in \mathcal{D}$, define its right‑continuous inverse as:
\[
    F^{-1}(q) \coloneqq \inf \big\{ x \geq 0 : F(x) \ge q \big\}, \quad \forall q \in [0, 1].
\]
Importantly, $F$ and $F^{-1}$ are strictly increasing on the interiors of their domains.
Further, we have the boundary conditions $F^{-1}(0) = 0$ and $F(0) = 0$, and by convention $F^{-1}(1) = \infty$ and $F(\infty) = 1$.

\section{Bi-criteria Algorithms}
\label{sec:algo}

We start with two definitions that will be repeatedly used throughout the paper.

\begin{definition}
The \emph{prefix maximum} for any time $t \in [0, 1]$ is the maximum value arrived before time $t$, i.e., $\max_{i \,:\, t_i < t} x_i$, with the convention that it equals $0$ if no value arrives before $t$.
\end{definition}

\begin{definition}
    A value $x$ arriving at time $t$ is called a \emph{record} if $x$ is greater than or equal to the prefix maximum for time $t$.
\end{definition}

We consider the following family of algorithms parameterized by the predicted prior $\prior$ and a threshold function $\threshold$. 
Throughout the paper, threshold functions are always \emph{non-increasing} maps from $[0, 1]$ to $[0, 1]$.

\medskip

\begin{tcolorbox}[title={Bi-criteria Algorithm $\algowiththres(\prior, \threshold)$}]
    \label{algo:algo-framework}
    For each value $x_i$ arriving at time $t_i$ when no value has been accepted, accept $x_i$ if:
    \begin{enumerate}
        \item Value $x_i$ is a record; and 
        \item The cdf value of $x_i$ w.r.t.\ $\prior$ is greater than the threshold at time $t_i$, i.e., if $\prior(x_i) > \threshold(t_i)$.
    \end{enumerate}
\end{tcolorbox}

\medskip

The first prior-free criterion is without loss of optimality for \MaxProb and is critical for the robustness guarantees for both \MaxProb and \MaxExp.
The second prior-dependent criterion represents how the algorithm uses the predicted prior to guide its decisions.

Specifically, if $\threshold(t)=1$, the algorithm does not accept any values at time $t$, regardless of how large the values are.
If $\threshold(t)=0$, the algorithm accepts any record at time $t$, regardless of the value's relative standing w.r.t.\ the predicted prior. 

Algorithms under this framework satisfy many properties, including a simple necessary and sufficient condition for the algorithm to reject all values before time $t$, as follows. As discussed in~\Cref{subsec:assumption-time-and-tiebreak}, we assume the distribution is continuous and thus ties occur with probability zero.

\begin{proposition}
    \label{prop:no-acceptance}
    The algorithm $\algowiththres(\tilde{F}, \threshold)$ rejects all values before time $t$ if and only if the prefix maximum $y$ for time $t$ and the arrival time $s$ of $y$ satisfy $\prior(y) \leq \threshold(s)$.
\end{proposition}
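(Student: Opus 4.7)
The plan is to prove the two directions of the biconditional separately, relying on two elementary monotonicity facts — $\threshold$ is non-increasing and $\prior$ is non-decreasing as a cdf — together with the continuity assumption from~\Cref{subsec:assumption-time-and-tiebreak}, which ensures the realized values are almost surely distinct. Throughout I will unpack the acceptance rule as: the algorithm rejects $x_i$ at $t_i$ iff criterion~1 fails ($x_i$ is not \bestsofar at $t_i$) or criterion~2 fails ($\prior(x_i) \le \threshold(t_i)$).

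For the forward direction, I would assume the algorithm has rejected every value arriving strictly before $t$ and single out the prefix maximum $y$ with arrival time $s < t$. By definition $y$ dominates every value in $[0, t)$, hence in particular every value with arrival time $< s$, so $y$ is \bestsofar at $s$ and satisfies criterion~1. Since the algorithm nevertheless rejected $y$, criterion~2 must have failed, yielding $\prior(y) \le \threshold(s)$.

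For the backward direction, I would assume $\prior(y) \le \threshold(s)$ and fix any value $x_i$ with $t_i < t$. If $x_i$ fails criterion~1, it is rejected immediately. Otherwise $x_i$ is \bestsofar at $t_i$, so $x_i \ge x_j$ for every $j$ with $t_j < t_i$. Because $y$ is the maximum in $[0, t)$, we have $x_i \le y$; moreover, if $t_i > s$ then $y$ already belongs to the prefix at $t_i$, forcing $x_i \ge y$ and hence $x_i = y$, contradicting distinctness. So $t_i \le s$, and monotonicity of $\prior$ and $\threshold$ gives
\[
\prior(x_i) \;\le\; \prior(y) \;\le\; \threshold(s) \;\le\; \threshold(t_i),
\]
so criterion~2 fails for $x_i$ and the algorithm rejects it.

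I do not expect any serious obstacle: the argument is essentially a two-case split combined with the two monotonicity inequalities above. The only step requiring genuine care is excluding the coincidence $t_i > s$ with $x_i$ \bestsofar, which would force $x_i = y$; this is precisely where the continuous-prior assumption enters, since it guarantees distinct realizations almost surely and thus lets us invoke the standing convention of~\Cref{subsec:assumption-time-and-tiebreak}.
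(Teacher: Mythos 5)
Your proof is correct and takes essentially the same approach as the paper: the backward direction splits values by arrival relative to $s$ and uses the monotonicity of $\prior$ and $\threshold$, while the forward direction is the direct form of the paper's contrapositive (if $\prior(y) > \threshold(s)$ then $y$, being best-so-far, triggers an acceptance). Your observation that the continuous-distribution assumption rules out $x_i = y$ with $t_i > s$ mirrors the paper's implicit use of the same fact when asserting that values arriving in $(s,t)$ cannot be best-so-far.
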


\begin{proof}
    If $\prior(y) \leq \threshold(s)$, then all values $x$ arriving before time $s$ are rejected because the cdf value is below the threshold, as $\prior(x) \le \prior(y) \le \threshold(s)\le \threshold(t)$ for any $t\leq s$.
    Values arriving between $s$ and $t$ are also rejected because they do not exceed $y$ and therefore cannot be a record.

    If $\prior(y) > \threshold(s)$, then $y$ is a record and has cdf value above the threshold. 
    Hence, the algorithm accepts either $y$ or a value before $y$.
\end{proof}

\begin{corollary}
    \label{prop:acceptance-rule}
    Suppose a value $x$ arrives at time $t$, and its prefix maximum $y$ arrives at time $s < t$.
    The algorithm accepts $x$ if and only if 
    1) $\prior(x) > \threshold(t)$, 
    2) $x \geq y$, and 
    3) $\prior(y)\leq \threshold(s)$.
\end{corollary}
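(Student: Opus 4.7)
The plan is to derive this corollary as a direct consequence of \Cref{prop:no-acceptance} combined with the acceptance rule of the bi-criteria algorithm. The key observation is that the algorithm accepts $x$ at time $t$ precisely when two things happen: (i) no earlier value was accepted, and (ii) $x$ itself satisfies both criteria of the algorithm. Condition (i) is exactly characterized by \Cref{prop:no-acceptance} applied to time $t$, and condition (ii) spells out the two criteria on $x$.

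For the ``only if'' direction, I would assume the algorithm accepts $x$. Then the algorithm must have reached time $t$ without accepting, so by \Cref{prop:no-acceptance} the prefix maximum $y$ for time $t$ and its arrival time $s$ satisfy $\prior(y) \le \threshold(s)$, which is condition 3. Moreover, the acceptance rule forces $x$ to be \bestsofar (yielding $x \ge y$, condition 2) and to satisfy $\prior(x) > \threshold(t)$ (condition 1).

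For the ``if'' direction, I would assume all three conditions hold. Condition 3 together with \Cref{prop:no-acceptance} implies that the algorithm rejects every value arriving before time $t$; in particular, no value has been accepted when $x$ arrives. Condition 2 says $x$ is \bestsofar, and condition 1 says the cdf value of $x$ with respect to $\prior$ is strictly above the threshold $\threshold(t)$. Thus both criteria of $\algowiththres$ are met at time $t$, so the algorithm accepts $x$.

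There is no real obstacle here; the only subtle point to state carefully is that ``the prefix maximum $y$ of $x$ arriving at time $s$'' in the corollary refers to the same $(y, s)$ appearing in \Cref{prop:no-acceptance} when applied at time $t$, which follows immediately from the definition of prefix maximum. Since we assume continuous distributions, ties occur with probability zero, so the strict versus non-strict inequalities in the statement are consistent with the algorithm's definition.
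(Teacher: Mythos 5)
Your proof is correct and takes exactly the intended route: the paper states this as an immediate corollary of \Cref{prop:no-acceptance} without a separate proof, and your argument — decomposing acceptance into ``no earlier acceptance'' (handled by \Cref{prop:no-acceptance}) plus the two criteria of the bi-criteria rule applied to $x$ itself — is precisely the short argument the paper leaves implicit.
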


\subsection{Examples}

This family encompasses many classical algorithms as special cases.
We present three examples.

\begin{example}[Dynkin's algorithm~\cite{dynkinOptimumChoiceInstant1963}, continuous-time variant]
\label{example:dynkins}
    Dynkin's algorithm accepts the first record after rejecting the first $\lfloor \nicefrac{n}{\ee}\rfloor$ values. 
    The continuous-time variant (e.g., \citet{feldmanImprovedCompetitiveRatios2011}) accepts the first record after time $\nicefrac{1}{\ee}$, which corresponds to the threshold function $\threshold(t) = 1$ for $t \in [0, \nicefrac{1}{\ee}]$, and $\threshold(t) = 0$ for $t \in (\nicefrac{1}{\ee}, 1]$.
\end{example}

Dynkin's algorithm is $\nicefrac{1}{\ee}$-competitive for both \MaxProb and \MaxExp, without using any information about the prior~\cite{dynkinOptimumChoiceInstant1963, feldmanImprovedCompetitiveRatios2011}.
Hence, it is $\nicefrac{1}{\ee}$-consistent and $\nicefrac{1}{\ee}$-robust.

\begin{example}[Gilbert-Mosteller algorithm~\cite{gilbertRecognizingMaximumSequence}, continuous-time variant]
\label{example:gilbert-mosteller}

With a correct prior $\prior = F$, the optimal algorithm for the \MaxProb objective takes the bi-criteria form according to backward induction.
In this setting, Gilbert and Mosteller calculated the optimal thresholds in the discrete-time model.
Their algorithm achieves a competitive ratio of approximately $0.580$, with the following analytical form due to \citet{samuels1982}:\footnote{
The competitive ratio presented in \citet{samuels1982} is $\ee^{-\constint}+(\ee^\constint - \constint - 1)\int_1^\infty \frac{\ee^{-\constint x}}{x}\dd x$ (fixing a typographical negative sign before the second term), which equals our expression through transformations.
}
\[
    \int_{0}^{1} \int_s^1
    \frac{\ee^{-\frac{\constint t} {1-s}}}{t} \dd t \dd s 
    ~,
\]
where $\constint \approx 0.804$ is the solution to $\sum_{k=1}^{\infty} \frac{\constint^k}{k!k} = 1$.

Using a similar backward induction argument, we compute in~\Cref{lem:unique-inheritance-optimal-threshold} the optimal threshold $\maxproboptthres$ under our bi-criteria framework for any $n\in \mathbb{N}$. For each $t$, $\maxproboptthres(t)$ is the unique solution $\quantile$ of:
\begin{equation*}
    \int_t^1 \frac{(1 - s + s\quantile)^{n-1} - \quantile^{n - 1}}{1 - s} \dd s = \quantile^{n-1}
    ~.
\end{equation*}
\end{example}

This construction achieves the same consistency $0.580$ for \MaxProb as Gilbert-Mosteller algorithm, and it is exactly the form used in the proof of \Cref{thm:max-prob-consistency-robustness}.
However, the algorithm is merely $0$-robust.
To illustrate, consider the case where the prior $F$ is uniform over $[0,1]$, whereas the predicted prior $\prior$ is uniform over $[2,3]$.
In this case, the algorithm rejects all values, as they have cdf value $0$ w.r.t.\ the predicted prior.

\begin{example}[Single-threshold algorithm~\cite{ehsaniProphetSecretaryCombinatorial2018}, i.i.d.\ case]
    Let $\threshold(t) = 1 - \tfrac{1}{n}$ for $t \in [0, 1]$.%
    \footnote{The original algorithm sets a constant threshold $\threshold$ such that $\Pr[\OPT \geq \threshold] = \nicefrac{1}{\ee}$, which corresponds to $F(x_i) \approx 1 - \nicefrac{1}{n}$ for each $i$ in the i.i.d.\ case.}
\end{example}

The single-threshold algorithm achieves a competitive ratio of $1-\nicefrac{1}{\ee} \approx 0.632$ for the prophet secretary problem in terms of \MaxExp.
Hence, it is $(1-\nicefrac{1}{\ee})$-consistent for \MaxExp.
However, it suffers from the same counterexample and is only $0$-robust like Gilbert-Mosteller algorithm.

\medskip

In sum, the threshold functions corresponding to existing algorithms fail to achieve nontrivial trade-offs between consistency and robustness. 
They either ignore the predicted prior and have pessimistic consistency (Dynkin's algorithm) or rely on having an accurate prior without any robustness (Gilbert-Mosteller algorithm and single-threshold algorithm).

\subsection{Robust Threshold Functions}
\label{subsec:robust-threshold-function}

We next explain our design of robust threshold functions.
First, we need to understand how an adversary would choose a worst-case prior distribution $F$, given the predicted prior $\prior$, threshold function $\threshold$, and the corresponding algorithm $\algowiththres\big(\prior, \threshold\big)$.
The key observation is the following lemma.

\begin{lemma}[From mispredicted prior to misspecified thresholds]
    \label{obs:transformation}
    For any distributions $F$ and $\prior$, and any threshold function $\threshold$, algorithms $\algowiththres\big(\prior ,\threshold\big)$ and $\algowiththres(F, F \circ \prior^{-1} \circ \threshold)$ are the same.
\end{lemma}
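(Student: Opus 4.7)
The plan is to show that on every realization of arrival times and values, the two algorithms make identical acceptance decisions, by proving that their prior-dependent acceptance criteria are pointwise equivalent. The first (best-so-far) criterion does not depend on the prior at all, so it is trivially shared between $\algowiththres(\prior, \threshold)$ and $\algowiththres(F, F \circ \prior^{-1} \circ \threshold)$. Therefore I only need to verify that for every value $x$ and time $t$,
\[
    \prior(x) > \threshold(t) \iff F(x) > (F \circ \prior^{-1} \circ \threshold)(t).
\]

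First I would invoke the simplifying assumption from~\Cref{subsec:assumption-time-and-tiebreak} that both $F$ and $\prior$ are continuous and strictly increasing on their supports, with the boundary conventions $F(0)=\prior(0)=0$ and $F(\infty)=\prior(\infty)=1$. Under these assumptions, for any $q \in (0,1)$, $\prior(x) > q$ is equivalent to $x > \prior^{-1}(q)$, which is in turn equivalent to $F(x) > F(\prior^{-1}(q))$ by the strict monotonicity of $F$. Applying this with $q = \threshold(t)$ gives the desired equivalence whenever $\threshold(t) \in (0,1)$.

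Next I would handle the two boundary cases separately. When $\threshold(t) = 0$, the left-hand side becomes $\prior(x) > 0$, which by full support and $\prior(0) = 0$ is equivalent to $x > 0$, and on the right-hand side $(F \circ \prior^{-1} \circ \threshold)(t) = F(0) = 0$ by the boundary convention, so the condition again reduces to $x > 0$. When $\threshold(t) = 1$, the left-hand side $\prior(x) > 1$ never holds, while on the right-hand side $(F \circ \prior^{-1} \circ \threshold)(t) = F(\infty) = 1$, so $F(x) > 1$ also never holds. Thus the two criteria coincide at the boundaries as well.

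Combining the shared best-so-far condition with the equivalent prior-dependent conditions shows that the two algorithms accept exactly the same value on every realization, which is the statement of the lemma. The only mild subtlety, and the one place I would be careful, is the boundary behavior of $\prior^{-1}$ at $0$ and $1$, where the right-continuous inverse convention must agree with $F(0)=0$ and $F(\infty)=1$; the full-support continuous assumption in~\Cref{subsec:assumption-time-and-tiebreak} is precisely what makes this routine rather than an obstacle.
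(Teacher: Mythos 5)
Your proof is correct and takes essentially the same approach as the paper: both establish the pointwise equivalence $\prior(x) > \threshold(t) \iff F(x) > (F\circ\prior^{-1}\circ\threshold)(t)$ via the monotonicity of $F$ and $\prior^{-1}$ together with their boundary conventions. The paper compresses this into a single sentence, whereas you spell out the interior case and the two boundary cases $\threshold(t)\in\{0,1\}$ explicitly, which is a reasonable elaboration of the same argument.
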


\begin{proof}
The lemma holds because $\prior(x) \,{>}\, \threshold(t)$ if and only if $F(x) \,{>}\, F \circ \prior^{-1} \circ \threshold(t)$, according to the monotonicity of $F$ and $\prior^{-1}$ and their boundary conditions.
\end{proof}

Despite its simple one-sentence proof, the lemma has an important implication.
Instead of considering algorithm $\algowiththres\big(\prior, \threshold\big)$ parameterized with a \emph{mispredicted prior} $\prior$, we may equivalently consider algorithm $\algowiththres\big(F, \threshold\big)$ parameterized with the correct prior $F$ and a \emph{misspecified threshold function} $F \circ \prior^{-1} \circ \threshold$.

Importantly, the misspecified threshold function $F \circ \prior^{-1} \circ \threshold$ is not arbitrary.
By the boundary conditions of $F$ and $\prior^{-1}$, if the original threshold $\threshold(t)$ equals $0$ or $1$ at time $t$, the misspecified threshold function $F \circ \prior^{-1} \circ \threshold (t)$ must have the same value $0$ or $1$.
Further, it is easy to show that this is the only constraint.
Let $\lambda_1 = \sup \{ 0 \le t \le 1 : \threshold(t) = 1 \}$ and $\lambda_2 = \inf \{ 0 \le t \le 1 : \threshold(t) = 0 \}$.
Then, the misspecified threshold function $F \circ \prior^{-1} \circ \threshold(t)$ can be any non-increasing function for $t$ between $\lambda_1$ and $\lambda_2$, while having value $1$ for $0 \le t < \lambda_1$ and value $0$ for $ \lambda_2 < t \le 1$.

In Section~\ref{sec:robustness}, we will further show that the \emph{worst-case} threshold $F \circ \prior^{-1} \circ \threshold(t)$ is either always $0$ or always $1$ in $[\lambda_1, \lambda_2]$.
They correspond to a variant of Dynkin's algorithm that starts to accept records at a suboptimal time $\lambda = \lambda_1$ or $\lambda_2$, rather than the optimal choice $\nicefrac{1}{\ee}$, whose competitive ratio is $-\lambda \ln \lambda$ for both \MaxProb and \MaxExp (e.g., by \citet*{feldmanImprovedCompetitiveRatios2011}). 

This yields the following robustness guarantee, whose proof is deferred to Section~\ref{sec:robustness}.

\begin{lemma}
\label{lem:pre-robustness}
For any predicted prior $\prior$, any threshold function $\threshold$, denote $\lambda_1 = \sup \{ 0 \le t \le 1 : \threshold(t) = 1 \}$ and $\lambda_2 = \inf \{ 0 \le t \le 1 : \threshold(t) = 0 \}$, algorithm $\algowiththres\big(\prior,\threshold\big)$ is
\[
\min\{-\lambda_1 \ln \lambda_1, -\lambda_2 \ln \lambda_2 \}\text{-robust}
\]
for both \MaxProb and \MaxExp.
\end{lemma}

Motivated by \Cref{lem:pre-robustness}, we now define the robust counterpart
of any threshold function.

\begin{definition}
\label{def:robust-threshold}

For any $0\leq \lambda_1\leq \lambda_2\leq 1$, for any threshold function $\threshold$, define its corresponding \emph{robust threshold function} as:
\[
    \restrict{\threshold}(t) \coloneqq 
    \begin{cases}
    1 & 0 \le t \le \lambda_1 ~; \\
    \threshold(t) & \lambda_1 < t \le \lambda_2 ~; \\
    0 & \lambda_2 < t \le 1 ~.
    \end{cases}
\]
\end{definition}

The robustness guarantee in Lemma~\ref{lem:pre-robustness} depends only on the
pair $(\lambda_1,\lambda_2)$, which allows us to parameterize robustness
directly. 
Consider any feasible robustness parameter $\beta \in [0, \nicefrac{1}{e}]$.
Let $0 \le \lambda_1 \le \nicefrac{1}{\ee} \le \lambda_2 \le 1$ be the solutions to:
\[
    -\lambda \ln \lambda ~=~ \beta
    ~.
\]
\Cref{lem:pre-robustness} above immediately implies that any algorithm with the form $\algowiththres(\prior,\restrictsmall{\threshold})$ is $\beta$-robust for both \MaxProb and \MaxExp.

Readers familiar with the Lambert $\mathcal{W}$ function may note that:
\begin{equation}
    \label{eqn:lambert-W}
    \lambda_1 \coloneqq \Wone~,
    \quad\qquad \lambda_2 \coloneqq \Wzero
    ~.
\end{equation}

As an illustration, \Cref{fig:max-prob-threshold} shows the robust counterpart of the threshold function from the Gilbert-Mosteller algorithm with robustness $\beta=\nicefrac{1}{3}$.

\begin{figure}[htbp!]
    \centering
    \includegraphics[width=0.4\textwidth]{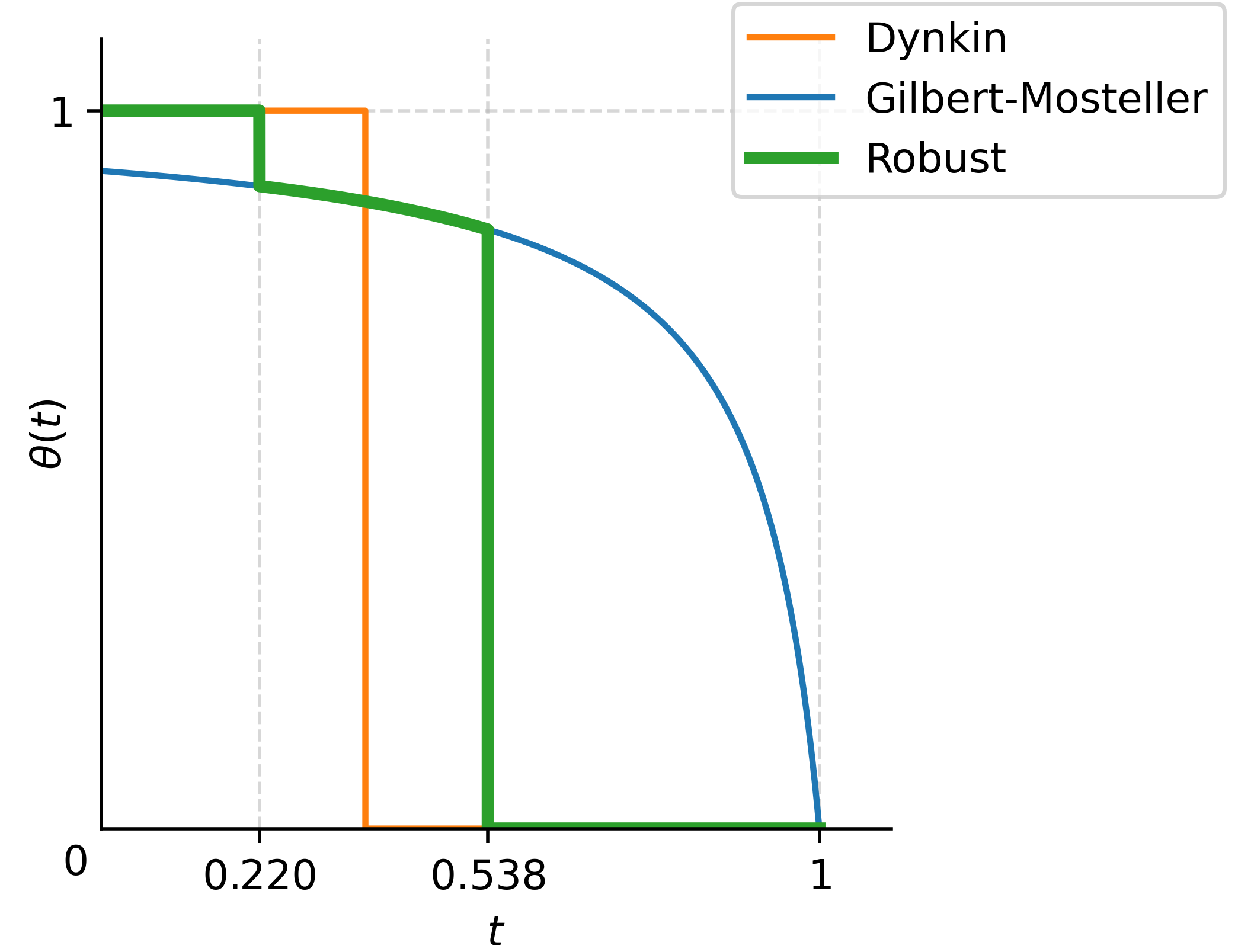}
    \vspace{-1em}
    \caption{Illustration of the robust threshold function corresponding to the Gilbert–Mosteller algorithm, with $n=10$, $\beta = \nicefrac{1}{3}$, $\lambda_1= \ee^{\W_{-1}(-\nicefrac{1}{3})} \approx 0.220$ and $\lambda_2= \ee^{\W_0(-\nicefrac{1}{3})}\approx 0.538$.}
    \label{fig:max-prob-threshold}
\end{figure}

\subsection{Optimal Robust Threshold Functions}
\label{subsec:optimal-robust-thresholds}

By~\Cref{lem:pre-robustness}, once $0\leq \lambda_1\leq \lambda_2\leq 1$ is fixed, every robust threshold of the form $\restrictsmall{\threshold}$ is guaranteed to be $\min\{-\lambda_1 \ln \lambda_1, -\lambda_2 \ln \lambda_2\}$-robust for both \MaxProb and \MaxExp.
Therefore, under this certified robustness lower bound, we can optimize consistency by freely choosing the threshold values on $(\lambda_1,\lambda_2)$.
Our main result in this subsection is that, for any $n$, there exist objective-specific optimal thresholds $\maxexpoptthres$ (\MaxExp) and $\maxproboptthres$ (\MaxProb), whose robust counterparts $\restrictsmall{\maxexpoptthres}$ and $\restrictsmall{\maxproboptthres}$ maximize consistency for every $0\leq \lambda_1\leq \lambda_2\leq 1$.

\begin{lemma}[Uniqueness and simultaneous optimality of threshold functions]
\label{lem:unique-inheritance-optimal-threshold}
    For any $n$, any prior distribution $F$ and any objective, either \MaxExp or \MaxProb, there exists a unique non-increasing threshold function, denoted by $\maxexpoptthres$ for \MaxExp and by $\maxproboptthres$ for \MaxProb, such that, for any $0\leq \lambda_1\leq \lambda_2\leq 1$, the corresponding bi-criteria algorithm induced by the robust threshold $\restrictsmall{\maxexpoptthres}$ (for \MaxExp) or $\restrictsmall{\maxproboptthres}$ (for \MaxProb) achieves the maximum consistency among all algorithms $\algowiththres(F,\restrictsmall{\threshold})$, where $\threshold$ ranges over all threshold functions.
    
    Here, the function $\maxproboptthres$ is precisely the threshold function used in Example~\ref{example:gilbert-mosteller}.
\end{lemma}

We now sketch the intuition behind why the bi-criteria algorithm admits simultaneously optimal threshold functions.
The proof of \Cref{lem:unique-inheritance-optimal-threshold} relies on the backward-induction equations and on the existence, uniqueness, and monotonicity of the corresponding optimal thresholds for \MaxExp and \MaxProb; these technical ingredients are developed in \Cref {app:unique-inheritance-optimal-threshold}.

\begin{proof}[Proof Sketch]

For a fixed objective and prior $F$, consider the dynamic programming formulation of the bi-criteria algorithm with a (not yet specified) threshold function~$\threshold$. At each time $t$, the algorithm compares the immediate payoff from accepting the current value $x$ with the continuation payoff from rejecting and following the same threshold rule in the future. The optimal threshold at time $t$ is characterized by an indifference condition between these two options.

More precisely, suppose a value $x$ arrives at time $t$ and is a record.
If $x$ is right at the threshold, i.e., $F(x) = \threshold(t)$, then the algorithm rejects $x$, and we have the following.
\begin{itemize}
    \item \textbf{Prior-free criterion.} All future values smaller than $x$ will be rejected, because they cannot be a record.
    \item \textbf{Prior-dependent criterion.} Since $\threshold$ is non-increasing in $t$ and $\threshold(t) = F(x)$, the first future value larger than $x$ will both exceed the threshold and be a new record.
\end{itemize}

In sum, the algorithm will accept the first future value that exceeds $x$. 
The continuation payoff is therefore the expected value (for \MaxExp) or success probability (for \MaxProb) associated with that future acceptance, which depends only on $t$, $x$, and the prior $F$, and \emph{does not} depend on the robustness parameters $(\lambda_1,\lambda_2)$.

In \Cref{app:unique-inheritance-optimal-threshold}, we derive that for each objective and each time $0 \le t\le 1$, the indifference (Bellman) equation that characterizes the optimal threshold $\maxexpoptthres(t)$ for \MaxExp and $\maxproboptthres(t)$ for \MaxProb.
There we show that each equation has a unique solution for every $t$, and that the resulting threshold function is non-increasing in $t$.
The detailed arguments appear in \Cref{lem:indifference-existence-uniqueness-MaxExp,lem:indifference-existence-uniqueness-MaxProb} in the appendix.

Since the continuation payoffs are independent of $(\lambda_1,\lambda_2)$ when the predicted prior is correct, the same objective-specific threshold solves the indifference equation of time $t$ for every choice of $\lambda_1 \leq t \leq \lambda_2$, and for any $0\leq \lambda_1\leq \lambda_2 \leq 1$, the robust threshold $\restrictsmall{\maxexpoptthres}$ (for \MaxExp) or $\restrictsmall{\maxproboptthres}$ (for \MaxProb) maximizes consistency among all $\restrictsmall{\threshold}$ in the bi-criteria framework.

This concludes our proof.
\end{proof}

For the \MaxProb objective, setting the robustness requirement to $\beta=0$ makes the predicted-prior problem coincide with the classical i.i.d.\ prophet inequality with the \MaxProb objective.
By \Cref{lem:unique-inheritance-optimal-threshold}, the threshold $\maxproboptthres$ is the unique consistency-maximizing threshold for this objective within our continuous-time framework.
As promised in \Cref{example:gilbert-mosteller}, we show that it achieves the optimal $0.580$ competitive ratio in \Cref{sec:algo-MaxProb}, where we prove \Cref{thm:max-prob-consistency-robustness}.

In contrast, for \MaxExp, our optimal threshold $\maxexpoptthres$ is not directly comparable to those of optimal i.i.d.\ prophet inequality algorithms, which rely solely on a threshold-based acceptance criterion without any record-based condition.
It remains open whether one can attain the $0.745$ worst-case consistency guarantee within this structural class, or whether any such record-based threshold policy must fall strictly below this benchmark.

\subsection{Brief Discussion on Algorithms Outside the Framework}

In terms of \MaxExp, the literature has considered algorithms that accept the first value greater than a threshold at its arrival time, regardless of whether the value is a record.
With a correct prior, the optimal policy by backward induction takes this form, where the threshold is the expected accepted value if the algorithm rejects the current value. 
For example, see the work by \citet{krengel1977semiamarts} on general prophet inequality and the more recent research~\cite{correaPostedPriceMechanisms2017,liuVariableDecompositionProphet2021,harbNewProphetInequalities2025} for i.i.d.\ prophet inequality.
These algorithms are outside our framework, where the taken value must be a record, in addition to surpassing the time-dependent threshold.

We stress that record-based requirement is critical for robustness.
The mentioned algorithms without record-based requirements are only $0$-robust. 
To see this, consider a predicted prior supported on values much smaller than those in the real prior.
Then, all values from the real prior will have a cdf value essentially $1$ w.r.t.\ the predicted prior.
As a result, the algorithm accepts the first value that arrives when $\threshold(t) < 1$, which is noncompetitive and leads to $0$ robustness.

Meanwhile, our framework is limited in that it does not adjust thresholds adaptively based on the observed values. 
Intuitively, a ``learning'' algorithm may adjust thresholds toward Dynkin's algorithm~\cite{dynkinOptimumChoiceInstant1963} if the observed values are unlikely to be from the predicted prior, and conversely, toward the $0.580$-competitive algorithm~\cite{gilbertRecognizingMaximumSequence} in the \MaxProb setting or the $0.745$-competitive algorithm~\cite{correaProphetInequalitiesIID2019} in the \MaxExp setting if the observed values are consistent with the predicted prior.
We consider a static threshold function to make the analyses tractable.
Exploring algorithms with adaptive thresholds is an interesting direction for future work.

\section{Implicit Sharding}
\label{sec:sharding}

This section introduces an implicit sharding technique that reduces instances with a smaller number of values to instances with arbitrarily many values.
In this sense, it suffices to design algorithms for the limit case when the number of values $n$ tends to infinity.
The main result of this section is the following theorem.

\begin{theorem}
    \label{thm:max-exp-sharding}
    For any $n,k\in \mathbb{N}$, for any predicted prior $\prior$, and for both the \MaxExp and \MaxProb objectives, if there is a threshold function $\threshold$ such that algorithm $\algowiththres\big(\prior,\threshold\big)$ is $\alpha$-consistent and $\beta$-robust for $nk$ values, then there is a corresponding $\alpha$-consistent $\beta$-robust algorithm $\mathcal{A}_{\mathrm{sharding}}(\tilde{F}, \threshold)$ for $n$ values.
\end{theorem}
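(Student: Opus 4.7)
My plan is to define $\mathcal{A}_{\mathrm{sharding}}(\tilde F, \theta)$ as the bi-criteria algorithm $\algowiththres(\tilde F, \theta^k)$ with the modified threshold $\theta^k(t) \eqdef \theta(t)^k$, and then to verify its $\alpha$-consistency and $\beta$-robustness through an implicit coupling to an $nk$-shard instance, with the key feature that no shards are ever materialized inside the algorithm. The modified threshold is natural from the identity $\tilde F^{1/k}(Y) > \theta(t) \iff \tilde F(Y) > \theta(t)^k$: a value $Y$ viewed as the maximum of $k$ i.i.d.\ shards from the sharded predicted prior $\tilde F^{1/k}$ passes the $nk$-algorithm's threshold criterion at time $t$ precisely when $\mathcal{A}_{\mathrm{sharding}}$'s does.

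The coupling is set up as follows. Given an $n$-value instance with $Y_1,\dots,Y_n \sim F$ i.i.d.\ at uniform times $\tau_1,\dots,\tau_n$, I synthesize an $nk$-shard instance: view each $Y_i$ as the maximum of $k$ shards drawn i.i.d.\ from $H \eqdef F^{1/k}$, placing the max shard of block $i$ at time $\tau_i$ and the remaining $k-1$ phantom shards at i.i.d.\ uniform times with values sampled from $H$ conditional on being $\le Y_i$. By exchangeability the full collection of $nk$ shards is jointly i.i.d.\ from $H$ with i.i.d.\ uniform arrival times, so it is a valid input to $\algowiththres(\tilde H, \theta)$ with $\tilde H \eqdef \tilde F^{1/k}$; applying the hypothesis with predicted prior $\tilde H$ and true prior $H$ yields $\alpha$-competitiveness in the consistency case ($\tilde F = F$, hence $\tilde H = H$) and $\beta$-competitiveness in general. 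Under the coupling the two problems share the same offline optimum, so it suffices to show that the value accepted by $\mathcal{A}_{\mathrm{sharding}}$ on the $n$-value instance dominates the value accepted by $\algowiththres(\tilde H, \theta)$ on the shard instance---in expectation for \MaxExp and in probability of equaling $\OPT$ for \MaxProb.

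The main obstacle is this last comparison. By the threshold identity both algorithms apply the same numerical criterion to a max shard $Y_i$ at time $\tau_i$, but they differ in (i) the \bestsofar check ($\mathcal{A}_{\mathrm{sharding}}$ compares $Y_i$ only to earlier values, while $\algowiththres(\tilde H, \theta)$ compares it to all earlier shards) and (ii) $\algowiththres(\tilde H, \theta)$ has extra stopping opportunities at non-max shards, whose values lie strictly below their block maxima. No pointwise domination holds: there are coupled realizations in which $\mathcal{A}_{\mathrm{sharding}}$ stops prematurely at a $Y_i$ that the shard-level \bestsofar check would have vetoed, letting $\algowiththres(\tilde H, \theta)$ eventually claim the overall maximum. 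The heart of the argument is to pair such bad events against the symmetric ones in which $\algowiththres(\tilde H, \theta)$ commits prematurely to a small non-max shard---using that any non-max shard acceptable to $\algowiththres(\tilde H, \theta)$ lies strictly below its still-upcoming block maximum $Y_i$---and to argue via the coupling that the net balance favors $\mathcal{A}_{\mathrm{sharding}}$ in the expected-value sense for \MaxExp and in the probability-of-max sense for \MaxProb.
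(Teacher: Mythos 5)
Your plan diverges from the paper's at the definition of $\mathcal{A}_{\mathrm{sharding}}$, and this is exactly what creates the gap you admit at the end. You set $\mathcal{A}_{\mathrm{sharding}} = \algowiththres(\tilde F, \theta^k)$, evaluating the threshold at the \emph{original} arrival time $\tau_i$. The threshold identity $\tilde F^{1/k}(Y)>\theta(t)\iff\tilde F(Y)>\theta(t)^k$ is fine, but it only aligns the two criteria if the max shard of block $i$ sits at the same time $t$ in both instances. Pinning the max shard at $\tau_i$ while the $k-1$ phantom shards get fresh i.i.d.\ uniform times lets a phantom shard from a later block $i'>i$ with $Y_{i'}>Y_i$ slip in before $\tau_i$ and beat $Y_i$ in the shard-level best-so-far test even though $Y_i$ is best-so-far among the $n$ original values. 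That is the interleaving you observe, and it does kill pointwise domination.

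The paper avoids this not by fixing the coupling around a bi-criteria algorithm, but by changing the algorithm itself. $\mathcal{A}_{\mathrm{sharding}}$ internally re-samples $nk$ fresh uniform times, sorts them into $n$ \emph{contiguous} blocks $t'_{i1}<\cdots<t'_{ik}$, and evaluates $\theta$ at a uniformly random position $s_i\in\{t'_{i1},\ldots,t'_{ik}\}$ rather than at $\tau_i$. Two consequences follow. First, in the coupled $nk$-instance every shard of block $i'<i$ arrives before every shard of block $i$, so ``$x_i$ best-so-far among the $n$ values'' automatically forces ``$x'_{ij}=x_i$ best-so-far among all shards at time $s_i$''; the interleaving is impossible by construction and pointwise domination of accepted values follows in one line (and, in the other direction, if $\mathcal{A}_{\mathrm{sharding}}$ accepts nothing then so does $\algowiththres(\tilde F^{1/k},\theta)$). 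Second, the sampled $s_i$ has exactly the law of the argmax-shard's arrival time under this sorted coupling — a uniform mixture over the $i$-th block of $nk$ order statistics — which is \emph{not} the law of $\tau_i$ (the $i$-th of $n$ order statistics), so the paper's $\mathcal{A}_{\mathrm{sharding}}$ is genuinely not a reparametrization of $\algowiththres(\tilde F,\theta^k)$ at the original times. Your proposed pairing/symmetry repair for the failed pointwise comparison is only a sketch, and it is not clear it can be carried out: the premature-stop events you want to cancel against each other do not carry matching magnitudes for \MaxExp nor matching contributions to $\Pr[\ALG=\OPT]$ for \MaxProb. The clean fix is the paper's: abandon the bi-criteria form, let $\mathcal{A}_{\mathrm{sharding}}$ re-sample its own clock so the coupling has contiguous blocks, and evaluate $\theta$ at $s_i$.
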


By convention, $\tilde{F}^{\nicefrac{1}{k}}$ denotes the distribution defined by cdf $\tilde{F}^{\nicefrac{1}{k}}(\cdot)$.
Importantly, the maximum of $k$ values drawn i.i.d.\ from $\tilde{F}^{\nicefrac{1}{k}}$ follows distribution $\tilde{F}$.
We will construct an algorithm $\mathcal{A}_{\mathrm{sharding}}(\tilde{F}, \threshold)$ for $n$ values $x_1, x_2, \ldots, x_n$ to mimic the decision of algorithm $\mathcal{A}(\tilde{F}^{\nicefrac{1}{k}}, \threshold)$ with $nk$ values $x'_{11},\ldots,x'_{1k}, \ldots, x'_{n1},\ldots,x'_{nk}$ drawn i.i.d.\ from $\tilde{F}^{\nicefrac{1}{k}}$.

Na\"{i}vely, we would like to simulate algorithm $\mathcal{A}(\tilde{F}^{\nicefrac{1}{k}}, \threshold)$ by sampling $x'_{ij}$, $j \in [k]$, from $F^{\nicefrac{1}{k}}$ conditioned on the realization of $x_i$, and accept value $x_i$ if and only if one of the $x'_{ij}$'s is accepted by $\mathcal{A}(\tilde{F}^{\nicefrac{1}{k}}, \threshold)$.
This is the sharding technique by \citet{harbNewProphetInequalities2025} for the prophet inequality setting.
Unfortunately, the algorithm does not know $F$ in the problem of optimal stopping with a predicted prior, and thus cannot sample $x'_{ij}$ from $F^{\nicefrac{1}{k}}$.
Sampling $x'_{ij}$ based on the predicted prior $\tilde{F}$ and the corresponding $\tilde{F}^{\nicefrac{1}{k}}$ preserves the consistency of $\mathcal{A}(\tilde{F}^{\nicefrac{1}{k}}, \threshold)$, but not robustness.

We propose an implicit sharding method that partially simulates algorithm $\mathcal{A}(\tilde{F}^{\nicefrac{1}{k}}, \threshold)$ only at the arrival times of the $x_i = \max_{j \in [k]} x'_{ij}$.
We will show that this partial simulation is at least as good as the full simulation in the original sharding method.
In contrast to the sharding technique, the partial simulation does not rely on $F$ and will not explicitly construct the $x'_{ij}$'s, as they are only used in the analysis. Our algorithm is defined as follows.

\begin{tcolorbox}[title={Bi-criteria Algorithm with Implicit Sharding $\mathcal{A}_{\mathrm{sharding}}(\tilde{F}, \threshold)$}]
    \label{algo:algo-framework-new}
    \begin{enumerate}
        \item Sample $nk$ arrival times i.i.d.\ from $\mathrm{Unif}[0,1]$ and sort them as:\\[1ex]
        \hspace*{1.5in} 
        $t'_{11} < \cdots < t'_{1k} < \cdots < t'_{n1} < \cdots < t'_{nk}$~.
        \item For each $i \in [n]$, sample $s_i$ uniformly from $\{t'_{i1}, t'_{i2}, \ldots, t'_{ik}\}$.
        \item For each value $x_i$ when no value has been accepted, accept $x_i$ if
        \begin{enumerate}
            \item Value $x_i$ is a record; and
            \item The cdf value of $x_i$ w.r.t.\ $\tilde{F}^{\nicefrac{1}{k}}$ is greater than the threshold at time $s_i$, i.e., if\\[1ex]
            \hspace*{1.8in}
            $\prior^{\nicefrac{1}{k}}(x_i)> \threshold(s_{i}) ~.$
        \end{enumerate}
    \end{enumerate}
\end{tcolorbox}

Next, we design a coupling that gives a joint distribution preserving the marginal distributions of the random variables used in our algorithm $\mathcal{A}_{\mathrm{sharding}}(\tilde{F}, \threshold)$ and in the simulated algorithm $\algowiththres(\prior^{\nicefrac{1}{k}},\threshold)$, respectively.

\begin{definition}[Coupling]
\label{def:coupling-sharding}
Sample $nk$ i.i.d.\ values from $F^{\nicefrac{1}{k}}$. For each value, sample an arrival time i.i.d.\ from $\mathrm{Unif}[0,1]$. 
Then, we index the values by the ascending order of their arrival times. Denote the values by $\{x'_{ij}\}_{i\in[n],j\in[k]}$ and the sorted times by
\[
    t'_{11} < \cdots < t'_{1k} < \cdots < t'_{n1} < \cdots < t'_{nk}.
\]
For each $i$, define
\[
    x_i \coloneqq \max_{j\in[k]} x'_{ij},
    \qquad
    s_i \coloneqq t'_{i j_i},
    \quad \text{where } j_i\in\argmax_{j\in[k]} x'_{ij}.
\]
\end{definition}

Under this coupling, each $x_i$ has distribution $F$, and $\max_{i\in[n]} x_i = \max_{i\in[n],j\in[k]} x'_{ij}$ pointwise.
Also, because the shards in each block are i.i.d., the argmax index $j_i$ is uniform over $[k]$, so $s_i$ has the same distribution as Step 2 in $\mathcal{A}_{\mathrm{sharding}}$.

We stress that the algorithm does not rely on $F$ and will not explicitly construct the $x'_{ij}$'s. They are only used in the coupling for analysis.

\begin{proof}[Proof of \Cref{thm:max-exp-sharding}]
    We establish the \emph{pointwise value dominance} property: for any prior $F$, and any realization of the values and arrival times \emph{under the defined coupling}, the value accepted by algorithm $\mathcal{A}_{\mathrm{sharding}}(\tilde{F}, \theta)$ is greater than or equal to the value accepted by algorithm $\algowiththres(\prior^{\nicefrac{1}{k}},\threshold)$.

    Suppose algorithm $\mathcal{A}_{\mathrm{sharding}}(\tilde{F}, \theta)$ accepts value $x_i$, and $x'_{ij} = x_i$ is the corresponding value in the coupled instance for algorithm $\mathcal{A}(\tilde{F}^{\nicefrac{1}{k}}, \theta)$.
    Since $x_i$ is a record, $x'_{ij}$ is also a record because
    \[
        x'_{ij} = x_i \ge x'_{i'j'}, t'_{ij}=s_i \ge t'_{i'j'} \quad \text{for any } i' < i,\; j' \in [k]\text{ or } i'=i, \; j'\leq j.
    \]
    Further, the cdf value of $x'_{ij}$ is greater than the threshold at time $t'_{ij}$, as
    \[
        \tilde{F}^{\nicefrac{1}{k}}(x'_{ij}) = \tilde{F}^{\nicefrac{1}{k}}(x_i) > \theta(s_i) = \theta(t'_{ij})
        ~.
    \]
    Hence, algorithm $\mathcal{A}(\tilde{F}^{\nicefrac{1}{k}}, \theta)$ accepts either $x'_{ij} = x_i$, or a value that arrives earlier, which must be at most $x'_{ij} = x_i$ since $x_i$ is a record.
    
    By this pointwise value dominance, we obtain the following conclusion for each objective:
    \begin{enumerate}
        \item For \MaxExp: The conclusion follows directly by taking expectations over the coupling.
        
        \item For \MaxProb: The conclusion follows directly by taking expectations over all points in the coupling where $\mathcal{A}(\tilde{F}^{\nicefrac{1}{k}}, \theta)$ accepts the maximum value.
        
    \end{enumerate}
    
    This completes the proof.
\end{proof}

\section{Proof of Robustness}
\label{sec:robustness}

In \Cref{sec:algo} we stated Lemma~\ref{lem:pre-robustness}, which
bounds the robustness of $\algowiththres(\prior,\threshold)$ in our stochastic model. We prove it in this section.
The argument proceeds via a stronger ``fixed values, random arrivals'' model, also known as the \emph{game of Googol}~\cite{fergusonWhoSolvedSecretary1989,gnedin1994solution}, in which the values $x_{1}, x_{2}, \ldots, x_{n}$ are treated as deterministic and only the arrival order is random.

The key step is the following bound in the game-of-Googol setting.

\begin{proposition}
    \label{prop:robustness}
    For any predicted prior $\prior$, any threshold function $\threshold$, any $\lambda_1, \lambda_2$ with $0 \le \lambda_1 \le \lambda_2 \le 1$, and any realized values $x_1, x_2, \dots, x_n$, algorithm
    $\algowiththres(\prior, \restrictsmall{\threshold})$ accepts the maximum value with probability at least
    \[
        \min\{-\lambda_1 \ln \lambda_1, -\lambda_2 \ln \lambda_2\}
        ~,
    \]
    over the random arrivals of the values.
\end{proposition}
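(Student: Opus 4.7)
The plan is to bound the probability via a two-step reduction to the classical continuous-time Dynkin algorithm, then invoke the concavity of $\lambda \mapsto -\lambda \ln \lambda$.

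First I would reformulate the acceptance rule in terms of per-value cutoffs. Sort and relabel the realized values as $x_1 < x_2 < \cdots < x_n$ (using the continuous-perturbation convention from \Cref{subsec:assumption-time-and-tiebreak}), set $q_i \coloneqq \prior(x_i)$, and define
\[
    u_i \;\coloneqq\; \inf \big\{\, t \in [0,1] : \restrict{\theta}(t) < q_i \,\big\}.
\]
Because $\restrict{\theta}$ equals $1$ on $[0, \lambda_1]$, equals $0$ on $[\lambda_2, 1]$, and $q_i \in (0, 1)$, each $u_i \in [\lambda_1, \lambda_2]$; monotonicity of $\restrict{\theta}$ gives $u_1 \ge u_2 \ge \cdots \ge u_n$. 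By \Cref{prop:acceptance-rule}, the algorithm accepts the first value $x_i$ (in arrival order) that is \bestsofar at $t_i$ and satisfies $t_i > u_i$.

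Next I would show by a sample-path coupling that replacing $(u_1, \ldots, u_n)$ with the constant vector $(u_n, \ldots, u_n)$ never increases $\Pr[\text{accept } x_n]$. Fix all arrival times $t_1, \ldots, t_n$, and suppose the constant-cutoff algorithm accepts $x_n$: then $t_n > u_n$, and no $x_j$ with $j < n$ and $t_j \in (u_n, t_n)$ is \bestsofar. Since $u_j \ge u_n$ for $j < n$, the original algorithm's candidate set $\{j < n : t_j > u_j\}$ is contained in $\{j < n : t_j > u_n\}$, and the \bestsofar condition depends only on arrival times, not on the cutoffs. Hence the original algorithm also has no false acceptance before $t_n$ and therefore accepts $x_n$, yielding
\[
    \Pr\big[ \algowiththres(\prior, \restrict{\theta}) \text{ accepts } x_n \big]
    \;\ge\; \Pr\big[ \text{Dynkin with threshold } u_n \text{ accepts } x_n \big].
\]

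Finally, the constant-cutoff algorithm is exactly the continuous-time Dynkin algorithm with threshold $u_n$. Conditioning on $t_n = t > u_n$, the other arrival times are i.i.d.\ $\mathrm{Unif}[0,1]$, and by exchangeability the argmax of the non-maximum values arriving before $t$ has arrival time uniform in $[0, t]$; hence the probability of no false acceptance is at least $u_n/t$. Integrating gives $\Pr[\text{success}] \ge \int_{u_n}^1 (u_n/t)\, \dd t = -u_n \ln u_n$. Since $u \mapsto -u \ln u$ is concave on $[0,1]$ with maximum at $1/\ee$, its minimum on the sub-interval $[\lambda_1, \lambda_2]$ is attained at an endpoint, giving the claimed lower bound $\min\{-\lambda_1 \ln \lambda_1,\, -\lambda_2 \ln \lambda_2\}$. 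The main obstacle is getting the coupling step exactly right: one must argue that raising a non-maximum cutoff $u_j$ above $u_n$ only makes $x_j$ harder for the algorithm to accept without altering its \bestsofar status, so that constant cutoffs constitute the worst case for selecting $x_n$.
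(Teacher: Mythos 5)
Your proof is correct, and it arrives at the same endgame as the paper (reduction to Dynkin's algorithm with a transition point in $[\lambda_1, \lambda_2]$, followed by concavity of $-\lambda \ln \lambda$), but the reduction step is executed differently. The paper first establishes an explicit expression for the acceptance probability (\Cref{lem:googol-prob}), observes that this expression is non-decreasing in $\prior(x_n)$ and non-increasing in $\prior(x_i)$ for $i < n$, and concludes the worst case has all $\prior(x_i)$ equal, which makes the threshold effectively binary. You skip the explicit formula entirely and instead give a direct sample-path coupling: after reformulating the acceptance rule in terms of per-value cutoffs $u_i$, you fix arrival times and show event-wise that if Dynkin with constant cutoff $u_n$ accepts $x_n$, so does the bi-criteria algorithm, because the latter's candidate set $\{j : t_j > u_j\}$ shrinks relative to $\{j : t_j > u_n\}$ while best-so-far status is cutoff-independent. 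The coupling route is a bit more elementary and self-contained (no intermediate integral formula to derive and analyze), whereas the paper's route yields a reusable closed-form expression (\Cref{eqn:robust}) that also makes the $n$-dependence explicit. One small thing to spell out for rigor: your lower bound $u_n/t$ on the probability of no false acceptance given $t_n = t$ should be stated as holding conditionally on the number of arrivals in $[0,t)$, with the unconditional bound following because the no-arrival event gives probability $1 > u_n/t$; you allude to this but it is worth making explicit.
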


We now derive Lemma~\ref{lem:pre-robustness} from Proposition~\ref{prop:robustness}.

\begin{proof}[Proof of \Cref{lem:pre-robustness}]
    Fix a predicted prior $\prior$ and a threshold function $\threshold$. By definition, for $\lambda_1 = \sup \{ 0 \le t \le 1 : \threshold(t) = 1 \}$ and $\lambda_2 = \inf \{ 0 \le t \le 1 : \threshold(t) = 0 \}$, we have $\threshold=\restrictsmall{\threshold}$. Thus, it suffices to show that $\algowiththres(\prior, \restrictsmall{\threshold})$ is $\min\{-\lambda_1 \ln \lambda_1, -\lambda_2 \ln \lambda_2\}$-robust.
    
    By the law of total probability, we have
    \[
        \Pr[\ALG = \OPT ] 
        ~=~ \E_{x_1, x_2, \ldots, x_n} \big[ \Pr_{t_1, t_2, \ldots, t_n} \big[\, \ALG = \OPT \mid x_1, x_2, \ldots, x_n \,\big] \big] 
        ~.
    \]
    Applying Proposition~\ref{prop:robustness}, the inner probability is at least $\min\{-\lambda_1 \ln \lambda_1, -\lambda_2 \ln \lambda_2\}$ for every realization, so the expectation is at least the same quantity. This proves robustness for \MaxProb.
    Moreover, it implies $\E [\ALG] \ge \min\{-\lambda_1 \ln \lambda_1, -\lambda_2 \ln \lambda_2\} \cdot \E[\OPT]$, since for any realization of the values, the accepted value equals the maximum value with the same probability as in the \MaxProb case. This proves robustness for \MaxExp as well.
\end{proof}

Therefore, it remains to prove Proposition~\ref{prop:robustness}.
We first derive an explicit expression for the probability that algorithm $\algowiththres(\prior,\threshold)$ accepts the maximum value.
Without loss of generality (by relabeling the distinct realized values), assume $x_1 < x_2 < \dots < x_n$, so $\OPT = x_n$.

\begin{lemma}
\label{lem:googol-prob}
    For any $n\in \mathbb{N}$, any predicted prior $\prior$, any threshold function $\threshold$, any $\lambda_1, \lambda_2$ with $0 \le \lambda_1 \le \lambda_2 \le 1$, and any realized values $x_1 < x_2 < \dots < x_n$, algorithm $\algowiththres\big(\prior, \restrictsmall{\threshold}\big)$ accepts the maximum value $x_n$ with probability
    \begin{equation}
        \label{eqn:robust}
        \int_{0}^{1}\Ind{\prior(x_n) > \threshold(t)}
        \bigg((1-t)^{n-1} + \sum_{i=1}^{n-1} (1-t)^{n-1-i}
        \int_{0}^{t}\Ind{\prior(x_i) \leq \threshold(s)} \dd s \bigg) \dd t
        ~.
    \end{equation}
\end{lemma}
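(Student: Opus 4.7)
The proof is a direct computation based on the characterization of acceptance given by \Cref{prop:no-acceptance} and \Cref{prop:acceptance-rule}, combined with conditioning on the arrival time of the maximum value. Since we are in the game-of-Googol setting where the $x_i$ are fixed, the only randomness is in the arrival times $t_1, \dots, t_n$, which are i.i.d.\ uniform on $[0,1]$. The idea is to compute the acceptance probability of $x_n$ by conditioning on $t_n = t$ and then on which $x_i$ (if any) is the prefix maximum before time $t$.

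\textbf{Step 1: characterize when the algorithm accepts $x_n$.} By \Cref{prop:acceptance-rule}, $x_n$ is accepted at its arrival time $t_n = t$ if and only if: (i) $x_n$ is \bestsofar, which always holds because $x_n$ is the maximum value; (ii) $\prior(x_n) > \restrictsmall{\threshold}(t)$; and (iii) no value has been accepted before time $t$. By \Cref{prop:no-acceptance}, condition (iii) is equivalent to saying that the prefix maximum $y$ before time $t$ and its arrival time $s$ satisfy $\prior(y) \le \restrictsmall{\threshold}(s)$ (with the convention that the condition holds vacuously if no value arrives before $t$).

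\textbf{Step 2: case analysis on the prefix maximum.} Conditioning on $t_n = t$, the other arrival times $t_1, \dots, t_{n-1}$ remain i.i.d.\ uniform on $[0,1]$. I partition the event on which index $i \in \{0, 1, \dots, n-1\}$ (where $i = 0$ means no prior arrivals) gives the prefix maximum before $t$. Using the ordering $x_1 < x_2 < \cdots < x_{n-1} < x_n$, index $i \ge 1$ is the prefix maximum before $t$ iff $t_i < t$ and $t_j > t$ for all $j \in \{i+1, \dots, n-1\}$, while arrivals of $x_1, \dots, x_{i-1}$ are unrestricted. The case $i = 0$ contributes a factor $(1-t)^{n-1}$.

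\textbf{Step 3: integrate.} For each $i \in [n-1]$, the indices $j > i$ in $[n-1]$ contribute $(1-t)^{n-1-i}$, the indices $j < i$ contribute $1$, and conditional on $t_i < t$ the arrival time $t_i$ has density $1$ on $[0,t]$. The condition $\prior(x_i) \le \restrictsmall{\threshold}(t_i)$ then contributes a factor $\int_0^t \Ind{\prior(x_i) \le \restrictsmall{\threshold}(s)} \dd s$. Putting this together with the indicator $\Ind{\prior(x_n) > \restrictsmall{\threshold}(t)}$ from condition (ii), and integrating over $t \in [0,1]$ (the density of $t_n$), yields precisely
\[
    \int_{0}^{1}\Ind{\prior(x_n) > \restrictsmall{\threshold}(t)}
    \bigg((1-t)^{n-1} + \sum_{i=1}^{n-1} (1-t)^{n-1-i}
    \int_{0}^{t}\Ind{\prior(x_i) \leq \restrictsmall{\threshold}(s)} \dd s \bigg) \dd t,
\]
which matches the stated expression.

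\textbf{Main obstacle.} There is no substantial difficulty; the proof is essentially combinatorial bookkeeping. The only point requiring care is the case analysis on the prefix maximum: one must correctly identify that, given the ordering of the $x_i$, index $i$ being the prefix maximum imposes constraints only on arrival times $t_j$ with $j > i$ (within $[n-1]$), while leaving the smaller-index arrivals free. Getting the exponent $n-1-i$ right and handling the boundary case $i = 0$ separately as the leading $(1-t)^{n-1}$ term is the only place where one could slip up.
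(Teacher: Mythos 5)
Your proof is correct and follows essentially the same route as the paper's: condition on $t_n = t$, enumerate which $x_i$ (or none) is the prefix maximum before $t$, compute $(1-t)^{n-1-i}$ from the larger indices arriving after $t$, and integrate the threshold indicator over the arrival time of the prefix maximum. One small verbal slip: in Step 3 you say that ``conditional on $t_i<t$ the arrival time $t_i$ has density $1$ on $[0,t]$'' --- the conditional density would be $1/t$; what the computation actually uses is the unconditional density $1$ restricted to $[0,t]$, which does give the factor $\int_0^t \Ind{\prior(x_i)\le\restrictsmall{\threshold}(s)}\dd s$, so the formula and conclusion are unaffected.
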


\begin{proof}
    Since $x_n$ is always a record, the algorithm accepts $x_n$ if and only if 1) its cdf value under $\prior$ is greater than the threshold, i.e., $\prior(x_n) > \threshold(t_n)$, and 2) the algorithm accepts no value before $t_n$.
    By~\Cref{prop:no-acceptance}, the latter is equivalent to the event that the prefix maximum for time $t_n$ has cdf under $\prior$ at most the threshold at its arrival time.
    Hence, the acceptance probability is
    \[
        \int_{0}^{1} \Ind{\prior(x_n) > \threshold(t_n)} \Pr \big[\, \text{prefix maximum for time $t_n$ is below threshold} \,\big] \dd t_n~.
    \]
    
    To compute the expression for the probability in the integrand, consider each possible prefix maximum separately.
    First, the prefix maximum is $0$ if no value arrives before $t_n$, which occurs with probability $(1-t_n)^{n-1}$.
    Further, the prefix maximum is $x_i$ if $x_i$ arrives at time $t_i \in [0, t_n)$, and the larger values $x_{i+1}, x_{i+2}, \ldots, x_{n-1}$ arrive after time $t_n$, which happens with probability $(1-t_n)^{n-1-i}$.
    In this case, we also require $\prior(x_i)\le \threshold(t_i)$ for the arrival time $t_i\in[0,t_n)$, contributing
    \[
        (1-t_n)^{n-1-i}
            \int_{0}^{t_n} \Ind{\prior(x_i) \leq \threshold(t_i)} \dd t_i
            ~.
    \]
    Putting everything together and changing variables $t_n$ to $t$ and $t_i$ to $s$, the lemma follows.
\end{proof}

We are now ready to prove \Cref{prop:robustness}.

\begin{proof}[Proof of \Cref{prop:robustness}]
By \Cref{lem:googol-prob}, it suffices to show that the expression in \Cref{eqn:robust} is at least $\min \{ - \lambda_1 \ln \lambda_1, - \lambda_2 \ln \lambda_2 \}$ for any predicted prior $\prior$ and threshold function $\threshold$.

First, we fix $\threshold$ and characterize the worst-case predicted prior $\prior$.
\Cref{eqn:robust} depends on $\prior$ through $n$ values $0 \le \prior(x_1) \le \prior(x_2) \le \cdots \le \prior(x_n) \le 1$.
Further, it is non-decreasing in $\prior(x_n)$, and non-increasing in $\prior(x_1), \prior(x_2), \ldots, \prior(x_{n-1})$.
Therefore, given any fixed threshold function $\threshold$, \Cref{eqn:robust} is minimized when all $\prior(x_i)$ are equal to some $q \in [0, 1]$.

In this case, the threshold function is effectively binary. 
A threshold $\threshold(t) < q$ is equivalent to $\threshold(t) = 0$, as $\prior(x_1), \prior(x_2), \ldots, \prior(x_n)$ all surpass the threshold. 
Similarly, a threshold $\threshold(t) \ge q$ is equivalent to $\threshold(t) = 1$, as none surpasses it.
Hence, we may without loss of generality consider a binary threshold, such that for some $\lambda \in [\lambda_1, \lambda_2]$, it holds
\[
    \threshold(t) = 
    \begin{cases}
        1, & \text{if } t \le \lambda~; \\
        0, & \text{otherwise~.}
    \end{cases}
\]

This corresponds to Dynkin's algorithm with a suboptimal transition point of $\lambda$ instead of the optimal $\nicefrac{1}{\ee}$.
This algorithm is $(- \lambda \ln \lambda)$-competitive (e.g., \citet{feldmanImprovedCompetitiveRatios2011}).
Since $- \lambda \ln \lambda$ is concave in $\lambda$ on $[0,1]$, its minimum over $[\lambda_1,\lambda_2]$ is attained at an endpoint, yielding the bound $\min\{-\lambda_1 \ln \lambda_1, -\lambda_2 \ln \lambda_2\}$.
\end{proof}

\section{Algorithm for \MaxExp: Proof of \Cref{thm:max-exp-consistency-robustness}}
\label{sec:algo-max-expect}

In this section we prove~\Cref{thm:max-exp-consistency-robustness}. 
We fix a $\beta \in [0,\nicefrac{1}{e}]$, and fix $0\le \lambda_1 \le \lambda_2\le 1$ as the two roots of $- \lambda \ln \lambda = \beta$.
Once there exists a non-increasing function $\bar{\threshold} \colon [\lambda_1,\lambda_2] \to [0, 1]$ that satisfies Equation~\eqref{eqn:thres-max-expect}, with a little abuse of notation, we extend the definition of $\bar{\threshold}$ to the entire interval $[0,1]$ by setting $\bar{\threshold}(t)=1$ for $t<\lambda_1$ and $\bar{\threshold}(t)=0$ for $t> \lambda_2$. 

By~\Cref{thm:max-exp-sharding}, it suffices to construct an algorithm that is both $\alpha$-consistent and $\beta$-robust as the number of values $n$ tends to infinity.
We show that $\algowiththres(\prior,\bar{\threshold}^{\nicefrac{1}{n}})$ satisfies these guarantees.
The $\beta$-robustness directly follows from \Cref{lem:pre-robustness}, since $\bar{\threshold}^{\nicefrac{1}{n}}=\restrictsmall{\bar{\threshold}^{\nicefrac{1}{n}}}$ by the extension above.

The rest of the section focuses on the consistency guarantee.
Since the predicted prior is correct, i.e., $\tilde{F} = F$, we will write it as $F$ throughout the section.
The remaining task is to show that $\algowiththres(F,\bar{\threshold}^{\nicefrac{1}{n}})$ is $\alpha$-competitive as $n \to \infty$.

As a standard approach in competitive analysis, we show that for any $\ell \ge 0$, it holds
\begin{equation}
    \label{eqn:consistency-per-quantile}
    \Pr \big[\, \ALG \ge \ell \,\big] 
    ~\ge~ 
    \alpha \cdot \Pr \big[\, \OPT \ge \ell \,\big]
    ~=~ \alpha \cdot \big( 1 - F(\ell)^n \big)
    ~,
\end{equation}
i.e., $\ALG$ stochastically dominates $\alpha \cdot \OPT$. This implies the desired consistency
\[
    \E \big[\, \ALG \,\big] 
    ~=~ 
    \int_0^\infty \Pr \big[\, \ALG \ge \ell \,\big] \dd \ell
    ~\ge~
    \alpha \cdot \int_0^\infty \Pr \big[\, \OPT \ge \ell \,\big] \dd \ell
    ~=~
    \alpha \cdot \E \big[\, \OPT \,\big] 
    ~.
\]

\begin{remark}
    The algorithm designed in this section does not use the theoretically optimal threshold $\restrictsmall{\maxexpoptthres}$ derived in \Cref{subsec:optimal-robust-thresholds};
    that threshold maximizes consistency among all algorithms with robust thresholds of the form $\restrictsmall{\threshold}$.
    For the \MaxExp objective, $\maxexpoptthres$ is given by $\maxexpoptthres(t)=F(x^*(t))$, where $x^*(t)$ is the unique solution of the Bellman equation (see \Cref{eq:max-exp-back-induction-app}), so $\maxexpoptthres$ depends on the prior distribution $F$ through quantities such as the conditional expectation $\E_{y\sim F}[y\mid y\ge x]$ and the quantile $F(x)$.
    As a result, the consistency guarantees $\Pr[\ALG\ge\ell]$ under $\maxexpoptthres$ are expressed in terms of these prior-dependent objects.
    In this section we instead work with a suboptimal but analytically more tractable threshold that is independent of the prior $F$, and we analyze its consistency relative to $\maxexpoptthres$ for each fixed prior.
\end{remark}

The proof consists of two parts. First, we simplify \Cref{eqn:consistency-per-quantile} to give a sufficient condition that $\bar{\threshold}$ needs to satisfy. Then, we verify that $\bar{\threshold}$ satisfies these conditions.

\subsection{Sufficient Conditions for Infinitely Many Values}

First, to simplify the left-hand side of \Cref{eqn:consistency-per-quantile}, we use the following lemma to calculate the probability a bi-criteria algorithm  $\algowiththres(F,\threshold^{\nicefrac{1}{n}})$ accepts a value that is at least $\ell$.

\begin{lemma}
    \label{lem:max_exp-alg-prob}
    For any threshold function $\threshold$, any number of values $n$, and any $\ell \geq 0$, algorithm $\algowiththres\big(F, \threshold^{\nicefrac{1}{n}} \big)$ satisfies that
        \begin{equation*}
        \label{eq:max_exp-algo-prob}
        \Pr \big[\, \ALG \ge \ell \,\big]
        ~=~
            \int_{F(\ell)^n}^{1}\int_{\threshold^{-1}(q)}^{1}\int_{0}^{t}
            \frac{1}{t} \Big( 1 - t + t \cdot  \min \big\{ \threshold(s), q \big\}^{\nicefrac{1}{n}} \Big)^{n-1} q^{-\frac{n-1}{n}}
            \dd s \dd t \dd q
        ~,
        \end{equation*}
    where $\threshold^{-1}(x)\coloneqq \inf\{t\in[0,1]\colon \threshold(t) < x\}$ is the generalized inverse of $\threshold$, defined as the boundary above which $\threshold(\cdot)$ is strictly smaller than $x$.
\end{lemma}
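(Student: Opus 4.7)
The plan is to compute $\Pr[\ALG \ge \ell]$ by directly enumerating which value the algorithm accepts, then conditioning on its realization and arrival time. By exchangeability of the $n$ i.i.d.\ values,
\[
    \Pr[\ALG \ge \ell] = n \cdot \Pr[\stoptime = 1,\, x_1 \ge \ell]
    ~,
\]
so I would condition on $x_1 = x$ (letting $q \coloneqq F(x)^n$) and $t_1 = t$. By \Cref{prop:acceptance-rule} and \Cref{prop:no-acceptance}, the algorithm accepts $x_1$ if and only if (i) $F(x) > \theta^{\nicefrac{1}{n}}(t)$, equivalently $t > \theta^{-1}(q)$; (ii) $x_1$ is \bestsofar at $t$; and (iii) whenever at least one of the other values arrives before $t$, the prefix maximum $y$ for time $t$ together with its arrival time $s$ satisfies $F(y) \le \theta^{\nicefrac{1}{n}}(s)$.

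The core computation is the joint conditional probability of (ii) and (iii) given $(x, t)$. I would parameterize the other $n-1$ values through the pair $(y, s)$ and condition on $K$, the number of these values with arrival time below $t$, which is $\Bin(n-1, t)$. The key observation is that given $K = k \ge 1$, the pair $(y, s)$ is independent: $s$ is uniform on $[0, t]$ by symmetry over which value realizes the maximum, while $y$ is distributed as the maximum of $k$ i.i.d.\ samples from $F$, so $\Pr[F(y) \le v] = v^k$ for any $v \in [0, 1]$. Combining (ii) and (iii) gives the constraint $F(y) \le \min\{q, \theta(s)\}^{\nicefrac{1}{n}}$, leading to conditional probability $\int_0^t \frac{\min\{q, \theta(s)\}^{k/n}}{t} \dd s$. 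Summing over $k$ with weights $\binom{n-1}{k} t^k (1-t)^{n-1-k}$ and absorbing the $K = 0$ case $(1-t)^{n-1}$ into $\int_0^t \frac{(1-t)^{n-1}}{t} \dd s$, the binomial theorem collapses the inner sum into
\[
    \int_0^t \frac{1}{t} \big( 1 - t + t \cdot \min\{q, \theta(s)\}^{\nicefrac{1}{n}} \big)^{n-1} \dd s
    ~.
\]

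To finish, I would plug this back, enforce condition (i) through the lower limit $\theta^{-1}(q)$ of the $t$-integration, and change variables from $x$ to $q = F(x)^n$. Since $\dd q = n F(x)^{n-1} f(x) \dd x$, the factor $n \cdot f(x) \dd x$ becomes $q^{-(n-1)/n} \dd q$, and the domain $x \in [\ell, \infty)$ maps to $q \in [F(\ell)^n, 1]$, yielding exactly the claimed expression.

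The main obstacle is setting up the prefix-maximum distribution uniformly in $K$ so that the $K = 0$ case, where condition (iii) is vacuous, merges smoothly with the $K \ge 1$ cases into a single integral over $s$; this is what allows the binomial theorem to collapse the entire sum into the clean $(1 - t + t \cdot \min\{q, \theta(s)\}^{\nicefrac{1}{n}})^{n-1}$ form rather than leaving an artificial boundary term.
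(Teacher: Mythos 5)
Your proof is correct and follows essentially the same route as the paper: reduce by exchangeability to $n\cdot\Pr[\stoptime = 1, x_1 \ge \ell]$, condition on $(x_1, t_1)$ together with an auxiliary uniform arrival time $s$ of the prefix maximum (artificially sampled when no earlier value arrives), and change variables to $q = F(x_1)^n$. The only stylistic difference is that you spell out the intermediate step of conditioning on $K \sim \Bin(n-1,t)$, exploiting the independence of $(y,s)$ given $K$, and collapsing via the binomial theorem, whereas the paper writes the resulting product $(1 - t + t\min\{\theta(s), q\}^{\nicefrac{1}{n}})^{n-1}$ directly by arguing per value.
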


\begin{proof}
Consider any value $x_i$ arriving at time $t_i$, with the prefix maximum for time $t_i$ being $y$ and arriving at time $s$.
By~\Cref{prop:acceptance-rule}, the algorithm $\algowiththres\big(F, \threshold^{\nicefrac{1}{n}} \big)$ accepts $x_i$ if and only if
\[
    F(x_i)> \threshold(t_i)^{\nicefrac{1}{n}}
    ~,\quad 
    x_i\geq y
    ~,\quad 
    F(y) \leq \threshold(s)^{\nicefrac{1}{n}}
    ~.
\]
We consider the probability that $x_i$ is at least $\ell$ and accepted by the algorithm, enumerating over all triples $(x_i, t_i, s)$.
First, value $x_i$ follows distribution $F$ and satisfies $x_i\ge \ell$.
Further, its arrival time $t_i$ is uniform over $[0,1]$, and satisfies $F(x_i) > \threshold(t_i)^{\nicefrac{1}{n}}$, or equivalently, $F(x_i)^n > \threshold(t_i)$.
Finally, given $x_i$ and $t_i$, the arrival time $s$ of the prefix maximum is uniformly distributed over $[0,t_i]$. 
We may artificially sample $s$ from this distribution even if no values arrive before $t_i$. 
In sum, value $x_i$ is at least $\ell$ and accepted by the algorithm with probability
\[
    \int_{x_i= \ell}^{x_i = \infty} \int_{\threshold^{-1}(F(x_i)^n)}^{1} \int_{0}^{t_i}
    \frac{1}{t_i}\,
    \Pr \big[\, y\leq x_i \,\,,\, F(y)\leq \threshold(s)^{\nicefrac{1}{n}} \mid x_i,t_i,s \,\big]
    \dd s \dd t_i \dd F(x_i)
    ~.
\]
For each of the $n-1$ values other than $x_i$, either it arrives after time $t_i$, in which case there is no constraint on the realized value, or it arrives before $t_i$, in which case the cdf of the realized value is at most $\min \{ \threshold(s)^{\nicefrac{1}{n}} , F(x_i) \}$ by the upper bounds for $y$.
Thus,
\[
    \Pr \big[\, y\leq x_i \,\,,\, F(y)\leq \threshold(s)^{\nicefrac{1}{n}} \mid x_i,t_i,s \,\big] 
    ~=~
    \big(\,  1-t_i +t_i \min \{\threshold(s)^{\nicefrac{1}{n}},F(x_i)\}\,\big)^{n-1}
    ~.
\]
Putting together and changing variables with $t=t_i$ and $q=F(x_i)^n$, the above probability is
\[
    \frac{1}{n} \int_{F(\ell)^n}^{1}\int_{\threshold^{-1}(q)}^{1}\int_{0}^{t}
    \frac{1}{t} \Big( 1 - t + t \cdot  \min \big\{ \threshold(s), q \big\}^{\nicefrac{1}{n}} \Big)^{n-1} q^{-\frac{n-1}{n}}
    \dd s \dd t \dd q
    ~.
\]
Multiplying by $n$ by symmetry across all $x_i$'s proves the lemma.
\end{proof}

Now, focus on algorithm $\algowiththres(F,\bar{\threshold}^{\nicefrac{1}{n}})$. For any $y \in [0, 1]$, if we let $n\to \infty$ while keeping the same $F(\ell)^n = y$, the desired Inequality~\eqref{eqn:consistency-per-quantile} becomes
\[
    \int_{y}^1 \int_{\bar{\threshold}^{-1}(q)}^1 \int_0^t \frac{1}{t} \cdot \frac{\min \{ \bar{\threshold}(s), q \}^{t}}{q} \dd s \dd t \dd q
    ~\ge~ 
    \alpha \cdot ( 1 - y )
    ~,
\]    
where we use the fact that $\big( 1 - t + t z^{\frac{1}{n}} \big)^{n-1} \to z^t$ as $n$ approaches infinity.
\footnote{Rigorously, suppose we have $\alpha^*$-consistency in the limit when $n\to \infty$ and consider any $\varepsilon > 0$.
By the uniform convergence of the stated limit for $q \in [\varepsilon, 1]$, we can apply implicit sharding to ensure \Cref{eqn:consistency-per-quantile} for $\alpha = \alpha^* ( 1- \varepsilon)$ and any $q = F(\ell)^n \in [\varepsilon, 1]$.
Finally, \Cref{eqn:consistency-per-quantile} holds for $\alpha^* (1 - \varepsilon)^2$ when $q = F(\ell)^n \in [0, \varepsilon]$ because the left-hand-side is decreasing in $q$, while the right-hand-side differs by at most the $1-\varepsilon$ factor.}

Define the function with support set $[0, 1]$ as
\[
    g(q) ~\coloneqq~ \int_{\bar{\threshold}^{-1}(q)}^1 \int_0^t \frac{1}{t} \cdot \frac{\min \big\{ \bar{\threshold}(s), q \big\}^{t}}{q} \dd s \dd t
    ~.
\]
Then, it suffices to show that for any $q \in [0, 1]$, it holds
\[
    g(q)
    ~\ge~ 
    \alpha
    ~.
\]

\subsection{Verifying the Sufficient Condition}

Recall the assumed properties of $\bar{\threshold}$ by Equation~\eqref{eqn:thres-max-expect}:
\begin{enumerate}[label=\textit{\arabic*)}]
    \item $\bar{\threshold}$ is left-continuous, which implies $\bar{\threshold}(\lambda_1) = 1$, and non-increasing; and \label{property:left-continuous}
    \item For any $z \in [\lambda_1, \lambda_2]$, we have
    \[
        \int_z^1 \int_0^t \frac{1}{t} \cdot  \bar{\threshold} ( \max \{s, z \} )^t \dd s \dd t ~\ge~ 
        \alpha \cdot \bar{\threshold}(z)
        ~.
    \]
    \label{property:lower-bound}
\end{enumerate}

Consider any $q$ with $\bar{\threshold}(\lambda_2) \le q \le \bar{\threshold}(\lambda_1) = 1$.
Let $z = \bar{\threshold}^{-1}(q)$, and then by property \ref{property:lower-bound}, we have
\begin{align*}
    \alpha 
    &~\le~ 
    \int_{\bar{\threshold}^{-1}(q)}^1 \int_0^t \frac{1}{t} \cdot \frac{\bar{\threshold} ( \max \{s, \bar{\threshold}^{-1}(q) \} )^t}{\bar{\threshold}(\bar{\threshold}^{-1}(q))} \dd s \dd t \\
    &~=~ 
    \int_{\bar{\threshold}^{-1}(q)}^1 \int_0^t \frac{1}{t} \cdot \frac{\min \{ \bar{\threshold}(s), \bar{\threshold}(\bar{\threshold}^{-1}(q)) \}^{t}}{\bar{\threshold}(\bar{\threshold}^{-1}(q))} \dd s \dd t~.
\end{align*}
By left-continuity of $\bar{\threshold}$ and the definition of $\bar{\threshold}^{-1}$, we have $\bar{\threshold}(\bar{\threshold}^{-1}(q)) \ge q$.
Therefore,
\[
    \alpha 
    ~\le~
    \int_{\bar{\threshold}^{-1}(q)}^1 \int_0^t \frac{1}{t} \cdot  \frac{\min \{ \bar{\threshold}(s), q \}^t}{q} \dd s \dd t
    ~=~ g(q)~.
\]
Finally, we verify $g(q) \ge \alpha$ for $q \in [0, \bar{\threshold}(\lambda_2))$ by reducing it to the first case.

\begin{lemma}
    \label{lem:max-exp-g-left}
    For any $q \in [0, \bar{\threshold}(\lambda_2)]$, we have
    \[
        g(q) \ge g\big(\bar{\threshold}(\lambda_2)\big)~.
    \]
\end{lemma}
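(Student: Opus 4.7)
The plan is to reduce the inequality to an explicit monotonicity statement. The key observation is that for $q \le \theta(\lambda_2)$, the generalized inverse $\theta^{-1}(q)$ is pinned at $\lambda_2$, so the outer integral in $g(q)$ starts at $\lambda_2$ regardless of $q$.

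First, I will pin down $\theta^{-1}(q)$ for $q \in (0, \theta(\lambda_2)]$. Since $\theta$ is non-increasing with $\theta(t) = 0$ for $t \in (\lambda_2, 1]$, we have $\theta(t) \ge \theta(\lambda_2) \ge q$ for $t \le \lambda_2$, and $\theta(t) = 0 < q$ for $t > \lambda_2$. Therefore $\{t : \theta(t) < q\} = (\lambda_2, 1]$ and $\theta^{-1}(q) = \lambda_2$.

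Next, I will simplify the integrand. For $s \le \lambda_2$, $\theta(s) \ge \theta(\lambda_2) \ge q$, so $\min\{\theta(s), q\} = q$. For $s > \lambda_2$, $\theta(s) = 0$, so $\min\{\theta(s), q\}^t = 0$ contributes nothing to the integral. The inner integration over $s \in [0, t]$ therefore reduces to $s \in [0, \lambda_2]$, which has length $\lambda_2$ since $t \ge \lambda_2$ in the outer integral. This yields the clean closed form
\[
g(q) \;=\; \lambda_2 \int_{\lambda_2}^{1} \frac{q^{t-1}}{t} \, \dd t ~.
\]

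Finally, I will establish monotonicity. For every $t$ in the integration range $[\lambda_2, 1]$, the exponent $t-1 \le 0$, so $q \mapsto q^{t-1}$ is non-increasing on $(0, \infty)$. Hence $g$ is non-increasing on $(0, \theta(\lambda_2)]$, giving $g(q) \ge g(\theta(\lambda_2))$ as desired. The case $q = 0$ is either irrelevant (since the consistency target $\Pr[\OPT \ge \ell]$ is handled via $q = F(\ell)^n > 0$ under our full-support assumption) or follows in the limit because the closed-form integral diverges to $+\infty$ as $q \to 0^+$. I do not expect any technical obstacle here: the whole argument is a computation made straightforward by the fact that $\theta^{-1}$ is constant on the relevant range and the exponent $t-1$ is non-positive.
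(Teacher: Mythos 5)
Your proof is correct and takes essentially the same route as the paper: pin $\theta^{-1}(q)=\lambda_2$, reduce the inner integral to $\lambda_2\, q^{t-1}/t$, and observe that the exponent $t-1\le 0$ makes the result non-increasing in $q$. The only difference is that you explicitly address the degenerate $q=0$ endpoint, which the paper leaves implicit.
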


\begin{proof}
    For any $q$ in the range $[0, \bar{\threshold}(\lambda_2)]$, we have $\bar{\threshold}^{-1}(q) = \lambda_2$.
    Also, observe that $\bar{\threshold}(s) \ge \bar{\threshold}(\lambda_2) \ge q$ for $s \in [0, \lambda_2]$, and $\bar{\threshold}(s) = 0$ for $s \in [\lambda_2, 1]$.
    Hence,
    \[
        g(q) = \int_{\lambda_2}^1 \int_0^{\lambda_2} \frac{1}{t} q^{-(1-t)} \dd s \dd t = \int_{\lambda_2}^1 \frac{\lambda_2}{t} q^{-(1-t)} \dd t~,
    \]
    which is decreasing in $q$, implying the desired inequality.
\end{proof}

\section{Algorithm for \MaxProb: Proof of \Cref{thm:max-prob-consistency-robustness}}
\label{sec:algo-MaxProb}

This section proves~\Cref{thm:max-prob-consistency-robustness} by establishing the optimal consistency-robustness trade-off for the \MaxProb objective.
Recall that $\maxproboptthres$ is the threshold function used by the continuous-time Gilbert-Mosteller algorithm (\Cref{example:gilbert-mosteller}), where for each $t \in [0,1]$, the value $\maxproboptthres(t)$ is the unique solution $q$ to
\begin{equation}
    \label{eqn:max-prob-threshold-recall}
    \int_t^1 \frac{(1 - s + s q)^ {n-1} - q^{n - 1}}{1 - s} \dd s = q^{n-1}
    ~.
\end{equation}

By the Implicit Sharding technique (\Cref{thm:max-exp-sharding}), for any $\beta \in [0,\nicefrac{1}{e}]$, we can focus on the trade-off of the bi-criteria algorithm $\algowiththres(\prior,\restrictsmall{\maxproboptthres})$ in the limiting case when $n \to \infty$, where $\lambda_1,\lambda_2$ are the two roots of $- \lambda \ln \lambda = \beta$ with $\lambda_1 \le \lambda_2$.

Then, we show that this trade-off is tight within the bi-criteria framework by showing that no algorithm in the form of $\algowiththres(\prior,\threshold)$ can achieve a better consistency-robustness trade-off.

\subsection{Asymptotic Performance Analysis}
With \Cref{lem:pre-robustness}, the $\beta$-robustness guarantee of $\algowiththres(\prior, \restrict{\maxproboptthres})$ follows directly. It remains to show that the algorithm is $\alpha$-consistent as $n\to \infty$, where
\begin{equation*}
        \alpha = \beta  + \int_{\lambda_1}^{\lambda_2} \int_s^1 \,
        \frac{\ee^{-\frac{\constint t}{1-s}}}{t} \dd t \dd s
    ~,
\end{equation*}
To this end, we first derive an explicit expression for the probability that a bi-criteria algorithm accepts the maximum value under a correct prior for finite $n$.

\begin{lemma}
    \label{lem:algo-win-prob}
    For any $n \in \mathbb{N}$, any prior $F$, and any threshold function $\threshold$, algorithm $\algowiththres(\prior = F, \threshold)$ accepts the maximum value with probability
    \begin{equation*}
        \Gamma_n(\threshold) \coloneq \int_{0}^{1} \bigg( \int_{s}^1 \frac{(1-t+t\threshold(s))^n - t\threshold(s)^n}{t(1-t)} \dd t \; -\;\threshold(s)^n \bigg) \dd s \,.
    \end{equation*}
\end{lemma}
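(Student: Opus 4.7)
The plan is to invoke \Cref{lem:googol-prob} for each realization of the values and then take expectation over the prior. Fix any realization and, following the convention of \Cref{lem:googol-prob}, relabel the values so that $x_1 < x_2 < \dots < x_n$; since $\prior = F$, the substitution $U_j \coloneqq F(x_j)$ turns the $n$ sorted values into the order statistics $U_{(1)} \le \cdots \le U_{(n)}$ of $n$ i.i.d.\ $\mathrm{Unif}[0,1]$ random variables. Applying \Cref{lem:googol-prob} and taking expectations decomposes $\Pr[\ALG = \OPT]$ into the sum of
\[
    T_1 \coloneqq \int_0^1 (1-t)^{n-1} \Pr \big[\, U_{(n)} > \threshold(t) \,\big] \dd t
\]
and
\[
    T_2 \coloneqq \sum_{i=1}^{n-1} \int_0^1 (1-t)^{n-1-i} \int_0^t \Pr \big[\, U_{(n)} > \threshold(t),\, U_{(i)} \le \threshold(s) \,\big] \dd s \dd t ~.
\]

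Next, we would evaluate these pieces using standard order-statistic identities. The marginal $\Pr[U_{(n)} > \threshold(t)] = 1 - \threshold(t)^n$ makes $T_1$ immediate. For the joint probability in $T_2$, since $s \le t$ and $\threshold$ is non-increasing we have $\threshold(s) \ge \threshold(t)$, so $\{U_{(n)} \le \threshold(t)\} \subseteq \{U_{(i)} \le \threshold(s)\}$, which gives
\[
    \Pr \big[\, U_{(n)} > \threshold(t),\, U_{(i)} \le \threshold(s) \,\big] = \sum_{k=i}^n \binom{n}{k} \threshold(s)^k (1 - \threshold(s))^{n-k} - \threshold(t)^n ~.
\]
The central algebraic step is to swap the $i$-sum with the $k$-sum in $T_2$ and evaluate the inner geometric sum $\sum_{i=1}^{\min\{k,n-1\}} (1-t)^{n-1-i}$ in closed form. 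Collecting terms and recognising the binomial expansion $\sum_{k=0}^n \binom{n}{k}\threshold(s)^k ((1-\threshold(s))(1-t))^{n-k} = (1-t+t\threshold(s))^n$, with the boundary $k = 0$ and $k = n$ contributions separated, yields a clean closed form for the inner integrand of $T_2$.

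Finally, we would swap the order of integration in the $(s,t)$-plane, turning $\int_0^1 \int_0^t \dd s \dd t$ into $\int_0^1 \int_s^1 \dd t \dd s$, and verify that the residual terms cancel pairwise across $T_1$, $T_2$, and the subtracted $\threshold(t)^n$ piece in $T_2$: the $(1-t)^{n-1}$ boundary terms annihilate, and the leftover $\threshold(t)^n$ contributions match after renaming the dummy variable of integration. What survives is exactly $\Gamma_n(\threshold)$. The main obstacle is bookkeeping: the finite binomial sum runs only over $k \in \{1, \dots, n\}$ rather than the full $\{0, \dots, n\}$, producing boundary residuals that must be tracked carefully and matched against $T_1$ and against the subtracted $\threshold(t)^n$ piece. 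Once these cancellations are laid out, the remaining calculation is routine.
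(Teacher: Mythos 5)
Your proposal is correct, and it takes a genuinely different route from the paper's own proof. The paper argues from first principles: it fixes an index $i$, conditions on the arrival time $t_i$ of $x_i$ and the arrival time $s$ of its prefix maximum, applies \Cref{prop:acceptance-rule} to show that the conditional probability of the event ``$x_i$ is optimal and accepted'' equals $\big(t_i\min\{\theta(s),F(x_i)\}+(1-t_i)F(x_i)\big)^{n-1}$, and then integrates over $(x_i,t_i,s)$, changes variables to $q=F(x_i)$, and multiplies by $n$ for symmetry. You instead invoke \Cref{lem:googol-prob} (with $\lambda_1=0$, $\lambda_2=1$, so $\restrict{\theta}=\theta$), which already disposes of the arrival-time randomness with the values held fixed, and then take the remaining expectation over values by passing to uniform order statistics $U_j=F(x_j)$ and using $\Pr[U_{(i)}\le u]=\sum_{k=i}^n\binom nk u^k(1-u)^{n-k}$. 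The payoff is modularity: the two sources of randomness are separated into two stages, and an earlier lemma is reused rather than re-derived. The bookkeeping you defer does close: after swapping the $i$ and $k$ sums, the $k$-sum collapses via the binomial theorem into $\frac{(1-t+t\theta(s))^n - t\theta(s)^n}{1-t} - (1-t)^{n-1}$; the stray $-(1-t)^{n-1}$ integrates (against $\nicefrac1t\,\dd s\,\dd t$ over $0\le s\le t\le 1$) to $-\nicefrac1n$, cancelling the $\nicefrac1n$ that $T_1$ and the subtracted $\theta(t)^n$ piece of $T_2$ jointly contribute; and the $(s,t)$-swap on what remains yields exactly $\Gamma_n(\theta)$.
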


\begin{proof}
    Consider any value $x_i$ and the probability that $x_i$ is both maximum and accepted by the algorithm.
    Recall that $t_i$ is the arrival time of $x_i$, and let $s$ denote the arrival time of the prefix maximum $y$ for time $t_i$.
    By~\Cref{prop:acceptance-rule}, the algorithm accepts $x_i$ if and only if
    \begin{equation*}
        F(x_i) > \threshold(t_i)
        ~,\quad 
        x_i \ge y
        ~,~\text{and}\quad
        F(y) \leq \threshold(s)
        ~.    
    \end{equation*}

    We enumerate over all possible triples $(x_i, t_i, s)$.
    First, arrival time $t_i$ distributes uniformly over $[0, 1]$.
    Further, value $x_i$ follows distribution $F$, and must satisfy $F(x_i) \geq \threshold(t_i)$ for $x_i$ to be accepted.
    Finally, given $x_i$ and $t_i$, the arrival time $s$ of the prefix maximum is uniformly distributed over $[0,t]$. 
    We may artificially sample $s$ from this distribution even if no values arrive before $t_i$.
    Therefore, the probability that $x_i$ is both maximum and accepted equals
    \[
        \int_{0}^{1} \int_{\threshold(t_i)}^1 \int_0^{t_i} \frac{1}{t_i} \, \Pr \big[\, y \le x_i \,\,,\, F(y) \le \threshold(s) \,\,,\, \OPT = x_i \mid x_i, t_i, s \,\big] \dd s \dd F(x_i) \dd t_i
        ~.
    \]

For each of the $n-1$ values other than $x_i$, either it arrives before time $t_i$, in which case its cdf value must be at most $\min \{ \threshold(s) , F(x_i) \}$ by the upper bounds for the prefix maximum $y$; or it arrives after $t_i$, in which case it must be at most $x_i$ since $x_i$ is maximum.
Therefore,
\[
    \Pr \big[\, y \le x_i \,\,,\, F(y) \le \threshold(s) \,\,,\, \OPT = x_i \mid x_i, t_i, s \,\big] 
    ~=~
    \Big(\, t_i \min \big\{\threshold(s),F(x_i)\big\} + (1-t_i) F(x_i) \,\Big)^{n-1}
    ~.
\]

Putting together, changing variables with $t = t_i$ and $q = F(x_i)$, and multiplying by $n$ by symmetry across all $x_i$'s, the algorithm accepts the maximum value with probability
\[
    n \int_{0}^{1} \int_{\threshold(t)}^1 \int_0^t \frac{1}{t}  \, \big(\, t \min \{\threshold(s), q\} + (1-t) q \,\big)^{n-1} \dd s \dd q \dd t
    ~.
\]
Changing the order of integration and simplifying yield the expression for $\Gamma_n(\threshold)$.
\end{proof}

With the probability guarantee for general threshold derived in \Cref{lem:algo-win-prob}, we next evaluate the consistency achieved by our choice of thresholds for any $\beta \in [0,\nicefrac{1}{\ee}]$.

\begin{lemma}
\label{lem:max-prob-consistency}
For any $\beta \in [0,\nicefrac{1}{e}]$, suppose $\lambda_1$, $\lambda_2$ are the two roots of $-\lambda \ln \lambda=\beta$ with $\lambda_1 \leq \lambda_2$.
For any $n \in \mathbb{N}$, recall that $\maxproboptthres(t)$ is the unique solution to \Cref{eqn:max-prob-threshold-recall} for any $t \in [0, 1]$. 
Then
\begin{equation*}
\lim_{n\to \infty} \Gamma_n \big(\restrict{\maxproboptthres} \big) ~=~ \beta+ \int_{\lambda_1}^{\lambda_2} \int_s^1 
\frac{\ee^{-\frac{\constint t}{1-s}}}{t} \dd t \dd s
~.
\end{equation*}
\end{lemma}

\begin{proof}

    The probability $\Gamma_n \big(\restrict{\maxproboptthres} \big)$ can be written as
    \[
        \int_0^{\lambda_1} \bigg( \int_s^1 \frac{1}{t} \dd t - 1 \bigg) \dd s + \int_{\lambda_1}^{\lambda_2} \bigg( \int_{s}^1 \frac{(1-t+t\maxproboptthres(s))^n - t\maxproboptthres(s)^n}{t(1-t)} \dd t -\maxproboptthres(s)^n \bigg) \dd s + \int_{\lambda_2}^1 \frac{(1-t)^{n-1}}{t} \dd s
        ~.
    \]

    When $n \to \infty$, the first term equals $- \lambda_1 \ln \lambda_1=\beta$ by basic calculus, and the last term tends to $0$.
    It remains to show that, as $n \to \infty$, the inner integrand of the second term converges to
    \[
        \int_s^1 \frac{\ee^{-\frac{\constint t}{1-s}}}{t} \dd t
        ~.
    \]

    By the definition of $\maxproboptthres$ in \Cref{eqn:max-prob-threshold-recall}, the integrand equals
    \begin{align*}
        & \quad \int_{s}^1 \frac{(1-t+t\maxproboptthres(s))^n - t\maxproboptthres(s)^n}{t(1-t)} \dd t 
        \, - \,
        \maxproboptthres(s) \int_{s}^1 \frac{\big( 1 - t + t\maxproboptthres(s) \big)^{n-1} - \maxproboptthres(s)^{n - 1}}{1 - t} \dd t \\
        &= \int_{s}^1 \frac{(1-t+t\maxproboptthres(s))^{n-1}}{t} \dd t
        ~.
    \end{align*}

    Finally, we use the asymptotic expansion of the threshold function proven in \Cref{app:algorithm-analysis-max-prob-consistency}:
    \[
        \maxproboptthres(s)
        ~=~ \frac{1}{1+\frac{\constint}{(n-1)(1-s)}} + o\!\left(\frac{1}{n}\right)
    \]
    which holds uniformly on compact subsets of $[0,1)$.
    Substituting this limit into the integral yields the lemma.\footnote{Rigorously, for any $\varepsilon>0$, uniform convergence over $[0,1-\varepsilon]$ implies the convergence of $\int_s^{1-\varepsilon} \frac{\ee^{-\frac{\constint t}{1-s}}}{t} \dd t$. The difference from the ideal term is bounded by $\int_{1-\varepsilon}^1 \frac{1}{t} \dd t = \bigO(\varepsilon)$, which vanishes as $\varepsilon \to 0$.}
\end{proof}

\subsection{Tightness of the Trade-off}
\label{subsec:tightness-max-prob}

To complete the proof of \Cref{thm:max-prob-consistency-robustness}, we show that no bi-criteria algorithm can achieve a better consistency-robustness trade-off.

First, we show that for any bi-criteria algorithm $\algowiththres(\prior,\threshold)$, the robustness guarantee given by~\Cref{lem:pre-robustness} is tight.

\begin{lemma}
\label{lem:robustness-tightness}
For any $n\in \mathbb{N}$, any threshold function $\threshold$, suppose $\lambda_1=\sup\{0\le t\le 1 \colon \threshold(t) = 1\}$, $\lambda_2=\inf\{0\le t\le 1\colon \threshold(t) = 0\}$.
Then, for any prior $F$ and any $\varepsilon > 0$, there exists a predicted prior $\prior$ such that, when values are drawn i.i.d.\ from $F$, algorithm $\algowiththres(\prior,\threshold)$ accepts a maximum value with probability at most
\[
    \min\{-\lambda_1 \ln \lambda_1, -\lambda_2 \ln \lambda_2\}+\varepsilon
    ~.
\]
\end{lemma}

\begin{proof}

    Fix any prior $F$ and $\varepsilon > 0$.
    We prove two bounds, corresponding to ``small'' and ``large'' predicted priors.

    \paragraph{Case 1: Small predicted prior.}
    Define $\prior_{\mathrm{small}}(x)=1$ for all $x\ge 0$ (equivalently, a point mass $1$ at $0$).
    By monotonicity of $\threshold$, for any $t>\lambda_1$ we have $\threshold(t)<1$, so the second acceptance criterion $\prior_{\mathrm{small}}(x_i)>\threshold(t_i)$ always holds.
    Hence, up to a measure-zero boundary at $t=\lambda_1$, the algorithm rejects all values before $\lambda_1$ and then accepts the first record, i.e., Dynkin's continuous-time variant with transition point $\lambda_1$.
    By the standard analysis of that variant (e.g., \citet{feldmanImprovedCompetitiveRatios2011}), its \MaxProb ratio equals $-\lambda_1\ln\lambda_1$.
    Hence
    \[
        \Pr\!\big[\ALG(\prior_{\mathrm{small}},\threshold)=\OPT\big]
        \le
        -\lambda_1\ln\lambda_1
        ~.
    \]

    \paragraph{Case 2: Large predicted prior.}
    Since $F(\infty)=1$, there exists $m\ge 0$ such that
    \[
        F(m)^n \ge 1-\varepsilon
        ~.
    \]
    Define
    \[
        \prior_{\mathrm{large}}(x)=
        \begin{cases}
            0, & x < m+1~;\\
            1, & x \ge m+1~.
        \end{cases}
    \]
    Let
    \[
        E \coloneqq \{X_1\le m, X_2\le m, \ldots, X_n\le m\}
        ~,
    \]
    so $\Pr[E]\ge 1-\varepsilon$.
    On event $E$, all realized samples satisfy $\prior_{\mathrm{large}}(X_i)=0$.
    Therefore, for any $t<\lambda_2$, since $\threshold(t)>0$, the second acceptance criterion fails and the algorithm cannot accept before $\lambda_2$.
    Consequently, conditioned on $E$, this policy cannot outperform the policy that starts accepting records at time $\lambda_2$ (Dynkin variant), so
    \[
        \Pr\!\big[\ALG(\prior_{\mathrm{large}},\threshold)=\OPT \,\big|\, E\big]
        \le
        -\lambda_2\ln\lambda_2
        ~.
    \]
    Consequently,
    \begin{align*}
        \Pr\!\big[\ALG(\prior_{\mathrm{large}},\threshold)=\OPT\big]
        \le
        \Pr[E]\cdot(-\lambda_2\ln\lambda_2)+\Pr[E^c]
        \le
        -\lambda_2\ln\lambda_2+\varepsilon
        ~.
    \end{align*}

    Finally, choose $\prior_{\mathrm{small}}$ if $-\lambda_1\ln\lambda_1\le -\lambda_2\ln\lambda_2$, and choose $\prior_{\mathrm{large}}$ otherwise.
    In both cases,
    \[
        \Pr[\ALG=\OPT]
        \le
        \min\{-\lambda_1 \ln \lambda_1, -\lambda_2 \ln \lambda_2\}+\varepsilon
        ~.
    \]
    If one insists on full-support continuous predicted priors, the same conclusion follows up to an arbitrarily small additive term by the smoothing argument in~\Cref{sec:prelim}.
\end{proof}

We now complete the proof of~\Cref{thm:max-prob-consistency-robustness}.
Fix any bi-criteria algorithm $\algowiththres(\prior,\threshold)$, and define
\[
    \lambda_1' = \sup\{0\le t\le 1 \colon \threshold(t) = 1\}~,
    \qquad
    \lambda_2' = \inf\{0\le t\le 1 \colon \threshold(t) = 0\}~,
\]
\[
    \beta \coloneqq \min\{-\lambda_1' \ln \lambda_1', -\lambda_2' \ln \lambda_2'\}~.
\]
By~\Cref{lem:robustness-tightness}, for every $\varepsilon>0$ there exists a misspecification under which this algorithm succeeds with probability at most $\beta+\varepsilon$, hence its robustness is at most $\beta$.

Now let $\lambda_1,\lambda_2$ be the two roots of $-\lambda\ln\lambda=\beta$. By the concavity of $-\lambda\ln\lambda$ over $[0,1]$, we have $\lambda_1'\le \lambda_1 \le \lambda_2 \le \lambda_2'$.
This means $\threshold=\restrict{\threshold}$.

By~\Cref{lem:unique-inheritance-optimal-threshold}, for these fixed $(\lambda_1,\lambda_2)$ and any $n$, the maximum consistency within the class $\algowiththres(\prior,\restrict{\threshold})$ is achieved by $\algowiththres(\prior,\restrict{\maxproboptthres})$.
Therefore, no $\beta$-robust bi-criteria algorithm can have consistency larger than $\Gamma_n(\restrict{\maxproboptthres})$.
Taking $n \to \infty$ and applying~\Cref{lem:max-prob-consistency}, this upper bound converges to
\[
    \alpha
    =
    \beta+\int_{\lambda_1}^{\lambda_2}\int_s^1 \frac{\ee^{-\frac{\constint t}{1-s}}}{t}\dd t \dd s~.
\]
Hence, the trade-off in~\Cref{thm:max-prob-consistency-robustness} is tight within the bi-criteria framework.

\section{Hardness for \MaxProb}
\label{sec:hardness-MaxProb}

This section proves that certain consistency-robustness trade-offs for \MaxProb are unattainable.
We consider the discrete-time model and discrete distributions.
In the presence of ties, we adopt the all-ties-win convention, i.e., the algorithm only needs to accept any value that equals $\OPT$.
This convention only strengthens the hardness result.

\begin{theorem}
    \label{thm:hardness-maxprob}
    If there is an $\alpha$-consistent $\beta$-robust algorithm for \MaxProb, then for any integers $n, K \in \mathbb{N}$ and any predicted prior $\tilde{F}$ over $[K]$, the following polytope $\mathcal{P}_{n, K, \tilde{F}}(\alpha, \beta)$ for variables $\{ y_{t, \ell} \}_{t \in [n], \, \ell \in [K]}$ is feasible:
    \begin{align}
        \forall \ell,\quad & y_{0,\ell} = 1;
        \label{eq:max-prob-LP-S-initial-condition} \\[1.5ex]
        \forall t,\forall \ell,\quad & 0 \leq y_{t,\ell} \leq 1;
        \label{eq:max-prob-LP-S-in-0-1} \\[1.5ex]
        \forall t,\forall \ell,\quad & \Delta^{t}(\ell) \, y_{t,\ell} - \Delta^{t-1}(\ell)\, \tilde{F}(\ell-1) \, y_{t-1,\ell} \geq 0 \label{eq:max-prob-LP-inequality-1}\\[1.5ex]
        \forall t,\forall \ell,\quad & \Delta^{t}(\ell) \, y_{t,\ell} - \Delta^{t-1}(\ell)\, \tilde{F}(\ell-1) \, y_{t-1,\ell}
        \leq \sum_{m \in [\ell]} \Delta^{t-1}(m)\, \tilde{f}(\ell) \, y_{t-1,m};
        \label{eq:max-prob-LP-inequality-2} \\
        & \sum _{t \in [n]} \sum_{\ell \in [K]} \tilde{F}(\ell)^{n-t} \Bigg(
        \sum_{m \in [\ell]} \Delta^{t-1}(m)\, \tilde{f}(\ell) \, y_{t-1,m} + \Delta^{t-1}(\ell)\, \tilde{F}(\ell-1) \, y_{t-1,\ell} - 
         \Delta^{t}(\ell) \, y_{t,\ell}
        \Bigg) \geq \alpha;
        \label{eq:max-prob-LP-alpha} \\
        \forall k ,\quad & \sum_{t \in [n]} \sum_{\ell \in [k]} \frac{\tilde{F}(\ell)^{n-t} }{ \tilde{F}(k)^{n}}\Bigg(
        \sum_{m \in [\ell]} \Delta^{t-1}(m)\, \tilde{f}(\ell) \, y_{t-1,m} + \Delta^{t-1}(\ell)\, \tilde{F}(\ell-1) \, y_{t-1,\ell} - 
         \Delta^{t}(\ell) \, y_{t,\ell}
        \Bigg) \geq \beta,
        \label{eq:max-prob-LP-beta}
    \end{align}
    where $\tilde{f}$ is the pmf of $\tilde{F}$, and we define $\Delta^{t}(\ell) \coloneqq \tilde{F}(\ell)^{t}-\tilde{F}(\ell-1)^{t}$ for any $t\in [n]$, with $\Delta^{0}(\ell) \coloneqq \Ind{\ell = 1}$ under the convention $0^0 = 0$.
\end{theorem}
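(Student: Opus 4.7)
The plan is to extract a feasible point of $\mathcal{P}_{n,K,\tilde{F}}(\alpha,\beta)$ from any $\alpha$-consistent $\beta$-robust algorithm by letting the adversary play the nested family $\{F_k\}_{k\in[K]}$, where $F_k$ is the predicted prior $\tilde{F}$ restricted and renormalized to the support $[k]$. Note $F_K = \tilde{F}$ corresponds to the consistency case, and for each $k$ the optimum under $F_k$ is supported on $[k]$. Crucially, all these instances share the same predicted prior $\tilde{F}$, so $\alpha$-consistency forces the win probability to be at least $\alpha$ under $F_K$, while $\beta$-robustness forces it to be at least $\beta$ under every $F_k$.

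The core step is to restrict attention to \emph{minor-oblivious} algorithms, whose rule is an acceptance probability $a(t,\ell)\in[0,1]$ applied upon seeing a best-so-far value, depending only on the step $t$ and the running maximum $\ell$. Given any (WLOG rational) algorithm $\mathcal{A}$, define
\[
a(t,\ell) \;\coloneqq\; \Pr^{\tilde{F}}_{\mathcal{A}}\big[\,\mathcal{A}\text{ accepts at step }t \,\big|\, x_t\text{ is best-so-far, max of first }t\text{ is }\ell,\,\mathcal{A}\text{ hasn't accepted by }t-1\,\big],
\]
and let $\mathcal{A}'$ use these probabilities. The key observation is that for $k\ge \ell$, the conditional law of $(x_1,\ldots,x_t)$ given running max $=\ell$ equals $\prod_i \tilde{f}(x_i)/\Delta^t(\ell)$ under every $F_k$; hence $a(t,\ell)$ is unambiguously defined and independent of $k$. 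A straightforward induction on $t$ then shows that the joint law of (``hasn't accepted by $t$'', ``max at $t$'') is identical for $\mathcal{A}$ and $\mathcal{A}'$ under every $F_k$, so $\mathcal{A}'$ wins with exactly the same probability as $\mathcal{A}$ on every adversarial prior. I expect this reduction to be the main obstacle: it is what collapses the algorithm's exponentially large sequence-indexed state to the polynomial-size variables $\{y_{t,\ell}\}$, and it hinges on the nested structure of the adversarial family ensuring that sub-maximum values leak no extra information about $k$ beyond the running maximum itself.

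Finally, set $y_{t,\ell} \coloneqq \Pr^{\tilde{F}}[\,\mathcal{A}'\text{ hasn't accepted by step }t\,\big|\,\max\text{ of first }t=\ell\,]$, with $y_{0,\ell}=1$ by convention. Casework over the step-$t$ transition (the new value is below, equal to, or above the previous max) produces the identity
\[
\Delta^t(\ell)\,y_{t,\ell} \;=\; \Delta^{t-1}(\ell)\,\tilde{F}(\ell-1)\,y_{t-1,\ell} \;+\; \big(1-a(t,\ell)\big)\,\tilde{f}(\ell)\sum_{m\in[\ell]}\Delta^{t-1}(m)\,y_{t-1,m}.
\]
Instantiating $a(t,\ell)\in[0,1]$ reads off constraints \eqref{eq:max-prob-LP-inequality-1} and \eqref{eq:max-prob-LP-inequality-2}, while $y_{t,\ell}\in[0,1]$ and $y_{0,\ell}=1$ give \eqref{eq:max-prob-LP-S-in-0-1} and \eqref{eq:max-prob-LP-S-initial-condition}. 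Rearranging, the bracketed expression appearing in \eqref{eq:max-prob-LP-alpha} and \eqref{eq:max-prob-LP-beta} equals exactly the per-step win probability $a(t,\ell)\,\tilde{f}(\ell)\sum_m \Delta^{t-1}(m)\,y_{t-1,m}$. Summing with weights $\tilde{F}(\ell)^{n-t}$ yields $\Pr^{\tilde{F}}[\ALG=\OPT]\ge\alpha$, proving \eqref{eq:max-prob-LP-alpha}. Under $F_k$, the same per-step quantity is scaled uniformly by $1/\tilde{F}(k)^n$ (since $\Pr^{F_k}[\cdot]=\tilde{F}(k)^{-t}\Pr^{\tilde{F}}[\cdot]$ on events whose maximum is at most $k$); restricting $\ell$ to $[k]$ then yields \eqref{eq:max-prob-LP-beta} from the $\beta$-robustness guarantee applied to $F_k$.
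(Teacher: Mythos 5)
Your proposal is correct and follows essentially the same route as the paper: it plays the nested family $\{F_k\}$, reduces (by an inductive coupling argument) to minor-oblivious algorithms whose acceptance probabilities depend only on $(t,\ell)$, introduces the rejection probabilities $y_{t,\ell}$, derives the same transition identity linking $y_{t,\ell}$ to $y_{t-1,\cdot}$ via $a(t,\ell)$, and reads off all six constraints from $a(t,\ell)\in[0,1]$, the consistency bound under $F_K=\tilde{F}$, and the robustness bounds under $F_k$ (using the $\tilde{F}(k)^{-t}$ rescaling of trace events together with the $(\tilde F(\ell)/\tilde F(k))^{n-t}$ tail factor). The only place the paper spends substantially more effort is in formally proving the minor-oblivious reduction (Proposition~\ref{prop:minor-oblivious} and Lemmas~\ref{lem:hardness-posterior},~\ref{lem:stopping-prob-invariance}), which you correctly flag as the crux and sketch via the same induction.
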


\paragraph{Additional Notations.}
For any vector $\bm{x} = (x_1, x_2, \dots,x_n) \in [K]^n$ and any $i, j \in [n]$, we define $\bm{x}_{i:j} \coloneqq (x_i, x_{i+1}\dots, x_j)$.
Further define the maxima $\max \bm{x}_{i:j} = \max_{i \le t \le j} x_t$ and $\max \bm{x} = \max_{t \in [n]} x_t$.
If $i > j$, we adopt the convention $\max \bm{x}_{i:j} = 1$
consistent with the assumption that $x_t \ge 1$ for all $t$.

\subsection{Family of Candidate Prior Distributions}
\label{subsec:truncated-distribution-family}

Given the predicted prior $\tilde{F}$ over $[K]$, we consider a family of candidate prior distributions by conditioning on having values at most $k \in [K]$.

\begin{definition}
\label{def:truncated-distribution-family}
Let $\mathcal{F} = \{F_k\}_{k \in [K]}$ be a family of distributions such that $F_k$ is the conditional distribution obtained by restricting $\tilde{F}$ to support $[k]$.
In other words, the cdf of $F_k$ is
\[
	F_k(\ell) 
	~=~ 
	\begin{cases}
	   \dfrac{\tilde{F}(\ell)}{\tilde{F}(k)}, & 1\le \ell \le k ~;\\[2ex]
	   ~ 1, & k < \ell \le K ~.
	\end{cases}
\]
\end{definition}

At the two extremes, we have $F_K = \tilde{F}$ and $F_1$ being a point mass at $1$.
We use $\Pr_k$ to denote probability under distribution $F_k$, and $\Pr$ for probabilities of events independent of the specific choice of the underlying distribution, e.g., those over the randomness of the algorithm.

The main property of this construction is the following.
Given any realized values in the first $t$ steps, the only information we can conclude about the real prior $F_k$ is $k \ge \max \bm{x}_{1:t}$.
Formally, we have the next lemma.

\begin{lemma}
    \label{lem:hardness-posterior}
    For any $t \in [n]$, any $\ell \in [K]$, and any subset $S \subseteq [\ell]^t$ of $t$ realized values whose maximum equals $\ell$, the conditional probability:
    \[
        \Pr_k \big[ \bm{x}_{1:t} \in S \mid \max \bm{x}_{1:t} = \ell \big] 
  \]
  is the same for all $k$ with $\ell \le k \le K$.
\end{lemma}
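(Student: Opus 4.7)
The plan is to show directly from the definition of the truncated family $\mathcal{F}$ that, restricted to the event $\{\max \bm{x}_{1:t} \le \ell\}$, the law of $\bm{x}_{1:t}$ under $F_k$ coincides with the law under $F_\ell$ for every $k \ge \ell$. Since the conditioning event $\{\max \bm{x}_{1:t} = \ell\}$ and the set $S$ are both contained in $[\ell]^t$, the claim will follow.

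Concretely, first I would write down the joint pmf. Under $F_k$, the values $x_1,\ldots,x_t$ are i.i.d., and for any $\bm{x} \in [\ell]^t$ with $\ell \le k$, the joint pmf factors as
\[
\Pr_k[\bm{x}_{1:t} = \bm{x}] ~=~ \prod_{i=1}^{t} \frac{\tilde{f}(x_i)}{\tilde{F}(k)} ~=~ \frac{\prod_{i=1}^{t}\tilde{f}(x_i)}{\tilde{F}(k)^{t}}.
\]
The key observation is that the $k$-dependence is entirely captured by the multiplicative factor $\tilde{F}(k)^{-t}$, which is constant in $\bm{x}$ on $[\ell]^t$.

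Next, I would expand the desired conditional probability as
\[
\Pr_k\!\big[\bm{x}_{1:t} \in S \,\big|\, \max \bm{x}_{1:t} = \ell\big]
~=~ \frac{\sum_{\bm{x} \in S} \Pr_k[\bm{x}_{1:t}=\bm{x}]}{\Pr_k[\max \bm{x}_{1:t} = \ell]}.
\]
For the numerator, since $S \subseteq [\ell]^t$, the factorization above applies. For the denominator, $\Pr_k[\max \bm{x}_{1:t} = \ell] = F_k(\ell)^t - F_k(\ell-1)^t = \big(\tilde{F}(\ell)^t - \tilde{F}(\ell-1)^t\big)/\tilde{F}(k)^t$. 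Substituting,
\[
\Pr_k\!\big[\bm{x}_{1:t} \in S \,\big|\, \max \bm{x}_{1:t} = \ell\big]
~=~ \frac{\sum_{\bm{x} \in S} \prod_{i=1}^{t}\tilde{f}(x_i)}{\tilde{F}(\ell)^{t} - \tilde{F}(\ell-1)^{t}},
\]
and the factor $\tilde{F}(k)^{-t}$ cancels from numerator and denominator. The resulting expression depends only on $\tilde{F}$ and $\ell$, not on $k$, which is exactly the claim.

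There is no real obstacle here; the lemma is a direct consequence of the fact that truncating a distribution to a smaller support and then conditioning on the max reproduces the same conditional law regardless of the original truncation level, as long as that level is at least $\ell$. The only thing worth double-checking is the boundary case $\ell = 1$, where $\tilde{F}(\ell-1) = \tilde{F}(0) = 0$ and the denominator reduces to $\tilde{F}(1)^t$, which is still well-defined and $k$-independent.
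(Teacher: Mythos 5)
Your proof is correct and takes essentially the same approach as the paper: both observe that on $[\ell]^t$ the $k$-dependence of the law under $F_k$ is a single multiplicative factor $\tilde{F}(k)^{-t}$, which cancels between the numerator and denominator of the conditional probability. You simply spell out the joint pmf explicitly where the paper states the cancellation more tersely.
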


\begin{proof}
	By Bayes' rule, we have
	\[
		\Pr_k \big[ \bm{x}_{1:t} \in S \mid \max \bm{x}_{1:t} = \ell \big] 
		~=~
		\frac{\Pr_k \big[ \bm{x}_{1:t} \in S \,,\, \max \bm{x}_{1:t} = \ell \big]}{\Pr_k \big[ \max \bm{x}_{1:t} = \ell \big]}
		~.
	\]
	
	Replacing $k$ with $K$ corresponds to multiplying both the numerator and denominator by $\tilde{F}^t (k)$ by \Cref{def:truncated-distribution-family}, resulting in the same ratio.
\end{proof}

\subsection{Representations of Algorithms}
\label{subsec:stopping-rules}

To derive the linear constraints in polytope $\mathcal{P}_{n, K, \tilde{F}}(\alpha, \beta)$, we consider two representations of online algorithms when the predicted prior is $\tilde{F}$ and the real prior is one of the conditional distributions in the above family $\mathcal{F}$.
The first one relies on the conditional acceptance probabilities, which appeared in several previous works (e.g., \citet{buchbinderSecretaryProblemsLinear2014, correaSampledrivenOptimalStopping2021}).

\begin{definition}[Conditional Acceptance Probabilities]
    Given any algorithm, and for any $t \in [n]$, define $\accept_t: [K]^t \to [0,1]$ as the probability of accepting the value at time $t$, conditioned on the realized values up to time $t$, and not accepting any value before $t$.
    That is
    \begin{equation*}
        \accept_t(\bm{x}_{1:t}) 
        ~\coloneqq~
        \Pr \big[\, \stoptime = t \mid \stoptime \geq t \,,\, \bm{x}_{1:t} = \bm{x}_{1:t} \,\big]
        ~.
    \end{equation*}
\end{definition}

The conditional acceptance probabilities fully characterize the algorithm, as we can write the probability of accepting the value at step $t$ as:
\[
    \Pr\big[\stoptime=t \mid \bm{x}\big]
    ~=~
    \accept_t(\bm{x}_{1:t}) \prod_{i\in [t-1]} \big(1-\accept_i(\bm{x}_{1:i}) \big) 
    ~.
\]

Next, we show that we may without loss of generality focus on a class of \emph{minor-oblivious} algorithms, whose decisions for whether accepting a record $x_t$ at time $t$ are independent of smaller values $\bm{x}_{1:t-1}$ in the past.

\begin{definition}[Minor-oblivious Algorithms]
    \label{def:minor-oblivious}
    An algorithm is called \emph{minor-oblivious} if $\accept_t(\bm{x}_{1:t}) = \accept_t(\bm{y}_{1:t})$ for any $\bm{x}_{1:t}$ and $\bm{y}_{1:t}$ satisfying  $x_t=y_t=\max\bm{x}_{1:t}=\max\bm{y}_{1:t}$. 
    We then write $\accept_t(\ell)$ for the common value of $\accept_t(\bm{x}_{1:t})$ for all $\bm{x}_{1:t}$ satisfying $\max \bm{x}_{1:t} = \ell$.
\end{definition}

Recall that based on the realization of $\bm{x}_{1:t}$, the only information we may conclude about the prior $F_k$ is $k \ge \max \bm{x}_{1:t}$.
Intuitively, there is no reason to make decisions based on the details of $\bm{x}_{1:t}$ other than a maximum $\max \bm{x}_{1:t}$.
In this sense, minor-oblivious algorithms are without loss of generality.
We formalize this characterization in the following proposition, which may be of independent interest.
The proof is deferred to \Cref{app:proof-minor-oblivious}. 

\begin{proposition}
\label{prop:minor-oblivious}
If an algorithm is $\alpha$-competitive when the prior is $F_K$ and $\beta$-competitive when the prior is $F_k$ for $k \in [K - 1]$, there is a minor-oblivious algorithm with the same guarantees.
\end{proposition}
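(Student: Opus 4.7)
The plan is to construct from any algorithm $\mathcal{A}$ a minor-oblivious algorithm $\mathcal{A}'$ that matches $\mathcal{A}$'s probability of accepting $\OPT$ under every candidate prior $F_k$. Write $M_s \coloneqq \max \bm{x}_{1:s}$. As a preliminary reduction, I may assume without loss of generality that $\mathcal{A}$ only accepts best-so-far values, i.e., $\accept_t(\bm{x}_{1:t}) = 0$ whenever $x_t < M_{t-1}$: under the all-ties-win convention, accepting a non-best-so-far $x_t$ contributes zero to $\Pr[x_\stoptime = \OPT]$, and a direct coupling through shared randomness shows that zeroing out such acceptances only delays stopping and thus weakly improves the competitive ratio under every prior simultaneously.

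Next, define $\mathcal{A}'$ by setting, for each $t \in [n]$ and $\ell \in [K]$,
\[
    \accept_t'(\ell) ~\coloneqq~ \Pr_K\!\big[\, \stoptime = t \,\big|\, \stoptime \ge t,~ x_t = M_t = \ell \,\big],
\]
the conditional stopping probability of $\mathcal{A}$ under $F_K = \tilde{F}$ given survival up to $t$ and a newly set running maximum $\ell$. Both the numerator and denominator are conditional expectations of functions of $\bm{x}_{1:t}$ given $x_t = M_t = \ell$; by \Cref{lem:hardness-posterior}, such conditional expectations are invariant across all $F_k$ with $k \ge \ell$. Hence $\accept_t'(\ell)$ is a legitimate prior-independent quantity, and it simultaneously equals the corresponding conditional stopping probability of $\mathcal{A}$ under every relevant $F_k$.

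It remains to show that $\mathcal{A}'$ reproduces $\mathcal{A}$'s per-step acceptance masses under every prior. I would prove by induction on $t$ that, for every $\ell \in [K]$ and every $k \ge \ell$,
\[
    \Pr_k\!\big[\, \stoptime' \ge t,~ M_{t-1} = \ell \,\big] ~=~ \Pr_k\!\big[\, \stoptime \ge t,~ M_{t-1} = \ell \,\big].
\]
The inductive step uses two ingredients: (i) independence of $x_t$ from $\bm{x}_{1:t-1}$, which rewrites $\Pr_k[\stoptime' \ge t, x_t = M_t = \ell]$ as $f_k(\ell) \sum_{\ell' \le \ell} \Pr_k[\stoptime' \ge t, M_{t-1} = \ell']$, to which the hypothesis applies; and (ii) the defining identity $\accept_t'(\ell) \cdot \Pr_k[\stoptime \ge t, x_t = M_t = \ell] = \Pr_k[\stoptime = t, x_t = M_t = \ell]$, valid for $k \ge \ell$ by the previous paragraph. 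Combined with the best-so-far reduction, which identifies $\{\stoptime = t, M_t = \ell\}$ with $\{\stoptime = t, x_t = M_t = \ell\}$, these force $\Pr_k[\stoptime' = t, M_t = \ell] = \Pr_k[\stoptime = t, M_t = \ell]$; subtracting this from the matched survival at $t$ extends the hypothesis to $t+1$. Multiplying the common per-step stopping mass by the independent future-survival factor $F_k(\ell)^{n-t}$ and summing over $(t, \ell)$ yields $\Pr_k[x_{\stoptime'} = \OPT] = \Pr_k[x_\stoptime = \OPT]$, so $\mathcal{A}'$ inherits the $\alpha$-consistency under $F_K$ and the $\beta$-robustness under each $F_k$ with $k < K$.

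The main obstacle is conceptual rather than computational: one must track the joint distribution of $(\stoptime, M_{t-1})$ rather than $\stoptime$ alone, and the ratio defining $\accept_t'(\ell)$ must be a genuine prior invariant. Both requirements pin the conditioning event down to $\{x_t = M_t = \ell\}$ exactly, which is the precise form needed for \Cref{lem:hardness-posterior} to apply; the restriction $k \ge \ell$ is harmless because $\{M_t = \ell\}$ has probability zero under $F_k$ when $k < \ell$, so the unused values of $\accept_t'(\ell)$ can be set arbitrarily (e.g., to zero) where the defining ratio would be $0/0$.
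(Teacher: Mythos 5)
Your proof is correct and follows essentially the same approach as the paper's: construct $\mathcal{A}'$ via the conditional acceptance probabilities $\Pr_K[\stoptime = t \mid \stoptime \ge t,\ x_t = \max\bm{x}_{1:t} = \ell]$, justify prior-invariance of this quantity via \Cref{lem:hardness-posterior}, and prove by induction on $t$ that $\mathcal{A}'$ matches $\mathcal{A}$'s per-step acceptance masses under every $F_k$. The two cosmetic differences are instructive. First, you track the single quantity $\Pr_k[\stoptime \ge t,\ M_{t-1} = \ell]$, whereas the paper runs a joint induction on $\Pr_k[\stoptime \ge t,\ x_t = \max\bm{x}_{1:t} = \ell]$ and $\Pr_k[\stoptime \ge t,\ x_t < \max\bm{x}_{1:t} = \ell]$; these are equivalent because both paper quantities factor into your hypothesis times a prior-dependent factor that does not involve the algorithm. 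Second, you make explicit the best-so-far reduction (that $\mathcal{A}$ may be assumed to reject non-best-so-far values): the paper's inductive step, in particular the term $\sum_{m}\Pr_k[\stoptime \ge t,\ x_t < \max\bm{x}_{1:t} = m]\cdot\Pr_k[x_{t+1}=\ell]$ without an accompanying $(1-\accept_t)$ factor, implicitly uses this assumption, and your version is a useful clarification.
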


Focusing on minor-oblivious algorithms, we next consider a new representation by the probability of rejecting the first $t$ values conditioned on a maximum among them.

\begin{definition}[Minor-oblivious Rejection Probabilities]
\label{def:survival-probabilities}
For any $t \in [n]$ and any $\ell\in[K]$, define the minor-oblivious rejection probability $\rejall_t : [K] \to [0, 1]$ as:
\[
    \rejall_t(\ell)
    ~\coloneqq~
    \Pr \big[\, \stoptime > t \mid \max \bm{x}_{1:t}=\ell \,\big]
    ~.
\]
We artificially define $\rejall_0(\ell) \coloneqq 1$ to simplify some expressions in the rest of the section.
\end{definition}

There is a one-to-one correspondence between the two representations (see \Cref{app:survival-stopping}). 
We state below the direction to be used in the proof of \Cref{thm:hardness-maxprob}.

\begin{lemma}
\label{lem:survival-to-stopping}
For any minor-oblivious rule, any $t\in[n]$, and any $\ell\in[K]$, we have
\begin{equation}
\accept_t(\ell) ~=~ 1 - \frac{\rejall_t(\ell) \cdot \Pr_K \big[ \max \bm{x}_{1:t} = \ell \,\big] - \rejall_{t-1}(\ell) \cdot \Pr_K \big[\, x_t < \max \bm{x}_{1:t-1} = \ell \,\big]}
{\sum_{m\in [\ell]} \rejall_{t-1}(m) \cdot \Pr_K \big[ \max \bm{x}_{1:t-1} = m, x_t = \ell \,\big]}
~.
\label{eq:Rt-def}
\end{equation}
\end{lemma}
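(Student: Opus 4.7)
The plan is to equate two expressions for $\Pr_K[\stoptime > t,\,\max \bm{x}_{1:t} = \ell]$ and solve for $\accept_t(\ell)$. By \Cref{def:survival-probabilities}, the first expression is simply
\[
\rejall_t(\ell) \cdot \Pr_K \big[\, \max \bm{x}_{1:t} = \ell \,\big]
~.
\]
For the second expression, I would first reduce to minor-oblivious algorithms that additionally reject every non-\bestsofar value. This is without loss of generality for \MaxProb because accepting $x_t < \max \bm{x}_{1:t-1}$ guarantees $\ALG < \OPT$, so such acceptance probabilities can only hurt the objective and may be set to zero.

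Then I would decompose the event $\{\stoptime > t,\,\max \bm{x}_{1:t} = \ell\}$ according to the status of $x_t$ at step $t$. In the non-\bestsofar case ($x_t < \max \bm{x}_{1:t-1} = \ell$), rejection at time $t$ is automatic, so the event collapses to $\{\stoptime > t-1,\,\max \bm{x}_{1:t-1} = \ell,\,x_t < \ell\}$. Since $\{\stoptime > t-1\}$ depends only on $\bm{x}_{1:t-1}$ and is therefore independent of $x_t$, this contributes $\rejall_{t-1}(\ell) \cdot \Pr_K[x_t < \max \bm{x}_{1:t-1} = \ell]$. In the \bestsofar case ($x_t = \ell$ with $\max \bm{x}_{1:t-1} = m$ for some $m \in [\ell]$), the minor-oblivious property makes the conditional rejection probability at time $t$ equal to $1 - \accept_t(\ell)$, so summing over $m$ yields
\[
\big( 1 - \accept_t(\ell) \big) \sum_{m \in [\ell]} \rejall_{t-1}(m) \cdot \Pr_K \big[\, \max \bm{x}_{1:t-1} = m ,\, x_t = \ell \,\big]
~.
\]
Equating the sum of these two contributions to $\rejall_t(\ell) \cdot \Pr_K[\max \bm{x}_{1:t} = \ell]$ and solving for $\accept_t(\ell)$ yields the claimed identity.

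The main subtlety is that $\rejall_{t-1}(m)$ for varying $m \in [\ell]$ appears alongside probabilities written under $F_K$. This is consistent because \Cref{lem:hardness-posterior} guarantees that the conditional law of $\bm{x}_{1:t-1}$ given $\max \bm{x}_{1:t-1} = m$ is the same under $F_k$ for every $k \ge m$, so $\rejall_{t-1}(m)$ is well-defined and independent of which prior in $\mathcal{F}$ one uses. Beyond this observation and the care needed with ties (settled by the all-ties-win convention, under which the split is between $x_t = \max \bm{x}_{1:t}$ and $x_t < \max \bm{x}_{1:t-1}$), the derivation is essentially bookkeeping: a disjoint decomposition at step $t$, the conditional independence of $\{\stoptime > t-1\}$ and $x_t$, and one algebraic rearrangement.
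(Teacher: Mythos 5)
Your proposal is correct and essentially matches the paper's proof: the paper computes $\accept_t(\ell) = 1 - \Pr[\stoptime > t \mid \stoptime \ge t,\, x_t = \max\bm{x}_{1:t} = \ell]$ directly via Bayes' rule and then decomposes the numerator by whether $x_t$ is \bestsofar, which after clearing denominators is exactly your ``write $\Pr_K[\stoptime > t,\,\max\bm{x}_{1:t}=\ell]$ two ways and solve'' argument. You are right to flag, perhaps slightly more explicitly than the paper's appendix does, that the non-\bestsofar branch silently relies on the WLOG reduction to algorithms that always reject non-\bestsofar values (a reduction the paper makes at the start of the proof of the hardness theorem rather than in the lemma statement itself), and that \Cref{lem:hardness-posterior} is what makes $\rejall_{t-1}(m)$ well-defined independently of the choice of $F_k$.
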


\subsection{Proof of \Cref{thm:hardness-maxprob}}

Given any minor-oblivious, $\alpha$-consistent, $\beta$-robust algorithm, we may assume without loss of generality that it rejects all values that are not a record, since we consider \MaxProb.
We next verify that $y_{t,\ell} = \rejall_t(\ell)$ is feasible w.r.t.\ polytope $\mathcal{P}_{n, K, \tilde{F}}(\alpha, \beta)$.

\paragraph{Bounds for Rejection Probabilities.}
By the definition of the minor-oblivious rejection probabilities, and the artificial definition of $\rejall_0(\ell)$, we obtain Constraints~\eqref{eq:max-prob-LP-S-initial-condition} and \eqref{eq:max-prob-LP-S-in-0-1}:
\[
    \rejall_0(\ell) \,=\, 1
    ~,\qquad
    0 \,\le\, \rejall_t(\ell) \,\le\, 1 
    ~,~~ \forall t\in[n],\; \forall \ell\in[K]
    ~.
\]

\paragraph{Bounds for Acceptance Probabilities.}
By $0 \le \accept_t(\ell) \le 1$ and \Cref{eq:Rt-def}, we get that
\begin{align*}
0 
& 
~\le~ \rejall_t(\ell) \cdot \Pr_K[\max \bm{x}_{1:t} = \ell] - \rejall_{t-1}(\ell) \cdot \Pr_K[\max \bm{x}_{1:t-1} = \ell , x_{t} < \ell] \\
&
~\le~ \sum_{m\in [\ell]} \rejall_{t-1}(m) \cdot \Pr_K[\max \bm{x}_{1:t-1} = m, x_t = \ell]
~.
\end{align*}

These two inequalities correspond to Constraints~\eqref{eq:max-prob-LP-inequality-1} and \eqref{eq:max-prob-LP-inequality-2}, because
\begin{align*}
    \Pr_K[\max \bm{x}_{1:t} = \ell] 
    &
    ~=~ \Delta ^t(\ell)
    ~, \\[1ex]
    \Pr_K[\max \bm{x}_{1:t-1} = \ell , x_{t} < \ell]
    &
    ~=~ \Pr_K[\max \bm{x}_{1:t-1} = \ell] \cdot \Pr_K[x_{t} < \ell] \\
    &
    ~=~ \Delta^{t-1}(\ell) \tilde{F}(\ell-1)
    ~, \\[1ex]
    \Pr_K[\max \bm{x}_{1:t-1} = m, x_t = \ell]
    & 
    ~=~ \Pr_K[\max \bm{x}_{1:t-1} = m] \cdot \Pr_K[x_t = \ell] \\
    &
    ~=~ \Delta^{t-1}(m) \tilde{f}(\ell)
    ~.
\end{align*}

\paragraph{Competitive Ratios.}
Consider the probability of accepting a maximum value when the prior is $F_k$.
We write it as:
\begin{align*}
    \Pr_k [ \ALG = \OPT ]
    &
    ~=~ \sum _{t \in [n]} \sum_{\ell \in [k]} \Pr_k \big[\, \stoptime = t \,,\, \max \bm{x}_{1:t}  = \max \bm{x} = \ell \,\big] \\
    &
    ~=~ \sum _{t \in [n]} \sum_{\ell \in [k]} \Pr_k \big[\, \max \bm{x}_{1:t}  = \max \bm{x} = \ell \,\big] \Pr \big[\, \stoptime = t \mid \max \bm{x}_{1:t} = \ell \,\big]
    ~.
\end{align*}
We did not explicitly write the optimality of $x_t$, i.e., $x_t = \max \bm{x}$, since $\stoptime = t$ implies $x_t = \max \bm{x}_{1:t}$ by our assumption without loss of generality that the algorithm rejects all values that are not a record.

By the relation between $\tilde{F} = F_K$ and $F_k$, we can rewrite it as:
\begin{align*}
    &
    \frac{1}{\tilde{F}(k)^n} \sum _{t \in [n]} \sum_{\ell \in [k]} \Pr_K \big[\, \max \bm{x}_{1:t} = \max \bm{x} = \ell \,\big] \Pr \big[\, \stoptime = t \mid \max \bm{x}_{1:t} = \ell \,\big] \\
    & \qquad\qquad
    ~=~ \frac{1}{\tilde{F}(k)^n} \sum _{t \in [n]} \sum_{\ell \in [k]} \Pr_K \big[\, \stoptime = t \,,\, \max \bm{x}_{1:t} = \max \bm{x} = \ell \,\big] 
    ~.
\end{align*}

Further, the event $\stoptime = t$ only depends on $\bm{x}_{1:t}$, while the realization of $\bm{x}_{t+1:n}$ is independent to both $\bm{x}_{1:t}$ and whether $\stoptime = t$.
We can further transform the above expression into:
\begin{align}
    &
    \frac{1}{\tilde{F}^n(k)} \sum _{t \in [n]} \sum_{\ell \in [k]} \Pr_K \big[\, \stoptime = t \,,\, \max \bm{x}_{1:t} = \ell \,\big] \Pr_K \big[\, \max \bm{x}_{t+1:n} \le \ell \,\big] \notag \\
    & \qquad\qquad
    ~=~ 
    \sum _{t \in [n]} \sum_{\ell \in [k]} \frac{\tilde{F}(\ell)^{n-t}}{\tilde{F}(k)^n} \Pr_K \big[\, \stoptime = t \,,\, \max \bm{x}_{1:t} = \ell \,\big]
    ~.
    \label{eqn:max-prob-hardness-ratio-0}
\end{align}

The final step is to write the above probability as a linear expression of the minor-oblivious rejection probabilities $\rejall_t(\ell)$'s.
First, we write $\Pr_K \big[\, \stoptime = t \,,\, x_t = \max \bm{x}_{1:t} = \ell \,\big]$ as:
\begin{equation}
    \label{eqn:max-prob-hardness-ratio-1}
    \Pr_K \big[\, \stoptime \ge t \,,\, \max \bm{x}_{1:t} = \ell \,\big] \,-\, \Pr_K \big[\, \stoptime > t \,,\, \max \bm{x}_{1:t} = \ell \,\big]
    ~.
\end{equation}

The first part equals:
\begin{align}
    & \Pr \big[\, \stoptime \geq t \,,\, \max \bm{x}_{1:t-1} \leq \ell \,,\, x_{t} = \ell \,\big] + \Pr \big[\, \stoptime \geq t \,,\, \max \bm{x}_{1:t-1} = \ell \,,\, x_{t} < \ell \,\big] \notag \\[2ex]
    & \qquad
    ~=~
    \sum_{m \in [\ell]} \Pr[\max \bm{x}_{1:t-1}=m] \cdot \Pr[x_{t}=\ell] \cdot \Pr \big[ \stoptime > t-1 \mid \max \bm{x}_{1:t-1}=m \big] \notag \\
    & \qquad\qquad\qquad~
    + \Pr[\max \bm{x}_{1:t-1} = \ell] \cdot \Pr[x_{t} < \ell] \cdot \Pr \big[ \stoptime > t-1 \mid \max \bm{x}_{1:t-1}= \ell \big] \notag \\[2ex]
    & \qquad
    ~=~
    \sum_{m \in [\ell]} \Delta^{t-1}(m) \cdot \tilde{f}(\ell) \cdot \rejall_{t-1}(m) + \Delta^{t-1}(\ell) \cdot \tilde{F}(\ell-1) \cdot \rejall_{t-1}(\ell)
    ~.
    \label{eqn:max-prob-hardness-ratio-2}
\end{align}

The second part equals:
\begin{equation}
    \label{eqn:max-prob-hardness-ratio-3}
    \Pr_K \big[\, \max \bm{x}_{1:t} = \ell \,\big] \cdot \Pr_K \big[\, \stoptime > t \mid \max \bm{x}_{1:t} = \ell \,\big]
    ~=~
    \Delta^t(\ell) \cdot \rejall_t(\ell)
    ~.
\end{equation}

Combining \Cref{eqn:max-prob-hardness-ratio-0,eqn:max-prob-hardness-ratio-1,eqn:max-prob-hardness-ratio-2,eqn:max-prob-hardness-ratio-3}, we conclude that when the real prior is $F_k$, the competitive ratio is equal to
\begin{align*}
    \sum _{t \in [n]} \sum_{\ell \in [k]} \frac{\tilde{F}(\ell)^{n-t}}{\tilde{F}(k)^n} \bigg(
    &
    \sum_{m \in [\ell]} \Delta^{t-1}(m) \cdot \tilde{f}(\ell) \cdot \rejall_{t-1}(m) \\
    & \quad
    + \Delta^{t-1}(\ell) \cdot \tilde{F}(\ell-1) \cdot \rejall_{t-1}(\ell) - \Delta^t(\ell) \cdot \rejall_t(\ell) 
    \bigg).
\end{align*}
Since the algorithm is $\alpha$-consistent and $\beta$-robust, the above competitive ratio is at least $\alpha$ when $k = K$, and at least $\beta$ when $k \in [K - 1]$.
These correspond to Constraints~\eqref{eq:max-prob-LP-alpha} and \eqref{eq:max-prob-LP-beta}.

\subsection{Numerical Results}

For any $\lambda \in [0, 1]$, we solve a linear program with variables $\{ y_{t, \ell} \}_{t \in [n], \, \ell \in [K]} \cup \{\alpha, \beta\}$ to maximize:
\[
    \lambda \cdot \alpha + (1- \lambda) \cdot \beta~,
\]
subject to the linear constraints that define polytope $\mathcal{P}_{n, K, \, \tilde{F}}(\alpha, \beta)$.
The optimal objective value $\mathrm{LP}^*(\lambda)$ then implies a hardness result of the form:
\[
    \lambda \cdot \alpha + (1- \lambda) \cdot \beta ~\ge~ \mathrm{LP}^*(\lambda)
    ~.
\]

We set $n = 30$ and $K = 1024$, and consider a predicted prior $\tilde{F}$ defined by $\tilde{f}(k) \propto \frac{1}{k}$ for $k \in [K]$. 
We solve the linear programs for $\lambda = \frac{i}{100}$ for $0 \le i \le 100$ to get the curve shown in \Cref{fig:MaxProb}.
For details, see \href{https://github.com/billyldc/optimal-stopping-with-a-predicted-prior}{our code repository}.\footnote{\url{https://github.com/billyldc/optimal-stopping-with-a-predicted-prior}}

\section*{Acknowledgement}

We are grateful for the insightful discussions with Yiding Feng, Enze Sun, and Xiaowei Wu. We also wish to thank the participants of the Dagstuhl workshop 2025 “Online Algorithms Beyond Competitive Analysis” and the “Trends in Approximation and Online Algorithms (TAO)” workshop at FOCS 2025 for their valuable feedback.

\bibliography{reference}

@inproceedings{gatmiry2024bandit,
  title={Bandit algorithms for prophet inequality and pandora's box},
  author={Gatmiry, Khashayar and Kesselheim, Thomas and Singla, Sahil and Wang, Yifan},
  booktitle={Proceedings of the 35th Annual ACM-SIAM Symposium on Discrete Algorithms},
  pages={462--500},
  year={2024}
}

@article{jin2024sample,
  title={Sample complexity of posted pricing for a single item},
  author={Jin, Billy and Kesselheim, Thomas and Ma, Will and Singla, Sahil},
  journal={Advances in Neural Information Processing Systems},
  volume={37},
  pages={82296--82317},
  year={2024}
}

@inproceedings{guo2021generalizing,
  title={Generalizing complex hypotheses on product distributions: Auctions, prophet inequalities, and pandora’s problem},
  author={Guo, Chenghao and Huang, Zhiyi and Tang, Zhihao Gavin and Zhang, Xinzhi},
  booktitle={Conference on Learning Theory},
  pages={2248--2288},
  year={2021}
}

@inproceedings{abolhassaniBeating11Ordered2017,
  title = {Beating 1-1/e for Ordered Prophets},
  booktitle = {Proceedings of the 49th Annual ACM SIGACT Symposium on Theory of Computing},
  author = {Abolhassani, Melika and Ehsani, Soheil and Esfandiari, Hossein and HajiAghayi, MohammadTaghi and Kleinberg, Robert and Lucier, Brendan},
  year = {2017},
  pages = {61--71}
}

@article{antoniadisSecretaryOnlineMatching2020,
  title={Secretary and online matching problems with machine learned advice},
  author={Antoniadis, Antonios and Gouleakis, Themis and Kleer, Pieter and Kolev, Pavel},
  journal={Advances in Neural Information Processing Systems},
  volume={33},
  pages={7933--7944},
  year={2020}
}

@article{krengel1977semiamarts,
  title = {Semiamarts and Finite Values},
  author = {Krengel, Ulrich and Sucheston, Louis},
  year = {1977},
  journal = {Bulletin of the American Mathematical Society},
  volume = {83},
  pages = {745--747},
  publisher = {American Mathematical Society}
}

@article{krengel1978semiamarts,
  title={On semiamarts, amarts, and processes with finite value},
  author={Krengel, Ulrich and Sucheston, Louis},
  journal={Advances in Probability and Related Topics},
  volume={4},
  pages={197-266},
  year={1978},
  publisher={Dekker New York}
}

@article{buchbinderSecretaryProblemsLinear2014,
  title={Secretary problems via linear programming},
  author={Buchbinder, Niv and Jain, Kamal and Singh, Mohit},
  journal={Mathematics of Operations Research},
  volume={39},
  number={1},
  pages={190--206},
  year={2014},
  publisher={INFORMS}
}

@inproceedings{chenProphetSecretaryMatching2025,
  title={Prophet secretary and matching: the significance of the largest item},
  author={Chen, Ziyun and Huang, Zhiyi and Li, Dongchen and Tang, Zhihao Gavin},
  booktitle={Proceedings of the 36th Annual ACM-SIAM Symposium on Discrete Algorithms},
  pages={1371--1401},
  year={2025}
}

@article{correaPostedPriceMechanisms2017,
  title={Posted price mechanisms and optimal threshold strategies for random arrivals},
  author={Correa, José and Foncea, Patricio and Hoeksma, Ruben and Oosterwijk, Tim and Vredeveld, Tjark},
  journal={Mathematics of Operations Research},
  volume={46},
  number={4},
  pages={1452--1478},
  year={2021},
  publisher={INFORMS}
}

@inproceedings{correaProphetInequalitiesIID2019,
  title={Prophet inequalities for {I.I.D.} random variables from an unknown distribution},
  author={Correa, José and D{\"u}tting, Paul and Fischer, Felix and Schewior, Kevin},
  booktitle={Proceedings of the 2019 ACM Conference on Economics and Computation},
  pages={3--17},
  year={2019}
}

@article{correaProphetSecretaryBlind2021,
  title={Prophet secretary through blind strategies},
  author={Correa, José and Saona, Raimundo and Ziliotto, Bruno},
  journal={Mathematical Programming},
  volume={190},
  number={1},
  pages={483--521},
  year={2021},
  publisher={Springer}
}

@article{correaRecentDevelopmentsProphet2019,
  title = {Recent Developments in Prophet Inequalities},
  author = {Correa, José and Foncea, Patricio and Hoeksma, Ruben and Oosterwijk, Tim and Vredeveld, Tjark},
  year = {2018},
  journal = {ACM SIGecom Exchanges},
  volume = {17},
  number = {1},
  pages = {61--70},
  urldate = {2023-11-07}
}

@article{correaSampledrivenOptimalStopping2021,
  title={Sample-driven optimal stopping: From the secretary problem to the {I}.{I}.{D}. prophet inequality},
  author={Correa, José and Cristi, Andrés and Epstein, Boris and Soto, José A.},
  journal={Mathematics of Operations Research},
  volume={49},
  number={1},
  pages={441--475},
  year={2024},
  publisher={INFORMS}
}

@article{correaTwosidedGameGoogol2020,
  title={The two-sided game of googol},
  author={Correa, José and Cristi, Andrés and Epstein, Boris and Soto, José A.},
  journal={Journal of Machine Learning Research},
  volume={23},
  number={113},
  pages={1--37},
  year={2022}
}

@inproceedings{correaUnknownIIDProphets2021,
  title = {Unknown {I}.{I}.{D}. Prophets: Better Bounds, Streaming Algorithms, and a New Impossibility (Extended Abstract)},
  booktitle = {12th Innovations in Theoretical Computer Science Conference},
  author = {Correa, José and Dütting, Paul and Fischer, Felix A. and Schewior, Kevin and Ziliotto, Bruno},
  year = {2021},
  pages = {86:1}
}

@inproceedings{cristiProphetInequalitiesRequire2023,
  title={Prophet inequalities require only a constant number of samples},
  author={Cristi, Andr{\'e}s and Ziliotto, Bruno},
  booktitle={Proceedings of the 56th Annual ACM Symposium on Theory of Computing},
  pages={491--502},
  year={2024}
}

@inproceedings{diakonikolasLearningOnlineAlgorithms2021,
  title = {Learning Online Algorithms with Distributional Advice},
  author = {Diakonikolas, Ilias and Kontonis, Vasilis and Tzamos, Christos and Vakilian, Ali and Zarifis, Nikos},
  booktitle={Proceedings of the 38th International Conference on Machine Learning},
  year = 2021,
  pages = {2687--2696}
}

@inproceedings{duttingPostedPricingProphet2019,
  title={Posted pricing and prophet inequalities with inaccurate priors},
  author={Dütting, Paul and Kesselheim, Thomas},
  booktitle={Proceedings of the 20th ACM Conference on Economics and Computation},
  pages={111--129},
  year={2019}
}

@article{dynkinOptimumChoiceInstant1963,
  title={The optimal choice of the stopping moment of a {Markov} process},
  author={Dynkin, Evgenii Borisovich},
  year = {1963},
  journal = {Soviet Mathematics},
  volume = {4},
  pages = {627--629}
}

@inproceedings{ehsaniProphetSecretaryCombinatorial2018,
  title={Prophet secretary for combinatorial auctions and matroids},
  author={Ehsani, Soheil and Hajiaghayi, MohammadTaghi and Kesselheim, Thomas and Singla, Sahil},
  booktitle={Proceedings of the 29th Annual ACM-SIAM Symposium on Discrete Algorithms},
  pages={700--714},
  year={2018}
}

@article{esfandiariProphetSecretary2015,
  title={Prophet secretary},
  author={Esfandiari, Hossein and Hajiaghayi, MohammadTaghi and Liaghat, Vahid and Monemizadeh, Morteza},
  journal={SIAM Journal on Discrete Mathematics},
  volume={31},
  number={3},
  pages={1685--1701},
  year={2017}
}

@inproceedings{esfandiariProphetsSecretariesMaximizing2020,
  title={Prophets, secretaries, and maximizing the probability of choosing the best},
  author={Esfandiari, Hossein and Hajiaghayi, MohammadTaghi and Lucier, Brendan and Mitzenmacher, Michael},
  booktitle={Proceedings of the 23rd International Conference on Artificial Intelligence and Statistics},
  pages={3717--3727},
  year={2020}
}

@article{ezraProphetInequalitySamples2024,
  title={Prophet Inequality from Samples: Is the More the Merrier?},
  author={Ezra, Tomer},
  journal={arXiv preprint arXiv:2409.00559},
  year={2024}
}

@inproceedings{feldmanImprovedCompetitiveRatios2011,
  title={Improved competitive ratios for submodular secretary problems},
  author={Feldman, Moran and Naor, Joseph and Schwartz, Roy},
  booktitle={Proceedings of the 14th International Workshop on Approximation Algorithms for Combinatorial Optimization},
  pages={218--229},
  year={2011}
}

@article{fergusonWhoSolvedSecretary1989,
  title={Who solved the secretary problem?},
  author={Ferguson, Thomas S},
  journal={Statistical science},
  volume={4},
  number={3},
  pages={282--289},
  year={1989},
  publisher={Institute of Mathematical Statistics}
}

@article{fujiiSecretaryProblemPredictions2024,
  title = {The Secretary Problem with Predictions},
  shorttitle = {The Secretary Problem with Predictions, {{FY}}, 24},
  author = {Fujii, Kaito and Yoshida, Yuichi},
  year = {2024},
  journal = {Mathematics of Operations Research},
  volume = {49},
  number = {2},
  pages = {1241--1262},
}

@misc{giambartolomeiProphetInequalitiesSeparating2023,
  title = {Prophet {{Inequalities}}: {{Separating Random Order}} from {{Order Selection}}},
  shorttitle = {Prophet {{Inequalities}}: {{Separating Random Order}} from {{Order Selection}}, {{GMS}}, 23},
  author = {Giambartolomei, Giordano and {Mallmann-Trenn}, Frederik and Saona, Raimundo},
  year = {2023},
  month = nov,
  number = {arXiv:2304.04024},
  eprint = {2304.04024},
  publisher = {arXiv},
  urldate = {2023-12-18},
  archiveprefix = {arXiv}
}

@article{gilbertRecognizingMaximumSequence,
  title={Recognizing the maximum of a sequence},
  author={Gilbert, John P and Mosteller, Frederick},
  journal={Journal of the American Statistical Association},
  volume={61},
  number={313},
  pages={35--73},
  year={1966}
}

@inproceedings{gravinOptimalProphetInequality2022,
  title = {Optimal Prophet Inequality with Less than One Sample},
  booktitle = {18th International Conference of Web and Internet Economics},
  author = {Gravin, Nick and Li, Hao and Tang, Zhihao Gavin},
  year = 2022,
  pages = {115--131}
}

@inproceedings{harbNewProphetInequalities2025,
  title={New Prophet Inequalities via Poissonization and Sharding},
  author={Harb, Elfarouk},
  booktitle={Proceedings of the 2025 Annual ACM-SIAM Symposium on Discrete Algorithms},
  pages={1222--1269},
  year={2025}
}

@article{hillComparisonsStopRule1982,
  title={Comparisons of stop rule and supremum expectations of iid random variables},
  author = {Hill, T. P. and Kertz, Robert P.},
  year = {1982},
  journal = {The Annals of Probability},
  volume = {10},
  number = {2},
  pages = {336--345},
  publisher = {Institute of Mathematical Statistics}
}

@article{kertzStopRuleSupremum1986,
  title = {Stop rule and supremum expectations of i.i.d. random variables: a complete comparison by conjugate duality},
  author = {Kertz, Robert P},
  year = {1986},
  journal = {Journal of Multivariate Analysis},
  volume = {19},
  number = {1},
  pages = {88--112}
}

@article{liProphetInequalityIID2023,
  title={{I}.{I}.{D}. prophet inequality with a single data point},
  author={Feng, Yilong and Li, Bo and Li, Haolong and Wu, Xiaowei and Wu, Yutong},
  journal={Artificial Intelligence},
  volume={341},
  pages={104296},
  year={2025},
  publisher={Elsevier}
}

@inproceedings{liuVariableDecompositionProphet2021,
  title = {Variable decomposition for prophet inequalities and optimal ordering},
  booktitle = {Proceedings of the 22nd ACM Conference on Economics and Computation},
  author = {Liu, Allen and Leme, Renato Paes and Pál, Martin and Schneider, Jon and Sivan, Balasubramanian},
  year = {2021},
  pages = {692}
}

@inproceedings{nutiSecretaryProblemDistributions2022,
  title={The secretary problem with distributions},
  author={Nuti, Pranav},
  booktitle={Proceedings of the 23rd International Conference on Integer Programming and Combinatorial Optimization},
  pages={429--439},
  year={2022}
}

@article{perezsalazarIIDProphetInequality2023,
  title={The {I}.{I}.{D}. Prophet Inequality with Limited Flexibility},
  author={Perez-Salazar, Sebastian and Singh, Mohit and Toriello, Alejandro},
  journal={Mathematics of Operations Research},
  year={2025},
  publisher={INFORMS}
}

@inproceedings{rubinsteinOptimalSinglechoiceProphet2020,
  title = {Optimal Single-Choice Prophet Inequalities from Samples},
  booktitle = {Proceedings of the 11th Innovations in Theoretical Computer Science Conference},
  author = {Rubinstein, Aviad and Wang, Jack Z. and Weinberg, S. Matthew},
  year = 2020,
  pages = {60:1--60:10}
}

@article{samuel1984comparison,
  title = {Comparison of Threshold Stop Rules and Maximum for Independent Nonnegative Random Variables},
  author = {{Samuel-Cahn}, Ester},
  year = {1984},
  journal = {The Annals of Probability},
  pages = {1213--1216}
}

@techreport{samuels1982,
  type = {Technical {{Report}}},
  title = {Exact Solutions for the Full Information Best Choice Problem},
  author = {Samuels, Stephen M.},
  year = {1982},
  number = {82-17},
  institution = {Department of Statistics, Purdue University}
}

@article{mitzenmacher2022algorithms,
  title={Algorithms with predictions},
  author={Mitzenmacher, Michael and Vassilvitskii, Sergei},
  journal={Communications of the ACM},
  volume={65},
  number={7},
  pages={33--35},
  year={2022},
  publisher={ACM New York, NY, USA}
}

@article{gnedin1994solution,
  title={A solution to the game of googol},
  author={Gnedin, Alexander V},
  journal={The Annals of Probability},
  pages={1588--1595},
  year={1994},
  publisher={JSTOR}
}

@inproceedings{azar2014prophet,
  title={Prophet inequalities with limited information},
  author={Azar, Pablo D and Kleinberg, Robert and Weinberg, S Matthew},
  booktitle={Proceedings of the 25th Annual ACM-SIAM Symposium on Discrete Algorithms},
  pages={1358--1377},
  year={2014}
}

@article{benomar2023advice,
  title={Advice querying under budget constraint for online algorithms},
  author={Benomar, Ziyad and Perchet, Vianney},
  journal={Advances in Neural Information Processing Systems},
  volume={36},
  pages={75026--75047},
  year={2023}
}

@article{braun2024secretary,
  title={The secretary problem with predicted additive gap},
  author={Braun, Alexander and Sarkar, Sherry},
  journal={Advances in Neural Information Processing Systems},
  volume={37},
  pages={16321--16341},
  year={2024}
}

@article{canonne2025little,
  title={With a Little Help From My Friends: Exploiting Probability Distribution Advice in Algorithm Design},
  author={Canonne, Cl{\'e}ment L and Chen, Kenny and Mestre, Juli{\'a}n},
  journal={arXiv preprint arXiv:2505.04949},
  year={2025}
}

@article{lykouris2021competitive,
  title={Competitive caching with machine learned advice},
  author={Lykouris, Thodoris and Vassilvitskii, Sergei},
  journal={Journal of the ACM},
  volume={68},
  number={4},
  pages={1--25},
  year={2021},
  publisher={ACM New York, NY}
}

@article{purohit2018improving,
  title={Improving Online Algorithms via ML Predictions},
  author={Purohit, Manish and Svitkina, Zoya and Kumar, Ravi},
  journal={Advances in Neural Information Processing Systems},
  volume={31},
  year={2018}
}

@article{kehne2025prophet,
  title={Prophet and Secretary at the Same Time},
  author={Kehne, Gregory and Kesselheim, Thomas},
  journal={arXiv preprint arXiv:2511.09531},
  year={2025}
}

@article{balkanski2024fair,
  title={Fair secretaries with unfair predictions},
  author={Balkanski, Eric and Ma, Will and Maggiori, Andreas},
  journal={Advances in Neural Information Processing Systems},
  volume={37},
  pages={3682--3716},
  year={2024}
}

@inproceedings{esfandiari2015online,
  title={Online allocation with traffic spikes: Mixing adversarial and stochastic models},
  author={Esfandiari, Hossein and Korula, Nitish and Mirrokni, Vahab},
  booktitle={Proceedings of the 16th ACM Conference on Economics and Computation},
  pages={169--186},
  year={2015}
}

@article{angelopoulos2024online,
  title={Online computation with untrusted advice},
  author={Angelopoulos, Spyros and D{\"u}rr, Christoph and Jin, Shendan and Kamali, Shahin and Renault, Marc},
  journal={Journal of Computer and System Sciences},
  volume={144},
  pages={103545},
  year={2024},
  publisher={Elsevier}
}

\appendix

\section{Derivation of Optimal Robust Threshold Functions}
\label{app:unique-inheritance-optimal-threshold}

In this section, we prove~\Cref{lem:unique-inheritance-optimal-threshold}. First, we derive the indifference (Bellman) equations for the bi-criteria algorithm under the \MaxExp and \MaxProb objectives. Then, we show that, for each objective, these equations uniquely determine the optimal threshold functions~$\maxexpoptthres$ and $\maxproboptthres$ when the predicted prior coincides with the true prior~$F$. We first treat the \MaxExp objective and then turn to \MaxProb.

\subsection{MaxExp: Optimal Threshold}
\label{app:maxexp-threshold}

\begin{lemma}[Optimal threshold function for \MaxExp]
    \label{lem:indifference-existence-uniqueness-MaxExp}
    For any $n\in \mathbb{N}$ and any prior $F$, the optimal threshold function
    $\maxexpoptthres$ that maximizes the consistency of $\algowiththres(F,\threshold)$ among all $\threshold$'s under \MaxExp is uniquely determined at each time $t$ by
    $\maxexpoptthres(t) = F(x^*(t))$, where $x^*(t)$ is the unique solution of
    \begin{equation}
        \label{eq:max-exp-back-induction-app}
        \phi(t,x) ~=~ 0,
    \end{equation}
    with
    \[
        \phi(t,x)
        ~\coloneqq~
        \E_{y\sim F}[y\mid y\geq x]
        \left( 1-\left(\frac{F(x)}{1-t+tF(x)}\right)^{n-1}\right) - x.
    \]
    Moreover, $\maxexpoptthres(t)$ is non-increasing in $t$ on $[0,1]$.
\end{lemma}

\begin{proof}
    Fix $t\in[0,1]$ and suppose the thresholds $\threshold(s)$ have already been fixed
    for all $s\in(t,1]$.
    Let $x$ denote the value that arrives at time $t$, and assume $x$ is a record (else the algorithm cannot accept at $t$).
    By definition,
    the algorithm $\algowiththres(F,\maxexpoptthres)$ accepts this value $x$ at time $t$ if and only if $F(x) > \maxexpoptthres(t)$.
    By optimality, the threshold chosen at time $t$ must make the expected rewards
    from accepting and rejecting this value $x$ equal.
    
    \medskip\noindent\textbf{Accepting.}
    Accepting yields a reward $x$.
    
    \medskip\noindent\textbf{Rejecting.}
    If the algorithm rejects $x$, then by the non-increasing threshold property, it must accept the first later value $y \ge x$.
    Thus, the continuation reward is
    \[
       \E_{y\sim F}[y \mid y \ge x]\;\cdot\;
       \Pr\!\big[\exists\, y \ge x \text{ after } t \mid x \text{ is a record at } t\big].
    \]
    
    To compute the existence probability, note that all other $n-1$ samples lie
    below $x$ with probability $F(x)^{n-1}$, and each sample is either below $x$ or arrives after $t$ with probability $1 - t + tF(x)$.
    Conditioning on the latter event gives
    \[
        \Pr\!\big[\exists\, y \ge x \text{ after } t \mid x \text{ is a record at } t\big]
        = 1 - \frac{F(x)^{n-1}}{(1 - t + tF(x))^{\,n-1}}.
    \]
    This conditioning introduces the factor $(1-t+tF(x))^{n-1}$, because each of the other $n-1$ values must either arrive after time $t$ or fall below $x$.
    Equating the accept and reject rewards yields exactly the equation $\phi(t,x)=0$
    in \Cref{eq:max-exp-back-induction-app}.
    
    By Lemma~\ref{lem:opt-threshold-unique-max-expect}, for any continuous prior $F$ with full support on $[0,\infty)$, any $n\in \mathbb{N}$, and any $t\in[0,1]$, the equation $\phi(t,x)=0$ has a unique solution $x^*(t)\in[0,\infty)$.
    Thus, defining $\maxexpoptthres(t) \coloneqq F(x^*(t))$ yields a unique optimal threshold at each time $t$.
    
    Finally, by Lemma~\ref{lem:opt-threshold-monotone-max-exp}, $x^*(t)$ is non-increasing in $t$ on $[0,1]$.
    Since $F$ is non-decreasing, the same holds for $\maxexpoptthres(t)=F(x^*(t))$.
    
    This completes the proof.
\end{proof}

We now establish existence, uniqueness, and monotonicity of the solution $x^*(t)$ to \Cref{eq:max-exp-back-induction-app}.
For the proofs it will be convenient to introduce some auxiliary notation.
For $u\in[0,1)$ and $t\in[0,1)$, define
\[
    g(u,t) \coloneqq \frac{1}{1-u}
    \left(1-\left(\frac{u}{1-t+tu}\right)^{n-1}\right),
\]
and for any $(t,x)$ with $t\in[0,1)$ and $x\ge 0$ define
\[
    h(t,x)
    \coloneqq
    \frac{x}{g(F(x),t)} - \int_x^\infty y\,\dd F(y).
\]
Using
\[
    \E_{y\sim F}[y\mid y\ge x]
    =
    \frac{\int_x^\infty y\,\dd F(y)}{1-F(x)},
\]
one checks that for each fixed $t\in[0,1)$ and $x\ge 0$,
\begin{equation}
    \label{eq:Gt-Ht-equivalence}
    \phi(t,x)=0
    \quad\Longleftrightarrow\quad
    h(t,x)=0.
\end{equation}
We first prove the existence and uniqueness of $x^*(t)$ for each $t$, and then show that $x^*(t)$ is non-increasing in $t$.

\begin{lemma}
    \label{lem:opt-threshold-unique-max-expect}
    For any continuous prior $F$ with full support $[0,\infty)$, any $n\in\mathbb{N}$,
    and any $t\in[0,1]$, \Cref{eq:max-exp-back-induction-app} has a unique solution
    $x^*(t)\in[0,\infty)$.
\end{lemma}

\begin{proof}
    When $t=1$, the factor
    \[
        1-\left(\frac{F(x)}{1-t+tF(x)}\right)^{n-1}
    \]
    equals $0$ for all $x\geq 0$, so $\phi(1,x) = -x$ has a unique zero $x^*(1) = 0$.
    We focus on the case $0\leq t<1$ below.
    
    For fixed $t\in[0,1)$, the equivalence \eqref{eq:Gt-Ht-equivalence} implies that it suffices to show that $h(t,x)$ is strictly increasing in $x$ on $[0,\infty)$ and crosses zero exactly once.
    
    First, $g(F(0),t)=g(0,t)=1$, and by L'H\^opital's rule
    \[
        \lim_{x\to\infty} g(F(x),t) = \lim_{u\to 1} g(u,t)=(n-1)(1-t).
    \]
    Hence
    \begin{align*}
        h(t,0)&=-\int_{0}^{\infty} y \,\dd F(y) \leq 0,\\
        \lim_{x\to \infty} h(t,x)&=\lim_{x\to\infty} \frac{x}{(n-1)(1-t)} = \infty,
    \end{align*}
    so, by the Intermediate Value Theorem, $h(t,\cdot)$ has at least one root in $[0,\infty)$.
    
    We now show that $h(t,x)$ has at most one root in $[0,\infty)$.
    It suffices to prove that $\partial_x h(t,x)>0$ for any $x \geq 0$.
    Differentiating in $x$, we obtain
    \[
        \partial_x h(t,x)
        =
        \frac{1}{g(F(x),t)}
        +
        \frac{x\,F'(x)\big(g(F(x),t)^2-\partial_u g(F(x),t)\big)}{g(F(x),t)^2},
    \]
    where $F'(x)$ denotes the density of $F$ (which exists almost everywhere).
    Since $g(u,t)>0$ for any $u\in[0,1)$, it remains to show that for any $u\in[0,1)$,
    \[
        g(u,t)^2 - \partial_u g(u,t) \geq 0.
    \]
    
    Differentiating in $u$,
    \[
        g(u,t)^2-\partial_u g(u,t)
        = \frac{u^{n-2} \left[ u^n - u (1-t+tu)^{n-1}
        + (n-1)(1-t)(1-u) (1-t+tu)^{n-2} \right]}{(1-u)^2 (1-t+tu)^{2n-2}}.
    \]
    Let $z(u,t)$ denote the bracketed term in the numerator.
    Then $z(u,1)=0$ for any $u\in[0,1)$, and
    \[
        \partial_t z(u,t)
        =
        -(n-1)(1-u)^2 (1-t+tu)^{n-3} \left[ (1-t+tu) + (1-t)(n-2) \right] \leq 0.
    \]
    Thus $z(u,t) \geq z(u,1)=0$ for any $u,t\in[0,1)$, which implies
    $g(u,t)^2 - \partial_u g(u,t) \ge 0$ and hence $\partial_x h(t,x) > 0$
    for all $x \ge 0$.
    
    Therefore, for each fixed $t\in[0,1)$, the function $h(t,\cdot)$ is strictly increasing on $[0,\infty)$ and crosses zero exactly once.
    By \eqref{eq:Gt-Ht-equivalence}, the equation $h(t,x)=0$ has a unique solution $x^*(t)\in[0,\infty)$ for each fixed $t\in[0,1)$; and together with the case $t=1$, this proves the claim and the lemma.
\end{proof}

\begin{lemma}
    \label{lem:opt-threshold-monotone-max-exp}
    For any continuous prior $F$ with full support $[0,\infty)$ and any $n\in \mathbb{N}$,
    the unique solution $x^*(t)$ of \Cref{eq:max-exp-back-induction-app} is non-increasing in $t$ on $[0, 1]$.
\end{lemma}

\begin{proof}
   With $g(u,t)$ and $h(t,x)$ defined above, Lemma~\ref{lem:opt-threshold-unique-max-expect} implies that for each fixed $t\in[0,1)$ the function $h(t,\cdot)$ is strictly increasing on $[0,\infty)$ and has a unique zero at $x=x^*(t)$.

   Fix $x\ge 0$ and let $u=F(x)$.
   For $u\in[0,1)$ and $t\in[0,1)$,
   \[
      \partial_t g(u,t)
      =
      -\frac{(n-1)u^{n-1}}{(1-t+tu)^n} \le 0,
   \]
   so $g(F(x),t)$ is non-increasing in $t$, $1/g(F(x),t)$ is non-decreasing, and hence
   \[
   \partial_t h(t,x)
      =
      x\,\partial_t\left[\frac{1}{g(F(x),t)}\right]\ge 0.
   \]

   Thus, for any $0\le t_1<t_2\le 1$ and any $x \ge 0$,
   \[
      h(t_1,x) \le h(t_2,x).
   \]

   Fix $0\le t_1<t_2\le 1$.
   Since $x^*(t_2)$ is the unique zero of $h(t_2,\cdot)$, we have $h(t_2,x^*(t_2))=0$, and the inequality above gives $h(t_1,x^*(t_2))\le 0$.
   Because $h(t_1,\cdot)$ is strictly increasing on $[0,\infty)$ and has a unique zero at $x^*(t_1)$, this implies $x^*(t_1)\ge x^*(t_2)$.

   Hence $x^*(t)$ is non-increasing in $t$ on $[0,1]$.
\end{proof}

\subsection{MaxProb: Optimal Threshold}
\label{app:maxprob-threshold}

\begin{lemma}[Optimal threshold function for \MaxProb]
   \label{lem:indifference-existence-uniqueness-MaxProb}
   For any $n\in \mathbb{N}$ and any prior $F$, the optimal threshold function $\maxproboptthres$ that maximizes the consistency of $\algowiththres(F,\maxproboptthres)$ under \MaxProb is uniquely determined at each time $t$ by the unique solution $\maxproboptthres(t)\in[0,1]$ of
   \begin{equation}
      \label{eq:max-prob-back-induction-integral}
      \psi(t,\quantile) ~=~ 0,
   \end{equation}
   where
   \[
      \label{eq:maxprob-opt-threshold-function}
      \psi(t,\quantile)
      ~\coloneqq~
      \int_{t}^1 \frac{(1-s+s\quantile)^{n-1}-\quantile^{n-1}}{1-s}\,\dd s - \quantile^{n-1}.
   \]
   Moreover, $\maxproboptthres(t)$ is non-increasing in $t$ on $[0,1]$.
\end{lemma}

Note that the form of Equation~\eqref{eq:max-prob-back-induction-integral} coincides with the equation defining the threshold function used by~\Cref{example:gilbert-mosteller}.

\begin{proof}
    Fix $t\in[0,1]$ and suppose the thresholds $\threshold(s)$ have already been fixed for all $s\in(t,1]$.
    Let $x$ denote the value that arrives at time $t$, and assume $x$ is a record (otherwise the algorithm cannot accept at $t$).
    By definition, the algorithm $\algowiththres(F,\threshold)$ accepts this value $x$ at time $t$ if and only if $F(x)>\threshold(t)$.
    By optimality, the threshold chosen at time $t$ must make the success probabilities from accepting and rejecting this value $x$ equal.
    
    \medskip\noindent\textbf{Accepting.}
    If the algorithm accepts a record $x$ at time $t$, the success event is that $x$ is the largest among all $n$ values.
    Conditional on $x$ being a record at time $t$, this occurs with probability
    \[
       \frac{\Pr[x \text{ is the largest among all values}]}{\Pr[x \text{ is a record at time }t]}
       ~=~ \frac{F(x)^{n-1}}{(1-t+tF(x))^{n-1}}.
    \]
    
    \medskip\noindent\textbf{Rejecting.}
    If instead the algorithm rejects such an $x$, then by the non-increasing threshold assumption it will accept the next value larger than $x$, which is necessarily the next record that exceeds the threshold.
    Given that there are in total $k$ values larger than $x$, the probability that the algorithm eventually picks the largest among all values is $1/k$, since the arrival order is uniform at random and the algorithm accepts the first exceedance over $x$.
    The probability that there are exactly $k$ values larger than $x$ and arriving after time $t$, conditional on $x$ being a record at time $t$, is
    \[
       \frac{\binom{n-1}{k} \big((1-t)(1-F(x))\big)^k F(x)^{n-1-k}}{(1-t+tF(x))^{n-1}}.
    \]
    
    Equating the success probability from accepting and rejecting $x$ yields the
    backward-induction equation
    \begin{equation}
        \label{eq:max-prob-back-induction-app}
        \sum_{k=1}^{n-1} \binom{n-1}{k} \frac{1}{k} \big((1-t)(1-F(x))\big)^k F(x)^{n-1-k}
        ~=~ F(x)^{n-1}.
    \end{equation}
    
    For the quantile threshold $\quantile = F(x)$, this becomes
    \[
        \sum_{k=1}^{n-1} \binom{n-1}{k} \frac{1}{k}
        \big((1-t)(1-\quantile)\big)^k \quantile^{n-1-k}
        ~=~ \quantile^{n-1}.
    \]
    
    Using the identity $\frac{1}{k} = \int_0^1 y^{k-1}\,\dd y$, the left-hand side can be written as
    \[
    \begin{aligned}
        \sum_{k=1}^{n-1} \binom{n-1}{k} \frac{1}{k}\big((1-t)(1-\quantile)\big)^k \quantile^{n-1-k}
        &= \sum_{k=1}^{n-1} \binom{n-1}{k} \quantile^{n-1-k}
           \int_{0}^{(1-t)(1-\quantile)} y^{k-1} \,\dd y\\
        &= \int_0^{(1-t)(1-\quantile)} \frac{1}{y}
           \sum_{k=1}^{n-1} \binom{n-1}{k} \quantile^{n-1-k}y^{k}\,\dd y\\
        &= \int_0^{(1-t)(1-\quantile)} \frac{(y+\quantile)^{n-1}-\quantile^{n-1}}{y} \,\dd y\\
        &= \int_{t}^1 \frac{(1-s+s\quantile)^{n-1}-\quantile^{n-1}}{1-s}\,\dd s,
    \end{aligned}
    \]
    where the last equality follows from the change of variables $y=(1-s)(1-\quantile)$.
    Under this substitution, $y=0$ corresponds to $s=1$ and $y=(1-t)(1-\quantile)$ corresponds to $s=t$, which reverses the limits.
    Thus, \Cref{eq:max-prob-back-induction-app} is equivalent to
    \[
        \int_{t}^1 \frac{(1-s+s\quantile)^{n-1}-\quantile^{n-1}}{1-s}\,\dd s
        ~=~ \quantile^{n-1},
    \]
    which is precisely the equation $\psi(t,\quantile)=0$ in
    \Cref{eq:max-prob-back-induction-integral}.
    
    We now prove existence and uniqueness of the solution of $\psi(t,\quantile)=0$
    for each fixed $t$, and then establish the monotonicity of $\maxproboptthres(t)$.
    
    For each fixed $t\in[0,1)$, the function $\psi(t,\quantile)$ is continuous on
    $\quantile \in [0,1]$.
    At $\quantile=0$, the integrand equals $(1-s)^{n-2}$, hence
    \[
    \psi(t,0)
    =
    \int_t^1 (1-s)^{n-2}\,\dd s
    > 0.
    \]
    At $\quantile=1$, the integrand is identically zero, so
    \[
    \psi(t,1)
    =
    0 - 1^{n-1}
    = -1 < 0.
    \]
    Thus, for each $t\in[0,1)$ the equation $\psi(t,\quantile)=0$ has at
    least one root $\quantile \in (0,1)$.
    
    For uniqueness, observe first that when $t = 1$, the integrand in $\psi(1,\quantile)$
    is zero for all $\quantile\in[0,1]$, so $\psi(1,\quantile)=-\quantile^{n-1}$ and
    $\maxproboptthres(1)=0$ is the unique root.
    For $t < 1$, define
    \[
        \tilde{\psi}(t,\quantile)
        \coloneqq \quantile^{-(n-1)} \psi(t,\quantile)
        =
        \int_{t}^1 \frac{\left(\frac{1-s+s\quantile}{\quantile}\right)^{n-1}-1}{1-s}\,\dd s - 1,
        \qquad \quantile\in(0,1).
    \]
    Differentiating with respect to $\quantile$ gives
    \[
        \partial_\quantile \tilde{\psi}(t,\quantile)
        =
        -\frac{n-1}{\quantile^n} \int_t^1 (1-s+s\quantile)^{n-2} \,\dd s < 0,
    \]
    so for each fixed $t<1$ the function $\tilde{\psi}(t,\quantile)$ is strictly
    decreasing in $\quantile$ on $(0,1)$.
    Thus, for each $t\in[0,1)$ the equation $\tilde{\psi}(t,\quantile)=0$ has a unique
    solution $\maxproboptthres(t)\in(0,1)$.
    Because $\tilde{\psi}(t,\quantile)=0$ is equivalent to $\psi(t,\quantile)=0$ for $\quantile>0$,
    the solution of $\psi(t,\quantile)=0$ is also unique.
    
    Finally, it remains to show that $\maxproboptthres(t)$ is non-increasing in $t$.
    Fix $0\le t_1<t_2<1$ and suppose, for contradiction, that $\maxproboptthres(t_2)>\maxproboptthres(t_1)$.
    Since $\maxproboptthres(t_i)$ is the unique zero of $\tilde{\psi}(t_i,\quantile)$, we have
    \[
        \tilde{\psi}(t_i,\maxproboptthres(t_i))=0,\qquad i=1,2.
    \]
    
    For fixed $t_1$, the function $\quantile\mapsto\tilde{\psi}(t_1,\quantile)$ is strictly
    decreasing on $(0,1)$ and has a unique zero at $\maxproboptthres(t_1)$.
    The inequality $\maxproboptthres(t_2)>\maxproboptthres(t_1)$ therefore implies
    \begin{equation}
        \label{eq:maxprob-ineq-theta}
        \tilde{\psi}(t_1,\maxproboptthres(t_2))<\tilde{\psi}(t_1,\maxproboptthres(t_1))=0.
    \end{equation}
    
    On the other hand, for fixed $\quantile=\maxproboptthres(t_2)$ the function
    $\tilde{\psi}(t,\quantile)$ is strictly decreasing on $[0,1)$.
    Since $t_1<t_2$, this yields
    \begin{equation}
        \label{eq:maxprob-ineq-t}
        \tilde{\psi}(t_1,\maxproboptthres(t_2))
        >
        \tilde{\psi}(t_2,\maxproboptthres(t_2))
        =0.
    \end{equation}
    The inequalities \eqref{eq:maxprob-ineq-theta} and \eqref{eq:maxprob-ineq-t}
    contradict each other.
    Hence the assumption $\maxproboptthres(t_2)>\maxproboptthres(t_1)$ is impossible, and we must have
    $\maxproboptthres(t_2)\le\maxproboptthres(t_1)$ for all $0\le t_1<t_2<1$.
    Thus $\maxproboptthres(t)$ is non-increasing on $[0,1)$, and together with
    $\maxproboptthres(1)=0$ this completes the proof.
\end{proof}

\section{Numerical Solution to the Differential Equation in \Cref{thm:max-exp-consistency-robustness}}
\label{app:max-expect-numerical-solution}

Recall the differential equation in~\Cref{thm:max-exp-consistency-robustness}: 
\begin{align}
    \threshold(\lambda_1) & ~=~ 1 ~, \label{eq:max-expect-competitive-boundary-conditions}\\
    \int_z^1 \int_0^t \frac{1}{t} \threshold \big( \max \{s, z \} \big)^t \dd s \dd t 
    & ~\ge~ \alpha \cdot \threshold(z)
    ~, \qquad 
    \forall  z\in [\lambda_1,\lambda_2] 
    ~. \label{eq:max-expect-competitive-interval-conditions}
\end{align}
We assume throughout that $\threshold$ is left‑continuous, non‑increasing, and $\threshold(s)\equiv 0$ for $s\in(\lambda_2,1]$.

Define the left-hand side of \Cref{eq:max-expect-competitive-interval-conditions} as $L(z)$ and simplify: 

\begin{align}
   L(z)~:=&~\int_z^1 \int_0^t \frac{1}{t} 
   \threshold \big(\max\{s, z \} \big)^t \dd s \dd t \label{eq:max-expect-Lz-simplify} \\
   ~=&~ \int_z^1 \bigg( \frac{z\,\threshold(z)^t}{t}  + \int_z^t \frac{\threshold(s)^t}{t} \dd s \bigg) \dd t \notag \\
   ~=&~ z \int_{z}^{1}\frac{\threshold(z)^t}{t} \dd t +\int_{z}^{\lambda_2} \int_s^{1} \frac{\threshold(s)^t}{t} \dd t \dd s\notag ~.
\end{align}
Here, the last equality is obtained by exchanging the order of integration and using $\threshold\equiv 0$ on $(\lambda_2,1]$.

We approximate $\threshold$ by a left‑continuous non‑increasing step function computed backward from $\lambda_2$ to $\lambda_1$ so that \Cref{eq:max-expect-competitive-interval-conditions} holds for all $z\in[\lambda_1,\lambda_2]$.
\begin{definition}[Step function approximation]
    \label{def:max-exp-step-function-approximation}
    Given an integer $m>0$, define a uniform grid $\{z_i\}_{i=1}^{m+1}$ by $z_i=\lambda_1+\tfrac{i-1}{m}(\lambda_2-\lambda_1)$ for $i=1,\dots,m+1$ (so $z_{m+1}=\lambda_2$). We say that $\threshold$ is a step function with step values $\threshold_1,\dots,\threshold_m$ if $\threshold(z)=\threshold_i$ for all $z\in(z_i,z_{i+1}]$, and $\threshold(\lambda_1)=\threshold_1$.
\end{definition}

\begin{lemma}[Monotonicity of $L(z)$]
    \label{lem:max-exp-L-monotone}
    $L(z)$ is decreasing in $z$ on $[\lambda_1,\lambda_2]$.
\end{lemma}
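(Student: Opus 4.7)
The plan is to differentiate the simplified expression for $L(z)$ obtained in~\eqref{eq:max-expect-Lz-simplify} and verify the derivative is non-positive, with the only subtlety coming from the piecewise-constant nature of $\theta$ under~\Cref{def:max-exp-step-function-approximation}. Writing $L(z) = A(z) + B(z)$ with $A(z) = z\int_z^1 \frac{\theta(z)^t}{t}\dd t$ and $B(z) = \int_z^{\lambda_2}\int_s^1 \frac{\theta(s)^t}{t}\dd t \dd s$, the key structural observation is that differentiating $A$ and $B$ each produces a copy of $\int_z^1 \frac{\theta(z)^t}{t}\dd t$ with opposite signs, and these will cancel.

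Concretely, the product rule together with Leibniz's rule give
\[
    A'(z) = \int_z^1 \frac{\theta(z)^t}{t}\dd t \;-\; \theta(z)^z \;+\; z\,\theta'(z)\int_z^1 \theta(z)^{t-1}\dd t,
\]
while the fundamental theorem of calculus yields $B'(z) = -\int_z^1 \frac{\theta(z)^t}{t}\dd t$. Summing cancels the shared integral and leaves
\[
    L'(z) \;=\; -\theta(z)^z \;+\; z\,\theta'(z)\int_z^1 \theta(z)^{t-1}\dd t \;\le\; 0,
\]
since $\theta(z)^z \ge 0$, while $z\,\theta'(z) \le 0$ and $\int_z^1 \theta(z)^{t-1}\dd t \ge 0$ by the assumption that $\theta$ is non-increasing and non-negative. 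Hence $L$ is non-increasing wherever $\theta$ is differentiable.

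For the step function of~\Cref{def:max-exp-step-function-approximation}, $\theta'\equiv 0$ on the interior of each piece $(z_i, z_{i+1})$, so the display above reduces to $L'(z) = -\theta(z)^z \le 0$ there. It remains to handle the jumps: at each grid point $z_{i+1}$, left-continuity gives $\theta(z_{i+1}) = \theta_i$ and $\theta(z_{i+1}^+) = \theta_{i+1} \le \theta_i$, and since the second summand of $L$ depends continuously on $z$, substitution into the simplified formula yields
\[
    L(z_{i+1}) - L(z_{i+1}^{+}) \;=\; z_{i+1}\int_{z_{i+1}}^1 \frac{\theta_i^{\,t} - \theta_{i+1}^{\,t}}{t}\dd t \;\ge\; 0.
\]
So $L$ also jumps downward at every grid point, which combined with piecewise monotonicity on each $(z_i, z_{i+1})$ gives the lemma.

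The main obstacle I anticipate is the Leibniz-rule bookkeeping for $A(z)$: the variable $z$ appears simultaneously as an outer factor, as the lower limit of integration, and inside $\theta(z)^t$, so the derivative breaks into three contributions that must be combined carefully. The clean cancellation against $B'(z)$ is the point that makes the proof short, and it really hinges on already having the simplified expression~\eqref{eq:max-expect-Lz-simplify}; starting from the original definition~\eqref{eq:max-expect-competitive-interval-conditions} with the $\max\{s,z\}$ inside would be considerably more painful.
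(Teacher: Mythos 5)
Your proof is correct, but it takes a substantially more computational route than the paper. The paper's argument works directly with the double-integral form $L(z)=\int_z^1\int_0^t \frac{1}{t}\theta(\max\{s,z\})^t\dd s\dd t$ and observes two things: as $z$ increases the outer integration domain $[z,1]$ shrinks while the integrand stays non-negative, and the integrand itself is non-increasing in $z$ because $\max\{s,z\}$ is non-decreasing in $z$, $\theta$ is non-increasing, and $a\mapsto a^t$ is non-decreasing on $[0,1]$ for $t\ge 0$. That is a two-sentence proof, applies to any non-increasing $\theta$ without regularity assumptions, and requires no differentiation or case analysis. Your proof, by contrast, starts from the simplified form $L=A+B$, applies Leibniz's rule to $A(z)=z\int_z^1\frac{\theta(z)^t}{t}\dd t$, exploits a cancellation of $\int_z^1\frac{\theta(z)^t}{t}\dd t$ against $B'(z)$, and then handles the step-function jumps separately. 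The cancellation you identify is genuine and neat, and your bookkeeping checks out (including the sign of $z\,\theta'(z)\int_z^1\theta(z)^{t-1}\dd t$ and the left-continuity argument at grid points), so the proof is valid for the step functions actually used in the paper. What the paper's approach buys you is generality and brevity: no differentiability is needed and no jump analysis, since the argument is monotone-comparison rather than infinitesimal. What yours buys you is an explicit formula for $L'(z)$, which could be useful if one later wanted quantitative rates of decrease, but for the qualitative monotonicity claim it is heavier machinery than necessary.
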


\begin{proof}
    As $z$ increases, the integration domain in \Cref{eq:max-expect-Lz-simplify} shrinks with a non-negative integrand.
    The integrand is non-increasing since $\threshold$ is non-increasing.
\end{proof}

\begin{lemma}[Endpoint reduction]
    \label{lem:max-exp-endpoint-reduction}
    A step function $\threshold$ with $m$ steps satisfies \Cref{eq:max-expect-competitive-interval-conditions} for all $z\in[\lambda_1,\lambda_2]$ if it satisfies \Cref{eq:max-expect-competitive-interval-conditions} at the endpoints of each interval, i.e., for all $i \in [m]$, 
    \[
        \label{eq:max-expect-competitive-endpoint-conditions}
        L(z_{i+1}) ~\ge~ \alpha\, \threshold_i~.
    \]
\end{lemma}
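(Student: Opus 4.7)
The plan is to reduce the continuum of inequalities indexed by $z \in [\lambda_1, \lambda_2]$ to the finitely many endpoint inequalities by exploiting two ingredients: on the right-hand side, $\alpha \cdot \theta(z)$ is \emph{constant} on each step interval by the step-function structure, while on the left-hand side, $L(z)$ is \emph{monotone decreasing} in $z$ by the preceding monotonicity lemma. The monotonicity ensures that the left-hand side only gets larger as we move $z$ leftward inside an interval, so the worst (i.e.\ tightest) instance of the inequality within a given step is attained at the right endpoint of that step.

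In detail, I would fix an arbitrary $z \in [\lambda_1, \lambda_2]$ and locate the unique $i \in [m]$ with $z \in (z_i, z_{i+1}]$; the case $z = \lambda_1$ is handled by the convention $\theta(\lambda_1) = \theta_1$, which places it naturally in the $i = 1$ block since $\theta \equiv \theta_1$ on $\{\lambda_1\} \cup (\lambda_1, z_2]$. By \Cref{def:max-exp-step-function-approximation}, $\theta(z) = \theta_i$, so the target inequality becomes $L(z) \ge \alpha\,\theta_i$. Applying \Cref{lem:max-exp-L-monotone} to the pair $z \le z_{i+1}$ yields $L(z) \ge L(z_{i+1})$, and the hypothesized endpoint inequality gives $L(z_{i+1}) \ge \alpha\,\theta_i$. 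Chaining these two bounds delivers \Cref{eq:max-expect-competitive-interval-conditions} at the chosen $z$, and since $z$ was arbitrary the conclusion holds on all of $[\lambda_1, \lambda_2]$.

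There is no serious obstacle: the lemma is essentially a bookkeeping consequence of the fact that the constraint is constant on each step while the functional $L$ is monotone. The only mild care needed is the boundary $z = \lambda_1$, which must be assigned to the first step via the left-continuous convention $\theta(\lambda_1) = \theta_1$ so that the endpoint inequality for $i=1$ covers it; once this is done, the argument is a one-line application of the monotonicity lemma.
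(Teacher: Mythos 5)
Your proof is correct and follows the same route as the paper: use the step-function structure to make the right-hand side locally constant, and then apply the monotonicity of $L$ from \Cref{lem:max-exp-L-monotone} to reduce to the right endpoint of each interval. You write out a few extra details (explicit handling of $z = \lambda_1$ via the left-continuous convention) that the paper compresses into one line, but there is no substantive difference.
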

\begin{proof}
By Lemma~\ref{lem:max-exp-L-monotone}, $L(z)$ is decreasing in $z$.
Thus, for any $z\in(z_i,z_{i+1}]$, it holds $L(z)\ge L(z_{i+1})\ge \alpha\,\threshold_i=\alpha\,\threshold(z)$.
\end{proof}

Next, we describe how to compute the step values $\threshold_1,\dots,\threshold_m$ given $\alpha$ and $m$ so that \Cref{eq:max-expect-competitive-endpoint-conditions} holds.

\paragraph{Initialization at $z=\lambda_2$.}
For $i=m$, \Cref{eq:max-expect-competitive-endpoint-conditions} is
\[
    \lambda_2 \int_{\lambda_2}^1 \frac{\threshold(\lambda_2)^t}{t} \,\dd t ~\ge~ \alpha\,\threshold(\lambda_2)~.
\]
Letting $\threshold_m\coloneqq\threshold(\lambda_2)\ne 0$, this is 
\[
    \lambda_2 \int_{\lambda_2}^1 \frac{\threshold_m^{-(1-t)}}{t}\,\dd t ~\ge~ \alpha~,
\]
where the left-hand side decreases from $+\infty$ to $0$ as $\threshold_m$ increases from $0$ to $+\infty$. Thus, there exists a $\threshold_m^* > 0$ such that the equality holds.\footnote{During experiments, we use binary search to locate $\threshold_m^*$ with $100$ iterations so that the error can be ignored.}

\paragraph{Backward recursion.}
For $i = m - 1, m - 2, \ldots,1$, suppose we have computed $\threshold_{i+1}^*, \threshold_{i+2}^*, \ldots, \threshold_{m}^*$. Using \Cref{eq:max-expect-Lz-simplify}, we have
\[
    \begin{aligned}
        L(z_{i+1})
        ~=~& z_{i+1} \int_{z_{i+1}}^1 \frac{ \threshold_i^t}{t} \,\dd t
        +\sum_{j=i+1}^m \int_{z_{j}}^{z_{j+1}} \int_s^1 \frac{{\threshold^*_j}^t}{t} \,\dd t \,\dd s\\
        ~=~& z_{i+1} \int_{z_{i+1}}^1 \frac{ \threshold_i^t}{t} \,\dd t + \sum_{j=i+1}^m \left[ \int_{z_{j}}^{z_{j+1}} \frac{t- z_j}{t}\,{\threshold_j^*}^t  \,\dd t   + (z_{j+1}-z_j) \int_{z_{j+1}}^1 \frac{{\threshold^*_j}^t}{t}  \,\dd t \right]
        ~.
    \end{aligned}
\]
Thus, \Cref{eq:max-expect-competitive-endpoint-conditions} is simplified to
\begin{align}
    \label{eq:max-exp-numerical-Lz}
    z_{i+1} \int_{z_{i+1}}^1 \frac{ \threshold_i^t}{t} \,\dd t + \sum_{j=i+1}^m \left[ \int_{z_{j}}^{z_{j+1}} \frac{t- z_j}{t}\,{\threshold_j^*}^t  \,\dd t   + (z_{j+1}-z_j) \int_{z_{j+1}}^1 \frac{{\threshold^*_j}^t}{t}  \,\dd t \right]
    ~\ge~ \alpha\, \threshold_i~.
\end{align}
Equivalently,
\begin{equation}
\label{eq:max-expect-ratio-Lz}
    z_{i+1} \int_{z_{i+1}}^1 \frac{ \threshold_i^{-(1-t)}}{t} \,\dd t
    + \threshold_i^{-1}\, \sum_{j=i+1}^m \left[ \int_{z_{j}}^{z_{j+1}} \frac{t- z_j}{t}\,{\threshold^*_j}^t  \,\dd t   + (z_{j+1}-z_j) \int_{z_{j+1}}^1 \frac{{\threshold^*_j}^t}{t}  \,\dd t \right]
    ~\ge~ \alpha~,
\end{equation}
whose left-hand side decreases from $\infty$ to $0$ as $\threshold_i$ increases from $0$ to $\infty$. Again, there exists a $\threshold_i^* > 0$ such that the equality holds.

We select the largest feasible $\threshold_i=\threshold_i^*$ by binary search such that equality holds.

\begin{lemma}[Monotonicity of the step values]
    \label{lem:max-exp-thetai-monotone}
    The sequence $(\threshold_i^*)_{i=1}^m$ produced by the recursion is non-increasing in $i$.
    \end{lemma}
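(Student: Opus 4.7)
My plan is to show $\theta_i^* \ge \theta_{i+1}^*$ for each $i \in [m-1]$ by comparing the defining equation for $\theta_i^*$ against the candidate value $\theta_{i+1}^*$. By construction, $\theta_i^*$ is the unique positive solution in $\theta_i$ to equality in \Cref{eq:max-expect-ratio-Lz}, and the left-hand side of that inequality is strictly decreasing in $\theta_i$ (the integrand $\theta_i^{-(1-t)}$ is decreasing for $t \in (z_{i+1},1)$, and the other term is a positive constant times $1/\theta_i$). Hence it suffices to show that plugging $\theta_i = \theta_{i+1}^*$ into \Cref{eq:max-exp-numerical-Lz} yields an inequality $L(z_{i+1}) \ge \alpha\, \theta_{i+1}^*$.

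The key idea is to compare $L(z_{i+1})$ evaluated on the step function with $\theta_i = \theta_{i+1}^*$ against $L(z_{i+2})$ evaluated on the step function used when $\theta_{i+1}^*$ was computed. The latter equals $\alpha\, \theta_{i+1}^*$ by the definition of $\theta_{i+1}^*$. Using the simplified form of $L$ from \Cref{eq:max-expect-Lz-simplify} and letting $h(z) \coloneqq \int_z^1 \frac{(\theta_{i+1}^*)^t}{t}\dd t$, the difference becomes
\[
    L(z_{i+1})\big|_{\theta_i = \theta_{i+1}^*} - L(z_{i+2})\big|_{\theta_{i+1} = \theta_{i+1}^*}
    ~=~ z_{i+1}\, h(z_{i+1}) - z_{i+2}\, h(z_{i+2}) + \int_{z_{i+1}}^{z_{i+2}} h(s)\dd s,
\]
since when $\theta_i = \theta_{i+1}^*$ the step function takes the same constant value $\theta_{i+1}^*$ on the entire interval $(z_i, z_{i+2}]$, and the tail contributions from $(z_{i+2}, z_{m+1}]$ are identical in both expressions and cancel.

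Integration by parts (using $h'(s) = -(\theta_{i+1}^*)^s / s$) gives $\int_{z_{i+1}}^{z_{i+2}} h(s)\dd s = z_{i+2}\, h(z_{i+2}) - z_{i+1}\, h(z_{i+1}) + \int_{z_{i+1}}^{z_{i+2}} (\theta_{i+1}^*)^s\dd s$, so the difference collapses to $\int_{z_{i+1}}^{z_{i+2}} (\theta_{i+1}^*)^s\dd s \ge 0$. Therefore $L(z_{i+1})\big|_{\theta_i = \theta_{i+1}^*} \ge \alpha\,\theta_{i+1}^*$, and monotonicity of the defining equation delivers $\theta_i^* \ge \theta_{i+1}^*$.

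The main obstacle is recognizing the clean telescoping that occurs once we set $\theta_i = \theta_{i+1}^*$: the step function becomes constant across $(z_i, z_{i+2}]$, which allows the nested double integral in $L(z_{i+1})$ to be related to that in $L(z_{i+2})$ via a single integration-by-parts. Without this structural observation, the two expressions look unwieldy. Once the telescoping is identified, the remaining calculation is routine and the monotonicity of the ratio $L(z_{i+1})/\theta_i$ in $\theta_i$ (needed to transfer the inequality to the optimal $\theta_i^*$) follows immediately from the form of the integrand.
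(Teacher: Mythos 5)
Your proof is correct and takes a genuinely different (and more careful) route than the paper's. The paper dispatches this lemma with a one-liner, $\alpha\theta_i^* = L(z_{i+1}) \geq L(z_{i+2}) = \alpha\theta_{i+1}^*$, citing \Cref{lem:max-exp-L-monotone} for the middle inequality. But $L(z_{i+1})$ depends on $\theta_i^*$ (since $\theta(z_{i+1}) = \theta_i^*$), and the proof of \Cref{lem:max-exp-L-monotone} relies on the step function being non-increasing --- precisely the property being established --- so read strictly, the paper's argument has a circular flavor: one needs $\theta_i^* \geq \theta_{i+1}^*$ to invoke the lemma that yields $\theta_i^* \geq \theta_{i+1}^*$. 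You avoid this by substituting the candidate $\theta_i = \theta_{i+1}^*$, which makes the step function locally constant (hence safely non-increasing) across $(z_i, z_{i+2}]$; the resulting difference $L(z_{i+1})\big|_{\theta_i = \theta_{i+1}^*} - L(z_{i+2})$ collapses under integration by parts to the manifestly non-negative quantity $\int_{z_{i+1}}^{z_{i+2}} (\theta_{i+1}^*)^s \dd s$. Combined with your observation that the left-hand side of \Cref{eq:max-expect-ratio-Lz} is strictly decreasing in $\theta_i$, this pins down $\theta_i^* \geq \theta_{i+1}^*$ without any circularity. Your version is more explicit and arguably fills a gap in the paper's terse argument; what the paper's shortcut buys is brevity, at the cost of leaving the reader to supply exactly the kind of telescoping calculation you carried out.
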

\begin{proof}
   By definition, $\alpha \threshold_i^*=L(z_{i+1})\geq L(z_{i+2})=\alpha \threshold_{i+1}^*$, where the inequality holds by \Cref{lem:max-exp-L-monotone}.
\end{proof}

\paragraph{Enforcing $\threshold(\lambda_1)=1$.}
We only need to find an $\alpha$ with the corresponding sequence $(\threshold_i^*)_{i=1}^m$ satisfying~\Cref{eq:max-expect-competitive-interval-conditions} and $\threshold_1^*\geq 1$.
The following lemma shows this suffices.
\begin{lemma}
    \label{lem:max-exp-theta1-ge-1}
    If a step function $\threshold$ with step values $\threshold_1, \threshold_2, \dots, \threshold_m$ satisfies~\Cref{eq:max-expect-competitive-interval-conditions}, then the step function with step values $(\min\{\threshold_i,1\})_{i=1}^m$ also satisfies~\Cref{eq:max-expect-competitive-interval-conditions}.
\end{lemma}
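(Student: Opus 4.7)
The plan is to analyze the regions where the truncation $\tilde\theta := \min\{\theta, 1\}$ actually alters $\theta$ and verify the inequality on each region separately. Because $\theta$ is non-increasing, the set $\{z \in [\lambda_1, \lambda_2] : \theta(z) > 1\}$ is a prefix of the form $[\lambda_1, z^*]$, and for a step function this $z^* = z_{i^*}$ is a grid point (with $z^* = \lambda_1$ in the trivial case where no step exceeds $1$). Thus $\tilde\theta \equiv 1$ on $[\lambda_1, z^*]$, $\tilde\theta = \theta$ on $(z^*, \lambda_2]$, and $\tilde\theta$ is still a non-increasing step function.

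I would then split $z \in [\lambda_1, \lambda_2]$ into two cases. For $z \in (z^*, \lambda_2]$, monotonicity of $\theta$ implies $\tilde\theta(s) = \theta(s)$ for every $s \geq z$, so the integrand in~\eqref{eq:max-expect-competitive-interval-conditions} is unchanged and $L_{\tilde\theta}(z) = L_\theta(z) \geq \alpha \theta(z) = \alpha \tilde\theta(z)$. For $z \in [\lambda_1, z^*]$, $\tilde\theta(z) = 1$ and the target becomes $L_{\tilde\theta}(z) \geq \alpha$. Substituting $\tilde\theta \equiv 1$ on $[\lambda_1, z^*]$ into the definition of $L_{\tilde\theta}$ and simplifying the inner integrals should yield the identity
\begin{equation*}
    L_{\tilde\theta}(z) = (z^* - z) + L_{\tilde\theta}(z^*),
\end{equation*}
so $L_{\tilde\theta}$ is linear and strictly decreasing on $[\lambda_1, z^*]$; it therefore suffices to verify $L_{\tilde\theta}(z^*) \geq \alpha$. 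Letting $c := \theta(z^*) > 1$, a direct calculation gives $L_\theta(z^*) - L_{\tilde\theta}(z^*) = z^* \int_{z^*}^1 \frac{c^t - 1}{t} \dd t$, so combining with the hypothesis $L_\theta(z^*) \geq \alpha c$ reduces the goal to the scalar inequality
\begin{equation*}
    \alpha(c - 1) \geq z^* \int_{z^*}^1 \frac{c^t - 1}{t} \dd t, \qquad \forall c \geq 1.
\end{equation*}

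The main obstacle is this remaining scalar inequality. I would resolve it via the convexity of $t \mapsto c^t$ on $[0, 1]$: the secant bound $c^t - 1 \leq t(c - 1)$ gives $z^* \int_{z^*}^1 \frac{c^t - 1}{t} \dd t \leq z^*(1-z^*)(c-1) \leq \tfrac{c-1}{4}$. Since every nontrivial consistency $\alpha$ considered in the \MaxExp setting satisfies $\alpha \geq \tfrac{1}{e} > \tfrac{1}{4}$ (achievable even by Dynkin's algorithm), the inequality follows. The only delicate subcase is $z^* = \lambda_2$ (every step exceeds $1$), which I would rule out by applying the same secant bound to the hypothesis at $z = \lambda_2$: it forces $\theta_m \leq 1$ whenever $\alpha > \lambda_2(1-\lambda_2)$, so this subcase cannot arise in the regime of interest.
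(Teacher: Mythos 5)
Your scaffolding is sound and parallels the paper's approach: since $\theta$ is non-increasing, the truncated region $[\lambda_1, z^*]$ is a prefix ending at a grid point $z^* = z_{i^*+1}$, the constraint on $(z^*, \lambda_2]$ is literally unchanged, and the linear identity $L_{\tilde\theta}(z) = (z^*-z) + L_{\tilde\theta}(z^*)$ correctly reduces everything to checking $L_{\tilde\theta}(z^*) \ge \alpha$. This is a concrete instantiation of the paper's ``unify the intervals where $\theta_i > 1$ and apply endpoint reduction.''

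The gap is in the final step. You bound $L_{\tilde\theta}(z^*) \ge L_\theta(z^*) - z^*\int_{z^*}^1 \frac{c^t-1}{t}\,\dd t \ge \alpha c - z^*\int_{z^*}^1 \frac{c^t-1}{t}\,\dd t$, which throws away the shared tail $R := \int_{z^*}^{\lambda_2}\int_s^1 \frac{\theta(s)^t}{t}\,\dd t\,\dd s$ appearing in both $L_\theta(z^*)$ and $L_{\tilde\theta}(z^*)$. The resulting scalar inequality $\alpha(c-1) \ge z^*\int_{z^*}^1 \frac{c^t-1}{t}\,\dd t$ is \emph{not} a consequence of the lemma's hypotheses; closing it via the secant bound requires $\alpha \ge \tfrac14$, and ruling out the $z^*=\lambda_2$ subcase additionally requires $\alpha > -\lambda_2\ln\lambda_2$. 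Neither assumption appears in the lemma, which the paper proves unconditionally. The fix is to compare ratios rather than differences: since $c>1$ and $c^t \le c$ for $t\in[0,1]$, we have
\[
L_\theta(z^*) \;=\; z^*\!\int_{z^*}^1 \frac{c^t}{t}\,\dd t + R \;\le\; c\Big(z^*\!\int_{z^*}^1 \frac{1}{t}\,\dd t + R\Big) \;=\; c\,L_{\tilde\theta}(z^*),
\]
so $L_{\tilde\theta}(z^*) \ge L_\theta(z^*)/c \ge \alpha$ directly, with no restriction on $\alpha$. This is exactly the monotonicity in $\theta_i$ of the left-hand side of \eqref{eq:max-expect-ratio-Lz} that the paper invokes.
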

\begin{proof}
    By merging intervals where $\threshold_i$ exceeds $1$, and by Lemma~\ref{lem:max-exp-endpoint-reduction}, we only need to check that \Cref{eq:max-expect-ratio-Lz} holds for $\min\{\threshold_i,1\}$, where $i$ is the last index such that $\threshold_i>1$. This follows immediately from the non-increasing property of the left-hand side.
\end{proof}

A direct corollary implies the desired result.
\begin{corollary}
    \label{cor:max-exp-theta1-ge-1}
    If there exists an $\alpha$ such that the corresponding recursively constructed sequence $(\threshold_i^*)_{i=1}^m$ satisfies $\threshold_1^*\geq 1$, then the step function $\threshold^*$ with step values $(\min\{\threshold_i^*,1\})_{i=1}^m$ is a left-continuous non-increasing function satisfying \Cref{eq:max-expect-competitive-boundary-conditions} and \Cref{eq:max-expect-competitive-interval-conditions}.
\end{corollary}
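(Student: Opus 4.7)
The plan is to assemble Corollary \ref{cor:max-exp-theta1-ge-1} as a direct consequence of the four preceding results of Appendix \ref{app:max-expect-numerical-solution}. The backward recursion defining $(\theta_i^*)_{i=1}^m$ is set up so that the endpoint equality $L(z_{i+1}) = \alpha \,\theta_i^*$ is attained at every grid point $z_{i+1}$, since the binary search picks the largest $\theta_i$ for which \Cref{eq:max-exp-numerical-Lz} holds. Hence the uncapped step function $\theta$ with step values $(\theta_i^*)_{i=1}^m$ satisfies the endpoint condition $L(z_{i+1}) \ge \alpha\,\theta_i^*$ at every grid point, and \Cref{lem:max-exp-endpoint-reduction} (which rests on the monotonicity of $L(\cdot)$ from \Cref{lem:max-exp-L-monotone}) upgrades this into the full interval inequality \Cref{eq:max-expect-competitive-interval-conditions} on all of $[\lambda_1,\lambda_2]$.

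To pass from the uncapped $\theta$ to the target $\threshold^*$, I would invoke \Cref{lem:max-exp-theta1-ge-1}: since $\theta$ satisfies the interval inequality, its cap $\threshold^*$ with step values $(\min\{\theta_i^*, 1\})_{i=1}^m$ also satisfies the interval inequality. The boundary condition $\threshold^*(\lambda_1) = 1$ is then immediate from the hypothesis $\theta_1^* \ge 1$, because $\min\{\theta_1^*, 1\} = 1$ and \Cref{def:max-exp-step-function-approximation} sets $\threshold^*(\lambda_1) = \min\{\theta_1^*,1\}$. This establishes \Cref{eq:max-expect-competitive-boundary-conditions} and \Cref{eq:max-expect-competitive-interval-conditions} simultaneously.

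What remains is to verify the structural properties of $\threshold^*$. Left-continuity is built into the step-function convention of \Cref{def:max-exp-step-function-approximation}, where $\theta$ takes value $\theta_i$ on the half-open interval $(z_i,z_{i+1}]$ with $\theta(\lambda_1) = \theta_1$. Non-increasingness follows from \Cref{lem:max-exp-thetai-monotone}, which gives $(\theta_i^*)_i$ non-increasing in $i$, and capping from above at $1$ preserves ordering, so $(\min\{\theta_i^*,1\})_i$ is non-increasing as well. At this point every clause of the corollary is discharged.

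The corollary itself is thus essentially a packaging step, and the main obstacle already lies in the earlier lemmas (in particular \Cref{lem:max-exp-theta1-ge-1}, whose nontrivial content is that capping at $1$ does not break the interval inequality even though $L$ depends on the step values that are being modified). The only subtlety in writing the corollary's proof is to remember that one must apply the capping lemma at the level of the interval inequality rather than re-derive the endpoint inequality for $\threshold^*$ from scratch, since the recursion was tuned for the uncapped values $\theta_i^*$, not for $\min\{\theta_i^*, 1\}$.
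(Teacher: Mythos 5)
Your proof is correct and matches the paper's intended reading of "a direct corollary": the recursion yields $L(z_{i+1}) = \alpha\,\theta_i^*$, \Cref{lem:max-exp-endpoint-reduction} (via \Cref{lem:max-exp-L-monotone}) upgrades these endpoint inequalities to \Cref{eq:max-expect-competitive-interval-conditions}, \Cref{lem:max-exp-theta1-ge-1} preserves it under capping, the hypothesis $\theta_1^*\ge 1$ gives the boundary condition, and left-continuity and monotonicity follow from \Cref{def:max-exp-step-function-approximation} together with \Cref{lem:max-exp-thetai-monotone}. Your closing observation---that capping must be applied via \Cref{lem:max-exp-theta1-ge-1} at the level of the interval inequality, since the recursion was calibrated to the uncapped $\theta_i^*$---is exactly the reason \Cref{lem:max-exp-theta1-ge-1} is stated separately.
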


In our implementation, we fix several pairs of values $(\alpha,\beta)$ and numerically verify that for each pair, the constructed sequence $(\threshold_i^*)_{i=1}^m$ satisfies $\threshold_1^*\geq 1$.
Then~\Cref{cor:max-exp-theta1-ge-1} implies that the algorithm $\mathcal{A}(F,(\threshold^*)^{\nicefrac{1}{n}})$ is $\alpha$-consistent and $\beta$-robust.
The results are plotted in~\Cref{fig:MaxExp}.

Intuitively, as $m\to\infty$, the step function approximation tightens.
In experiments, we set $m=300$.
For details, see the implementation of the backward recursion as well as the solved $\threshold_i^*$'s
in \href{https://github.com/billyldc/optimal-stopping-with-a-predicted-prior}{our code repository}.\footnote{\url{https://github.com/billyldc/optimal-stopping-with-a-predicted-prior}}

\Cref{fig:theta-illustration} illustrates the case where $\beta=0.01$ and $m=300$, which yields $\alpha\approx 0.6908$ with $\threshold^*_1\approx 1.0165$.
This implies that $\mathcal{A}(F,(\threshold^*)^{\nicefrac{1}{n}})$ with $\threshold^*$ being the step function with step values $(\min\{\threshold_i^*,1\})_{i=1}^{300}$ is an $\alpha\approx 0.6908$-consistent and $\beta=0.01$-robust algorithm.

\begin{figure}[H]
    \centering
    \includegraphics[width=0.6\linewidth]{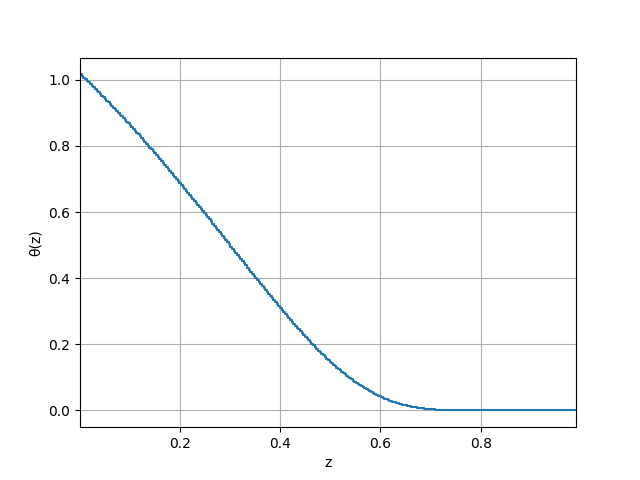}
    \caption{Example solution of $\threshold$ when $\beta=0.01$, $m=300$, and $\alpha\approx 0.6908$.}
    \label{fig:theta-illustration}
\end{figure}

\section{Missing Proofs for \MaxProb Consistency}
\label{app:algorithm-analysis-max-prob-consistency}

\begin{lemma}
    \label{lem:max-prob-asymptotic}
    For any fixed $s\in [0,1)$, we have
    \begin{equation*}
        \maxproboptthres(s) ~=~ \frac{1}{1+\frac{\constint}{(n-1)(1-s)}} + o\Big(\frac{1}{n}\Big)
            ~,
    \end{equation*}
    where $\constint\approx 0.80435$ satisfies
    \[
        \sum_{k\ge1} \frac{\constint^k}{k!k}=1
        ~.
    \]
\end{lemma}

\begin{proof}

    Recall \Cref{eqn:max-prob-foc-alt}:
    \begin{equation}
        \label{eqn:max-prob-foc-alt}
        \int_s^1 \frac{\big(\frac{1-t}{\maxproboptthres(s)} + t \big)^{n-1} - 1}{1-t} \dd t = 1~.
    \end{equation}
    
    Fix any $s$, substitute $x_n=\frac{1}{\maxproboptthres(s)} - 1$, the left-hand side becomes
    \[
        \int_s^1 \frac{\left(1 + x_n(1-t)\right)^{n-1} - 1}{1-t} \dd t~.
    \]
    
    Expanding the left-hand side with the binomial theorem, we have
    \[
        (n-1)(1-s)x_n + \binom{n-1}{2} \frac{(1-s)^2}{2} x_n^2  + \cdots + \binom{n-1}{k}  \frac{(1-s)^k}{k}x_n^k + \cdots +\frac{(1-s)^{n-1}}{n-1} x_n^{n-1}=1~.
    \]
    
    Now, we analyze the asymptotic behavior of $x_n$.
    Obviously, we have $x_n= \bigO(\tfrac{1}{n-1})$; otherwise, the left-hand side either diverges to infinity or converges to $0$, thus cannot equal $1$.
    
    Now, we find the constant such that $(n-1)(1-s)x_n \to c$ for some constant $c$ as $n \to \infty$.
    Note that such a constant $c$ shall satisfy
    \[
        c + \frac{n-2}{n-1} \cdot \frac{c^2}{2!2} + \cdots + \frac{(n-2)(n-3) \cdots (n-k)}{(n-1)^{k-1}} \cdot \frac{c^k}{k!k} + \cdots = 1.
    \]
    Intuitively, as $n \to \infty$, the coefficients $\frac{(n-2)(n-3)\cdots (n-k)}{(n-1)^{k-1}} \to 1$ for any fixed $k$.
    Thus, we expect that the value $c$ satisfies
    \[
        \sum_{k\ge1} \frac{c^k}{k!k} = 1~,
    \]
    which defines the constant $\constint \approx 0.80435$ in the lemma.
    
    We can prove the convergence more rigorously by showing that, for any $ c \in (0, 1)$, the difference between the ideal and the actual sum tends to $0$ as $n \to \infty$.
    
    Consider the difference:
    \[
        \sum_{k=1}^{n} \frac{c^k}{k!k} \left(1-\left(1-\frac{1}{n-1}\right)\left(1-\frac{2}{n-1}\right)\cdots \left(1-\frac{k-1}{n-1}\right)\right)~.
    \]
    Observe that $1 \leq (1-x_1)(1-x_2)\cdots (1-x_m) + x_1 + x_2 + \cdots + x_m$ for any $x_i\in [0,1]$.
    Indeed, the partial derivative shows that the right-hand side is increasing in each $x_i$, thus the minimum is attained at $x_i=0$ for all $i$, which is $1$.
    
    It follows that the difference is bounded by
    \[
        \sum_{k=1}^{n} \frac{c^k}{k!k} \cdot \frac{1+2+\cdots+(k-1)}{n-1} 
        = \sum_{k=1}^{n} \frac{c^k}{k!k} \cdot \frac{k(k-1)}{2(n-1)}
        \leq \frac{1}{2(n-1)} \sum_{k = 1}^{\infty} \frac{1}{(k-1)!}~.
    \]
    Using the fact that $\sum_{k = 1}^{\infty} \frac{1}{(k-1)!} = \ee$, the bound is $\bigO(\tfrac{1}{n})$ and therefore tends to $0$ as $n \to \infty$.
    
    Finally, we conclude that $(n-1)(1-s)x_n \to \constint$ as $n\to \infty$, which implies that $x_n = \frac{\constint}{(n-1)(1-s)} + o(\tfrac{1}{n})$. This completes the proof by $\maxproboptthres(s) = \frac{1}{1+x_n}=\frac{1}{1+\frac{\constint}{(n-1)(1-s)}} +o(\tfrac{1}{n})$.
\end{proof}

\section{Missing Proofs for \MaxProb Hardness}
\label{sec:app-hard-max-prob}

\subsection{Minor-Oblivious Algorithms: Proof of \Cref{prop:minor-oblivious}}
\label{app:proof-minor-oblivious}

The key observation is that, when the current value is a record and equal across sequences, the conditional probability of stopping at time $t$ is invariant across all truncation levels $k \geq \ell$ for which the conditioning event has positive probability. This is formalized in the following lemma.

\begin{lemma}%
\label{lem:stopping-prob-invariance}
    For any $t \in [n]$ and any $\ell \in [K]$, the conditional probabilities
    \begin{equation*}
        \Pr_k \big[ \,\stoptime = t \mid \stoptime \geq t,\ x_{t} = \max \bm{x}_{1:t} = \ell \,\big] 
    \end{equation*}
    are identical for all $k$ with $\ell \le k \le K$.
\end{lemma}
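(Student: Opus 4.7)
The plan is to reduce the conditional probability to a ratio whose dependence on $k$ factors out. Writing
\[
\Pr_k \big[\, \stoptime = t \mid \stoptime \geq t,\ x_{t} = \max \bm{x}_{1:t} = \ell \,\big]
~=~ \frac{\Pr_k[\, \stoptime = t,\ x_t = \max \bm{x}_{1:t} = \ell \,]}{\Pr_k[\, \stoptime \geq t,\ x_t = \max \bm{x}_{1:t} = \ell \,]}\,,
\]
I would expand both numerator and denominator as sums over histories $\bm{x}_{1:t} \in [\ell]^t$ with $x_t = \ell$ and $\max \bm{x}_{1:t} = \ell$. Each summand factors as the algorithm's conditional stopping probability given $\bm{x}_{1:t}$, times $\Pr_k[\bm{x}_{1:t}]$.

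The first factor, $\Pr[\stoptime = t \mid \bm{x}_{1:t}]$ or $\Pr[\stoptime \geq t \mid \bm{x}_{1:t}]$, depends only on the history and the algorithm's internal randomness, not on the underlying prior; this is the distinction I would flag explicitly, since randomness over values does depend on $k$ but randomness over decisions does not. The second factor I would split as
\[
\Pr_k[\bm{x}_{1:t}] ~=~ \Pr_k\big[\bm{x}_{1:t} \mid x_t = \max \bm{x}_{1:t} = \ell\big] \cdot \Pr_k\big[x_t = \max \bm{x}_{1:t} = \ell\big]\,,
\]
so that the marginal $\Pr_k[x_t = \max = \ell]$ appears as a common scalar in numerator and denominator and cancels in the ratio.

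The key step is to invoke \Cref{lem:hardness-posterior} to show the surviving conditional probabilities are $k$-independent for $k \geq \ell$. The lemma directly gives invariance of $\Pr_k[\bm{x}_{1:t} \in S \mid \max \bm{x}_{1:t} = \ell]$, and I would extend to the joint conditioning $\{x_t = \max = \ell\}$ by a short Bayes'-rule argument: applying the lemma once with $S = \{\bm{x}_{1:t}\}$ enlarged to include the constraint $x_t = \ell$ and once with $S = \{x_t = \ell\}$, then dividing. After this, every term in both sums becomes $k$-independent, and so does their ratio.

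The main obstacle I foresee is purely notational: making sure the decomposition cleanly separates the prior-dependent factor $\Pr_k[\bm{x}_{1:t} \mid \cdot]$ from the prior-independent algorithmic factor, and that the cancellation of $\Pr_k[x_t = \max = \ell]$ is valid despite the joint event combining a history-dependent algorithmic event with a value-dependent event. No new technical machinery is needed beyond \Cref{lem:hardness-posterior}; the argument is essentially that, once we condition on the sufficient statistic $\max \bm{x}_{1:t} = \ell$, the algorithm's decision is a measurable function of quantities whose joint law under $F_k$ is independent of $k \geq \ell$.
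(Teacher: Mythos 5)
Your proof is correct and follows the paper's argument: expand the Bayes ratio over histories, factor each summand into the prior-independent algorithmic stopping probability $\Pr[\stoptime = t \mid \bm{x}_{1:t}]$ and the prior-dependent law of $\bm{x}_{1:t}$, then invoke \Cref{lem:hardness-posterior}. The only difference is cosmetic: the paper conditions only on $\{\max \bm{x}_{1:t} = \ell\}$, factoring $\Pr_k[\bm{x}_{1:t}] = \Pr_k[\bm{x}_{1:t} \mid \max \bm{x}_{1:t} = \ell] \cdot \Pr_k[\max \bm{x}_{1:t} = \ell]$, so that \Cref{lem:hardness-posterior} applies directly to singleton $S$ and the secondary Bayes'-rule step you correctly anticipated becomes unnecessary.
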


\begin{proof}
    By Bayes' rule, we can obtain that
    \[
        \Pr_k \big[ \,\stoptime = t \mid \stoptime \geq t \, , \, x_{t} = \max \bm{x}_{1:t} = \ell \,\big] 
        ~=~ 
        \frac{\Pr_k \big[ \,\stoptime = t \, , \, x_{t} = \ell \mid \max \bm{x}_{1:t} = \ell \,\big]}{\Pr_k \big[ \,\stoptime \geq t \, , \, x_{t} = \ell \mid \max \bm{x}_{1:t} = \ell \,\big]}
        ~.
    \]

    Denote by $S \subseteq [K]^t$ the set of all $\bm{x}_{1:t}$ with $x_t = \max \bm{x}_{1:t} = \ell$.
    The numerator can be written as
    \[
        \sum_{\bm{x}_{1:t} \in S} \Pr \big[\, \stoptime = t \mid \bm{x}_{1:t} \,\big] \cdot \Pr_k \big[\, \bm{x}_{1:t} \mid \max \bm{x}_{1:t} = \ell \,\big]
        ~,
    \]
    where the second factor is the conditional probability of realizing $\bm{x}_{1:t}$ given $\max \bm{x}_{1:t} = \ell$.
    This is invariant for all $k$ with $\ell \le k \le K$ by \Cref{lem:hardness-posterior}.
    The same holds for the denominator.
    Hence, the ratio is invariant.
\end{proof}

Given any algorithm $\mathcal{A}$, whose index of accepted value is $\stoptime$, we construct a minor-oblivious algorithm $\mathcal{A}'$ with conditional acceptance probabilities
\[
    r'_t(\ell) ~\coloneqq~ \Pr_K \big[ \,\stoptime = t \mid \stoptime \geq t,\ x_{t} = \max \bm{x}_{1:t} = \ell \,\big] 
    ~.
\]
That is, the probability of accepting a record $x_t = \ell$ equals the probability that $\mathcal{A}$ accepts at time $t$, conditioned on 1) $x_t = \ell$; 2) $x_t$ is the record; and 3) $\mathcal{A}$ rejects the first $t-1$ values.
By \Cref{lem:stopping-prob-invariance}, we may define this conditional probability with respect to prior $F_K$ without loss of generality.

Finally, we show that algorithm $\mathcal{A}'$ accepts the maximum value with the same probability as $\mathcal{A}$.
Let $\stoptime'$ be the index of the accepted value by algorithm $\mathcal{A}'$.
For any prior $F_k$, observe that
\begin{align*}
    \Pr_k \left[x_{\stoptime} = \OPT \right] & 
    ~=~
    \sum_{t \in [n]} \sum_{\ell\in [k]} 
    \, \Pr_k \left[\max \bm{x}_{t+1:n} \leq \ell\right] \cdot \Pr\nolimits_{k}\left[ \stoptime \geq t \,,\, x_{t} = \max \bm{x}_{1:t} = \ell \right] \\
    & \qquad\qquad\qquad 
    \cdot \Pr\left[\stoptime = t \mid \stoptime \geq t \,,\, x_{t} = \max \bm{x}_{1:t} = \ell \right] 
    ~.
\end{align*}

Similarly, we have
\begin{align*}
    \Pr_k \left[x_{\stoptime'} = \OPT \right] & 
    ~=~
    \sum_{t \in [n]} \sum_{\ell\in [k]} 
    \, \Pr_k \left[\max \bm{x}_{t+1:n} \leq \ell\right] \cdot \Pr\nolimits_{k}\left[ \stoptime' \geq t \,,\, x_{t} = \max \bm{x}_{1:t} = \ell \right] \\
    & \qquad\qquad\qquad 
    \cdot \Pr\left[\stoptime' = t \mid \stoptime' \geq t \,,\, x_{t} = \max \bm{x}_{1:t} = \ell \right] 
    ~.
\end{align*}

Comparing the two expressions, the first parts coincide, and the third parts are equal by the construction of $\mathcal{A}'$.
It remains to prove that for any $t \in [n]$ and any $k, \ell \in [K]$:
\begin{equation}
    \label{eq:equal-arrival-prob-T}
    \Pr_k \big[\, \stoptime' \geq t \, , \, x_{t} = \max \bm{x}_{1:t} = \ell \,\big] 
    ~=~ 
    \Pr_k \big[\, \stoptime \geq t \, , \, x_{t} = \max \bm{x}_{1:t} = \ell \,\big]
    ~.
\end{equation}

We next prove this by a joint induction with the following claim:
\begin{equation}
    \label{eq:equal-arrival-prob-F}
    \Pr_k \big[\, \stoptime' \geq t \, , \, x_{t} < \max \bm{x}_{1:t} = \ell \,\big] 
    ~=~ 
    \Pr_k \big[\, \stoptime \geq t \, , \, x_{t} < \max \bm{x}_{1:t} = \ell \,\big]
    ~.
\end{equation}

The base case of $t=1$ holds trivially, as the probabilities are independent of the algorithms' decisions.
Next, suppose that both equalities hold for some $t \ge 1$. 
We will prove them for $t+1$.
Enumerating over the state at step $t$, we have:
\begin{align}
    &
    \Pr_k \big[\, \stoptime \geq t+1 \,,\, x_{t+1} = \max \bm{x}_{1:t+1} = \ell \,\big] \notag \\[2ex]
    & \quad
    ~=~
    \sum_{m \in [\ell]} \Pr_k \big[ \,\stoptime \geq t \, , \, x_{t} = \max \bm{x}_{1:t} = m \,\big] \cdot \Pr_k \big[x_{t+1} = \ell \big] \label{eq:minor-oblivious-induction-1}\\
    & \qquad\qquad
    \cdot\, \big(1 -\Pr[\stoptime = t \mid \stoptime \geq t \, , \, x_{t} = \max \bm{x}_{1:t} = m ]\big)  \label{eq:minor-oblivious-induction-2}\\
    & \quad\qquad
    + \sum_{m \in [\ell]} \Pr\nolimits_{k}\big[\stoptime \geq t \, , \, x_{t} < \max \bm{x}_{1:t} = m\big] \cdot \Pr_k \big[x_{t+1} = \ell \big] \, .
    \label{eq:minor-oblivious-induction-3}
\end{align}

By the induction hypothesis, we can replace $\stoptime$ with $\stoptime'$ in \eqref{eq:minor-oblivious-induction-1} and \eqref{eq:minor-oblivious-induction-3} while preserving the value.
Further, we can also replace $\stoptime$ with $\stoptime'$ in \eqref{eq:minor-oblivious-induction-2} by the construction of algorithm $\mathcal{A}'$.
Hence, \Cref{eq:equal-arrival-prob-T} holds for $t+1$ as well.

The proof of \Cref{eq:equal-arrival-prob-F} for $t+1$ is almost verbatim.

\subsection{Relating Rejection and Acceptance Probabilities: Proof of \Cref{lem:survival-to-stopping}}
\label{app:survival-stopping}

\paragraph{From Rejection Probabilities to Acceptance Probabilities.}  
By definition, we have:
\begin{align*}
    \accept_t(\ell) 
    ~=~& \Pr \big[\, \stoptime = t \mid \stoptime \geq t \, , \, x_t = \max \bm{x}_{1:t}=\ell \,\big] \\
    ~=~& 1 - \Pr\big[\, \stoptime > t \mid \stoptime \geq t \, , \, x_t = \max \bm{x}_{1:t} = \ell \, \big] \, .
\end{align*}

By Bayes' rule, the probability on the right-hand side can be written as
\[
\Pr\big[\, \stoptime > t \mid \stoptime \geq t \, , \, x_t = \max \bm{x}_{1:t} = \ell \, \big] 
~=~
\frac{\Pr_K \big[\, \stoptime > t \, , \, x_t = \max \bm{x}_{1:t} = \ell \,\big]}{\Pr_K\big[\, \stoptime \geq t \, , \, x_t = \max \mathbf{x}_{1:t} = \ell \,\big]}
\, .
\]
Here we consider the prior $F_K$ without loss of generality, because using any other $F_k$ multiplies numerator and denominator by the same $\tilde{F}(k)^{-t}$ factor according to \Cref{def:truncated-distribution-family}.

The numerator further equals
\begin{align*}
	& \Pr_K \big[\, \stoptime > t \,,\, \max \bm{x}_{1:t} = \ell \,\big] ~-~ \Pr_K \big[\, \stoptime > t \,,\, x_t < \max \bm{x}_{1:t} = \ell \,\big] \\
	& \quad
	~=~ \Pr_K \big[\, \max \bm{x}_{1:t} = \ell \,\big] \cdot \rejall_t(\ell) - \Pr_K \big[\, x_t < \max \bm{x}_{1:t-1} = \ell \,\big] \cdot \rejall_{t-1}(\ell)~.
\end{align*}
The denominator equals
\[
    \sum_{m \in [\ell]} \Pr_K \big[\, \stoptime \ge t \, , \, \max \bm{x}_{1:t-1} = m \, , \, x_t = \ell \,\big]
    = 
    \sum_{m \in [\ell]} \Pr_K \big[\, \max \bm{x}_{1:t-1} = m \, , \, x_t = \ell \,\big] \cdot \rejall_{t-1}(m)
    \, .
\]

Finally, putting these together proves \Cref{lem:survival-to-stopping}.

\paragraph{From Acceptance Probabilities to Rejection Probabilities.}  
We remark that the relation between acceptance probabilities and rejection probabilities is one-to-one. Given the acceptance probabilities, we can inductively deduce the rejection probabilities using \Cref{lem:survival-to-stopping}. Restating \Cref{eq:Rt-def} from \Cref{lem:survival-to-stopping}:
\[
    \accept_t(\ell) ~=~ 1 - \frac{\rejall_t(\ell) \cdot \Pr_K \big[ \max \bm{x}_{1:t} = \ell \,\big] - \rejall_{t-1}(\ell) \cdot \Pr_K \big[\, x_t < \max \bm{x}_{1:t-1} = \ell \,\big]}
    {\sum_{m \in [\ell]} \rejall_{t-1}(m) \cdot \Pr_K \big[ \max \bm{x}_{1:t-1} = m, x_t = \ell \,\big]}
    \, .
\]
Given $\rejall_{t-1}(\cdot)$ and $\accept_t(\cdot)$ for any $t \ge 1$, we can derive $\rejall_t(\cdot)$ from the equation above.

\end{document}